\documentclass[a4paper,pdflatex]{llncs}
\pagestyle{plain}

\usepackage[utf8]{inputenc}
\usepackage{amsmath,amssymb}
\usepackage{theorem}
\usepackage{algorithmicx}
\usepackage[ruled,vlined,titlenumbered,linesnumbered]{algorithm2e}
\usepackage{mathrsfs}
\usepackage[table]{xcolor}
\usepackage[colorlinks=true,citecolor=blue,linkcolor=blue,urlcolor=blue,breaklinks=true]{hyperref}
\usepackage{bm}
\usepackage{xspace}
\usepackage{scalerel}
\usepackage{cite}
\usepackage{url}
\usepackage{array}
\usepackage{pgfplots}
\usetikzlibrary{plotmarks,patterns}
\usepackage[nolist]{acronym}
\usepackage{caption}
\usepackage{subcaption}
\usepackage{orcidlink}
\usepackage{tabto}

\let\normalunderbrace=\underbrace
\usepackage{mathtools}
\let\underbrace=\normalunderbrace
\mathtoolsset{showonlyrefs}

\usepackage[margin=3.8cm]{geometry}

\pgfplotsset{
  compat=1.17,
    every axis/.append style={
        scale only axis,
  width=0.55\columnwidth,
  height=0.4\columnwidth,
  label style={inner sep=0, font=\normalsize},
  tick label style={font=\scriptsize},
  legend style={font=\scriptsize},
  mark size=3,
  major grid style={dashed},
  line width=0.8pt,
  axis line style = thin}
}

\tikzstyle{SB}    = [color=black, solid]
\tikzstyle{LRS} = [color=red, dash pattern=on 2pt off 4pt on 6pt off 4pt, mark=x, mark options={solid}]
\tikzstyle{FLRS}   = [color=blue, dashed, mark=diamond, mark options={solid}]
\tikzstyle{HRFLRS} = [color=teal, dashed, mark=o, mark options={solid}]
\tikzstyle{divergence_1} = [color=purple]
\tikzstyle{divergence_2} = [color=teal]
\tikzstyle{divergence_3} = [color=blue]
\tikzstyle{distribution} = [color=black, mark options={scale=.1, draw=black, fill=black}, mark=*]

\newcolumntype{M}[1]{>{\centering\arraybackslash}m{#1}}

\newcommand{\Fqm}{\ensuremath{\mathbb F_{q^m}}}

\newcommand{\Fqmh}{\ensuremath{\mathbb F_{q^{mh}}}}

\newcommand{\Fqd}{\ensuremath{\mathbb F_{q^d}}}
\newcommand{\Fq}{\ensuremath{\mathbb F_{q}}}

\newcommand{\F}{\ensuremath{\mathbb F}}
\newcommand{\ZZ}{\ensuremath{\mathbb{Z}}}
\newcommand{\NN}{\ensuremath{\mathbb{N}}}

\newcommand{\set}[1]{\ensuremath{\mathcal{#1}}}

\newcommand{\Polyring}{\ensuremath{\Fqm[x]}}

\newcommand{\aut}{\ensuremath{\sigma}}
\newcommand{\der}{\ensuremath{\delta}}
\newcommand{\derPar}{\ensuremath{z}}
\newcommand{\frob}{\ensuremath{\theta}}
\newcommand{\frobPar}{\ensuremath{u}}
\newcommand{\SkewPolyring}{\ensuremath{\Fqm[x;\aut,\der]}}
\newcommand{\SkewPolyringZeroDer}{\ensuremath{\Fqm[x;\aut]}}

\newcommand{\MultSkewPolyring}{\ensuremath{\Fqm[x,y_1,\dots,y_\intOrder;\aut,\der]}}

\newcommand{\remev}[2]{{#1}\!\left[#2\right]}
\newcommand{\opev}[3]{\ensuremath{{#1}(#2)_{#3}}}

\newcommand{\op}[2]{\ensuremath{\mathcal{D}_{#1}(#2)}}
\newcommand{\opexp}[3]{\ensuremath{\mathcal{D}_{#1}^{#3}(#2)}}
\newcommand{\conj}[2]{\ensuremath{{#1}^{#2}}}

\newcommand{\algoname}[1]{{\normalfont\textsc{#1}}}

\newcommand{\ext}{\ensuremath{\text{ext}}}
\newcommand{\extInv}{\ensuremath{\text{ext}^{-1}}}

\newcommand{\piInv}{\pi^{-1}}

\makeatletter
\patchcmd{\@algocf@start}
  {-1.5em}
  {0pt}
{}{}
\makeatother

\newcommand{\OCompl}[1]{\ensuremath{\mathcal{O}({#1})}}

\newcommand{\defeq}{:=}

\renewcommand{\bar}{\overline}

\newcommand{\modr}{\; \mathrm{mod}_\mathrm{r} \;}

\DeclareMathOperator{\divides}{|}

\DeclareMathOperator{\Id}{\textrm{Id}}

\DeclareMathOperator{\wt}{wt}
\DeclareMathOperator{\rk}{rk}

\DeclareMathOperator{\diag}{diag}
\DeclareMathOperator{\unif}{unif}

\DeclareMathOperator{\RCEF}{RCEF}

\DeclareMathOperator{\lcm}{lcm}

\newcommand{\mystack}[2]{\ensuremath{\genfrac{}{}{0pt}{}{#1}{#2}}}

\renewcommand{\vec}[1]{\ensuremath{\bm{#1}}}
\newcommand{\mat}[1]{\ensuremath{\bm{#1}}}

\newcommand{\IPrem}[1]{\mathcal{I}_{#1}^{\mathrm{rem}}}

\newcommand{\opVandermonde}[3]{\ensuremath{\mathfrak{m}_{#1}(#2)_{#3}}}
\newcommand{\opMoore}[3]{\ensuremath{\mathfrak{M}_{#1}(#2)_{#3}}}
\newcommand{\genNorm}[2]{\ensuremath{\mathcal{N}_{#1}(#2)}}

\newcommand{\lclm}{\ensuremath{\mathrm{lclm}}}

\renewcommand{\a}{\mathbf a}
\renewcommand{\b}{\mathbf b}
\renewcommand{\c}{\mathbf c}
\renewcommand{\d}{\mathbf d}

\newcommand{\f}{\mathbf f}

\newcommand{\m}{\mathbf m}
\newcommand{\n}{\mathbf n}
\newcommand{\p}{\mathbf p}
\newcommand{\q}{\mathbf q}

\renewcommand{\t}{\mathbf t}

\renewcommand{\v}{\mathbf v}

\newcommand{\x}{\mathbf x}
\newcommand{\y}{\mathbf y}
\newcommand{\z}{\mathbf z}

\newcommand{\A}{\mathbf A}
\newcommand{\B}{\mathbf B}
\newcommand{\C}{\mathbf C}

\newcommand{\E}{\mathbf E}

\renewcommand{\L}{\mathbf L}

\newcommand{\N}{\mathbf N}
\renewcommand{\P}{\mathbf P}

\newcommand{\R}{\mathbf R}
\renewcommand{\S}{\mathbf S}

\newcommand{\X}{\mathbf X}
\newcommand{\Y}{\mathbf Y}

\newcommand{\0}{\mathbf 0}
\newcommand{\vecalpha}{\ensuremath{\boldsymbol{\alpha}}}
\newcommand{\vecbeta}{\ensuremath{\boldsymbol{\beta}}}
\newcommand{\vecgamma}{\ensuremath{\boldsymbol{\gamma}}}
\newcommand{\veczeta}{\ensuremath{\boldsymbol{\zeta}}}

\newcommand{\mycode}[1]{\ensuremath{\mathcal{#1}}}

\newcommand{\skewRS}[1]{\ensuremath{\mathrm{SRS}[#1]}}

\newcommand{\linRS}[1]{\ensuremath{\mathrm{LRS}[#1]}}

\newcommand{\foldedLinRS}[1]{\ensuremath{\mathrm{F}\mathrm{LRS}[#1]}}
\newcommand{\foldedSkewRS}[1]{\ensuremath{\mathrm{FSRS}[#1]}}

\newcommand{\HammingWeight}{\ensuremath{\wt_{H}}}

\newcommand{\SumRankWeight}{\ensuremath{\wt_{\Sigma R}}}
\newcommand{\SkewWeight}{\ensuremath{\wt_{skew}}}

\newcommand{\SumRankDist}{d_{\ensuremath{\Sigma}R}}

\newcommand{\SkewDist}{d_{skew}}

\newcommand{\softoh}[1]{\bnd{\widetilde{\mathcal{O}}}{#1}}

\newcommand{\bnd}[2]{\ensuremath{#1\mathopen{}\left(#2\right)\mathclose{}}}

\newcommand{\OMul}[1]{\mathcal{M}(#1)}

\newcommand{\shot}[2]{\ensuremath{{#1}^{(#2)}}}

\newcommand{\pe}{\ensuremath{\alpha}}

\newcommand{\degConstraint}{\ensuremath{D}}
\newcommand{\intOrder}{\ensuremath{s}}

\newcommand{\foldPar}{\ensuremath{h}}
\newcommand{\foldParShot}[1]{\ensuremath{\foldPar_{#1}}}
\newcommand{\foldParVec}{\ensuremath{\h}}

\newcommand{\foldOp}[1]{\mathcal{F}_{#1}}
\newcommand{\foldOpInv}[1]{\mathcal{F}_{#1}^{-1}}

\newcommand{\intDim}{\ensuremath{s}}
\newcommand{\shots}{\ensuremath{\ell}}

\newcommand{\lenFLRS}{\ensuremath{N}}
\newcommand{\lenFLRSVec}{\ensuremath{\N}}
\newcommand{\lenFLRSshot}[1]{\ensuremath{\lenFLRS_{#1}}}
\newcommand{\len}{\ensuremath{n}}
\newcommand{\lenVec}{\ensuremath{\n}}
\newcommand{\lenShot}[1]{\ensuremath{\len_{#1}}}

\newcommand{\height}{o}
\newcommand{\heightShot}[1]{\height_{#1}}
\newcommand{\heightVec}{\mathbf \height}

\newcommand{\h}{\vec{h}}

\newcommand{\ispecial}{j}
\newcommand{\ispecialTwo}{y}

\newcommand{\FSRSoffset}{\ensuremath{\omega}}

\begin{acronym}
 \acro{BMD}{bounded minimum distance}
 \acro{SRS}{skew Reed--Solomon}
 \acro{ISRS}{interleaved skew Reed--Solomon}
 \acro{FSRS}{folded skew Reed--Solomon}
 \acro{FRS}{folded Reed--Solomon}
 \acro{lclm}{least common left multiple}
 \acro{RS}{Reed--Solomon}
 \acro{LRS}{linearized Reed--Solomon}
 \acro{LLRS}{lifted linearized Reed--Solomon}
 \acro{ILRS}{interleaved linearized Reed--Solomon}
 \acro{LILRS}{lifted interleaved linearized Reed--Solomon}
 \acro{FLRS}{folded linearized Reed--Solomon}
 \acro{MSRD}{maximum sum-rank distance}
 \acro{KL}{Kullback--Leibler}
\end{acronym}

\begin{document}

\title{Interpolation-Based Decoding of Folded Variants \\ of Linearized and Skew Reed--Solomon Codes}

\author{
    Felicitas Hörmann$^{1,2}$\,\orcidlink{0000-0003-2217-9753} \and
    Hannes Bartz$^{1}$\,\orcidlink{0000-0001-7767-1513}\\
    \email{$\{$felicitas.hoermann, hannes.bartz$\}$@dlr.de}
}

\institute{
    \centering
    $^1$Institute of Communications and Navigation\\
    German Aerospace Center (DLR)\\
    Oberpfaffenhofen-Wessling, Germany
    \\[10pt]
    \centering
    $^2$School of Computer Science\\
    University of St. Gallen\\
    St. Gallen, Switzerland
}

\maketitle

\begin{abstract}
    The sum-rank metric is a hybrid between the Hamming metric and the rank metric and suitable for error correction in multishot network coding and distributed storage as well as for the design of quantum-resistant cryptosystems.
    In this work, we consider the construction and decoding of \ac{FLRS} codes, which are shown to be \ac{MSRD} for appropriate parameter choices.
    We derive an efficient interpolation-based decoding algorithm for \ac{FLRS} codes that can be used as a list decoder or as a probabilistic unique decoder.
    The proposed decoding scheme can correct sum-rank errors beyond the unique decoding radius with a computational complexity that is quadratic in the length of the unfolded code.
    We show how the error-correction capability can be optimized for high-rate codes by an alternative choice of interpolation points.
    We derive a heuristic upper bound on the decoding failure probability of the probabilistic unique decoder and verify its tightness by Monte Carlo simulations.
    Further, we study the construction and decoding of \acl{FSRS} codes in the skew metric.
    Up to our knowledge, \ac{FLRS} codes are the first \ac{MSRD} codes with different block sizes that come along with an efficient decoding algorithm.
\end{abstract}

\keywords{folded linearized Reed--Solomon codes, folded skew Reed--Solomon codes, interpolation-based decoding, sum-rank metric, skew metric}

\noindent
\textbf{MSC Classification:} 94B35, 94B05

\acresetall

\section{Introduction} \label{sec:introduction}

The sum-rank metric was first considered in~\cite{lu2005unified} in the context of space-time coding and covers the
Hamming metric as well as the rank metric as special cases.
Alternative decoding metrics as the sum-rank metric are of great interest to the field of code-based
cryptography (see e.g.~\cite{puchinger2020generic}).
Other applications range from error control in multishot network coding as described in~\cite{martinez2019reliable}
and~\cite{nobrega2010multishot} to the construction of locally repairable codes~\cite{martinez2019universal}.

Mart{\'\i}nez-Pe{\~n}as introduced \acf{LRS} codes, which generalize the well-studied families of \ac{RS} and Gabidulin
codes, in~\cite{martinez2018skew}.
\ac{LRS} codes are \acf{MSRD} codes, that is their minimum distance achieves the Singleton-like bound with equality.

While codewords of sum-rank-metric codes are commonly defined as tuples containing matrices of arbitrary sizes, most known constructions
use the same number of rows for every matrix in the tuple.
Some examples of \ac{MSRD} codes with different numbers of rows for the matrices can be found
in~\cite{byrne2021fundamental,camps2022optimal}.
Another construction for \ac{MSRD} codes with this property is given in~\cite{chen2022linear,chen2022new}.
However, no efficient decoding algorithm has been developed for such codes up to our knowledge.
We address this by presenting the family of \ac{FLRS} codes along with an efficient interpolation-based decoding
algorithm that can be used for list and probabilistic unique decoding.

We further apply our results to the skew-metric regime where we fold \ac{SRS} codes.
\ac{SRS} codes were introduced and studied in~\cite{boucher2014linear,liu2015construction} with respect to
Hamming metric and rank metric.
The work~\cite{martinez2018skew} defined the skew metric and analyzed \ac{SRS} codes in this new context.
In fact, it was shown that the sum-rank metric and \ac{LRS} codes are the linearized version of the skew metric and
\ac{SRS} codes, respectively.

The idea of folding constructions in coding evolved in the Hamming-metric context with Parvaresh--Vardy
codes~\cite{parvaresh2005correcting} and \acl{FRS} codes~\cite{Guruswami2008Explicit,brauchle2015}.
Folded Gabidulin codes and their efficient decoding in the rank metric were studied
in~\cite{bartz2017algebraic,BartzSidorenko_FoldedGabidulin2015_DCC,bartz2015list}.

\paragraph{Contributions}
Note that parts of this work were presented at WCC 2022: The Twelfth International Workshop on Coding and Cryptography
(see~\cite{hoermann2022efficient}).

We define the family of \ac{FLRS} codes and derive an interpolation-based decoding framework for these codes.
In contrast to~\cite{hoermann2022efficient}, we allow
different block sizes in the underlying unfolded code as well as the usage of different folding parameters.
This yields codes whose codewords are matrix tuples consisting of matrices having not the same size.
We further lift the restriction to skew polynomials with zero derivation and also deal with nonzero derivations.

As in~\cite{hoermann2022efficient}, we show how the decoding scheme can be used for list and probabilistic unique
decoding and give bounds on the list size and the failure probability, respectively.
We have performed several Monte Carlo simulations that verify the heuristic upper bound on the failure probability
empirically.
Moreover, new simulations show the reasonability of an assumption which is needed to obtain the heuristic bound.

A Justesen-like approach, which yields an improved interpolation-based decoding scheme for high-rate \ac{FLRS} codes,
and the discussion of implications for the skew metric are completely new topics in this work.
More precisely, we introduce \ac{FSRS} codes in the skew metric in a similar fashion as \ac{FLRS} codes and show how the proposed \ac{FLRS} decoding scheme can be applied.

\paragraph{Outline}
We start by giving some preliminaries in~\autoref{sec:preliminaries} before defining \ac{FLRS} codes and studying
their minimum distance in~\autoref{sec:flrs-codes}.

The main part of this paper is~\autoref{sec:decoding} in which we present and investigate an interpolation-based
decoding scheme for \ac{FLRS} codes.
The decoder consists of an interpolation step and a root-finding step which are explained in detail
in~\autoref{subsec:interpolation-step} and~\autoref{subsec:root-finding-step}, respectively.
\autoref{subsec:list-and-probabilistic-decoding} shows how the presented scheme can be used for list and probabilistic
unique decoding.
In particular, we derive an upper bound on the list size in the first case and on the failure probability in the latter.
\autoref{subsec:improved_high_rate} introduces a variant of the decoding scheme that is tailored to high-rate codes by
using a different set of interpolation points.
Since the bound on the failure probability for probabilistic unique decoding
from~\autoref{subsec:list-and-probabilistic-decoding} is heuristic, we empirically verify its validity by simulations in
SageMath in~\autoref{subsec:simulation_results}.

\autoref{sec:implications-fsrs} deals with the implications of our results for the skew-metric setting.
We give some background on the remainder evaluation of skew polynomials and the skew metric
in~\autoref{subsec:preliminaries-on-remainder-evaluation} and introduce \ac{FSRS} codes
in~\autoref{subsec:skew-metric-folded}.
\autoref{subsec:decoding-of-fsrs-codes} shows how the presented decoder for \ac{FLRS} codes in the sum-rank metric can
be applied to \ac{FSRS} codes in the skew metric.

Finally,~\autoref{sec:conclusion} concludes the paper by summarizing our work and giving open problems and directions
for further research.

\section{Preliminaries} \label{sec:preliminaries}

Let $q$ be a prime power and let $\Fq$ be a finite field of order $q$.
For any $m \in \NN^{\ast}$, let $\Fqm \supseteq \Fq$ denote an extension field with $q^m$ elements.
We call $\pe \in \Fqm$ \emph{primitive} in $\Fqm$ if it generates the multiplicative group
$\Fqm^{\ast} \defeq \Fqm \setminus \{0\}$.

An \emph{(integer) composition} of $\len \in \NN^{\ast}$ into $\shots \in \NN^{\ast}$ parts, which is also called an
\emph{$\shots$-composition} for short, is a vector $\lenVec = (\lenShot{1}, \dots, \lenShot{\shots}) \in \NN^{\shots}$ with
$\lenShot{i} > 0$ for all $1 \leq i \leq \shots$ that satisfies $\len = \sum_{i=1}^{\shots} n_i$.
We use the notation $\Fq^{\lenVec} \defeq \Fq^{\lenShot{1}} \times \dots \times \Fq^{\lenShot{\shots}}$ to describe the
space of $\Fq^{\len}$-vectors that are divided into $\shots$ blocks with respect to a given $\shots$-composition
$\lenVec$ of $\len$.
Similarly, we write $\Fq^{\heightVec \times \lenVec} \defeq \Fq^{\heightShot{1} \times \lenShot{1}} \times \dots
\times \Fq^{\heightShot{\shots} \times \lenShot{\shots}}$ for $\shots$-compositions $\heightVec$ of $\height$ and
$\lenVec$ of $\len$.
In the following, we always assume $\heightVec \leq \lenVec$ with respect to the product order on $\NN^{\shots}$, that is
$\heightShot{i} \leq \lenShot{i}$ holds for each $i = 1, \dots, \shots$.
\begin{definition}[Sum-Rank Metric]
    The \emph{sum-rank weight} of a tuple
    $\X = ( \shot{\X}{1}, \dots,\allowbreak \shot{\X}{\shots} ) \in \Fq^{\heightVec \times \lenVec}$ is
    \begin{equation}
        \SumRankWeight(\mat{X}) \defeq \sum_{i=1}^{\shots} \rk_q\left(\mat{X}^{(i)}\right)
    \end{equation}
    and the vector $\t = (t_1, \dots, t_{\shots}) \in \NN^{\shots}$ with $t_i \defeq \rk_q \left( \mat{X}^{(i)} \right)$
    for all $i = 1, \dots, \shots$ is called the \emph{weight decomposition} of $\X$.

    The \emph{sum-rank metric} $\SumRankDist$ is defined as
    \begin{equation}
        \SumRankDist(\X, \Y) \defeq \SumRankWeight(\X - \Y)
    \end{equation}
    for two elements $\X, \Y \in \Fq^{\heightVec \times \lenVec}$.
\end{definition}

A \emph{linear sum-rank-metric code} $\mycode{C}$ is an $\Fq$-linear subspace of the metric space
$(\Fq^{\heightVec \times \lenVec},\allowbreak \SumRankDist)$.
Its \emph{minimum (sum-rank) distance} is
\begin{align}
    \SumRankDist(\mycode{C}) &= \min \{\SumRankDist(\C_1, \C_2): \C_1, \C_2 \in \mycode{C}, \C_1 \neq \C_2 \}
    \\
    &= \min \{ \SumRankWeight(\C) : \C \in \mycode{C}, \C \neq 0 \}.
\end{align}

If $\heightVec = (m, \dots, m)$ for some $m \in \NN^{\ast}$, we sometimes write codewords
as $(m \times \len)$-matrices over $\Fq$ instead of matrix tuples from $\Fq^{\heightVec \times \lenVec}$.
Moreover, a code $\mycode{C}$ in this ambient space has a vector-code representation $\mycode{C}_{vec} \defeq
\{ \extInv_{\vecgamma}(\C): \C \in \mycode{C} \} \subseteq \Fqm^{\len}$ over $\Fqm$.
Here, the map $\extInv_{\vecgamma}$ is the inverse of the \emph{extension map} $\ext_{\vecgamma}$ that extends
a vector $\a \in \Fqm^{\len}$ to a matrix $\A \in \Fq^{m \times \len}$ with respect to a fixed ordered $\Fq$-basis $\vecgamma
= (\gamma_1, \dots, \gamma_m)$ of $\Fqm$.
Namely,
\begin{align}
    \ext_{\vecgamma}: \quad \Fqm^{\len} &\to \Fq^{m \times \len},
    \\
    \a = (a_1, \dots, a_{\len}) &\mapsto \A =
    \begin{pmatrix}
        a_{1 1} & \dots & a_{1 \len} \\
        \vdots & \ddots & \vdots \\
        a_{m 1} & \dots & a_{m \len}
    \end{pmatrix}
    \text{ with } a_i = \sum_{j=1}^{m} a_{j i} \gamma_j \text{ for } 1 \leq i \leq \len.
\end{align}
Note that we omit the index $\vecgamma$ if the particular choice of the basis is irrelevant.

The \emph{Frobenius automorphism} of the field extension $\Fqm / \Fq$ is the map $\frob: \Fqm \to \Fqm$ with $\frob(x) = x^q$ for all
$x \in \Fqm$.
It is $\Fq$-linear, fixes $\Fq$ elementwise, and generates the group of all $\Fq$-linear automorphisms on $\Fqm$ with respect to function composition.
We focus on an arbitrary $\Fq$-linear automorphism $\aut$ on $\Fqm$ in the following.
In particular, $\aut = \theta^{\frobPar}$ holds for a $\frobPar \in \{0, \dots, m-1\}$.
A \emph{$\aut$-derivation} is a map $\der: \Fqm \to \Fqm$ satisfying
\begin{equation}
    \der(a + b) = \der(a) + \der(b)
    \quad \text{and} \quad
    \der(ab) = \der(a)b + \aut(a)\der(b)
    \quad \text{for all } a, b \in \Fqm.
\end{equation}
Since we work over finite fields, any $\aut$-derivation is an \emph{inner derivation} which means that $\der = \derPar (\Id - \aut)$
for a $\derPar \in \Fqm$ and the identity $\Id$ (see~\cite[Prop.~44]{martinez2018skew}).

Two elements $a, b \in \Fqm$ are called \emph{$(\aut, \der)$-conjugate} if there is a $c \in \Fqm^{\ast}$ such that
$\conj{a}{c} \defeq \aut(c) a c^{-1} + \der(c) c^{-1} = b$.
This is an equivalence relation and $\Fqm$ is hence partitioned into \emph{conjugacy classes}
$\set{C}(a) \defeq \left\{ a^c : c \in \Fqm^{\ast} \right\}$ for $a \in \Fqm$ (see~e.g.~\cite{lam1985general,lam1988vandermonde}).
A counting argument shows that there are $q^{\gcd(\frobPar, m)}$ distinct conjugacy classes and all except
$\set{C}(\derPar)$ are called nontrivial.
If $\der = 0$ (i.e., $\derPar = 0$) and $\aut = \theta$, the powers $1, \pe, \dots, \pe^{q-2}$ of a primitive element $\pe \in \Fqm^{\ast}$
are representatives of all $q^{\gcd(1,m)}-1 = q-1$ distinct nontrivial conjugacy classes.

The \emph{skew polynomial ring} $\SkewPolyring$ is defined as the set of polynomials $\sum_i f_i x^i$ with finitely many
nonzero coefficients $f_i \in \Fqm$.
It forms a non-commutative ring with respect to ordinary polynomial addition and multiplication determined by the rule
$x f_i = \aut(f_i) x + \der(f_i)$ for all $f_i \in \Fqm$.
We define the \emph{degree} of a skew polynomial $f(x) = \sum_i f_i x^i$ as $\deg(f) \defeq \max \{i : f_i \neq 0\}$ and
write $\SkewPolyring_{<k} \defeq \{ f \in \SkewPolyring: \deg(f) < k \}$ for the set of skew polynomials
of degree less than $k \geq 0$.

We further introduce the operator $\op{a}{b} \defeq \aut(b) a + \der(b)$ for any $a, b \in \Fqm$ and we write
$\opexp{a}{b}{i} \defeq \op{a}{\opexp{a}{b}{i-1}}$ for its $i$-th power with $i \in \NN^{\ast}$.
Let $\lenVec = (\lenShot{1}, \dots, \lenShot{\shots}) \in \NN^{\shots}$ be an $\shots$-composition of $\len \in \NN^{\ast}$.
For a vector $\x = \left(\x^{(1)}, \dots, \x^{(\shots)}\right) \in \Fqm^{\lenVec}$, a vector
$\a = (a_1, \ldots, a_\shots) \in \Fqm^{\shots}$, and a parameter $d \in \NN^{\ast}$ the \emph{generalized Moore matrix}
$\opMoore{d}{\x}{\a}$ is defined as
\begin{align}\label{eq:def_gen_moore_mat}
    \opMoore{d}{\x}{\a} &\defeq
    \left( \opVandermonde{d}{\x^{(1)}}{a_1}, \dots, \opVandermonde{d}{\x^{(\shots)}}{a_\shots} \right)
    \in \Fqm^{\d \times \lenVec},
    \\
    \text{where }
    \opVandermonde{d}{\x^{(i)}}{a_i} &\defeq
    \begin{pmatrix}
        x^{(i)}_1 & \cdots & x^{(i)}_{\lenShot{i}}
        \\
        \op{a_i}{x^{(i)}_1} & \cdots & \op{a_i}{x^{(i)}_{\lenShot{i}}}
        \\[-4pt]
        \vdots & \ddots & \vdots
        \\
        \opexp{a_i}{x^{(i)}_1}{d-1} & \cdots & \opexp{a_i}{x^{(i)}_{\lenShot{i}}}{d-1}
    \end{pmatrix}
    \quad \text{for } 1 \leq i \leq \shots
    \notag
\end{align}
and $\d \defeq (d, \dots, d) \in \NN^{\shots}$.
If $\a$ contains representatives of pairwise distinct nontrivial conjugacy classes of $\Fqm$ and $\rk_{q}\left(\x^{(i)}\right) =
\lenShot{i}$ for all $1 \leq i \leq \shots$, we have by~\cite[Thm.~2]{martinez2018skew}
and~\cite[Thm~4.5]{lam1988vandermonde} that $\rk_{q^m}\left(\opMoore{d}{\x}{\a}\right) = \min(d, \len)$.

The \emph{generalized operator evaluation} of a skew polynomial $f \in \SkewPolyring$ at $b \in \Fqm$ with respect to
the evaluation parameter $a \in \Fqm$ is defined as $\opev{f}{b}{a} = \sum_{i} f_i \opexp{a}{b}{i}$ and can be written in vector-matrix form
using the generalized Moore matrix from~\eqref{eq:def_gen_moore_mat}.
For $\a = (a_1, \dots, a_{\shots}) \in \Fqm^\shots$ and $\x = (\shot{\x}{1}, \dots,
\shot{\x}{\shots}) \in \Fqm^\lenVec$ we use the shorthand notation
\begin{equation}
    \opev{f}{\x}{\a} \defeq (\opev{f}{\shot{\x}{1}}{a_1}, \dots, \opev{f}{\shot{\x}{\shots}}{a_\shots}) \in \Fqm^\lenVec.
\end{equation}
Let $a_1, \dots, a_{\shots}$ be representatives of distinct nontrivial conjugacy classes of $\Fqm$ and consider
$\lenShot{i}$ $\Fq$-linearly independent elements $\zeta_1^{(i)}, \dots, \zeta_{\lenShot{i}}^{(i)} \in \Fqm$ for each
$i = 1, \dots, \shots$.
Then any nonzero $f \in \SkewPolyring$ satisfying $\opev{f}{\zeta_j^{(i)}}{a_i} = 0$ for all $1 \leq j \leq \lenShot{i}$
and all $1 \leq i \leq \shots$ has degree at least $\sum_{i=1}^{\shots} \lenShot{i}$ (see
e.g.~\cite{caruso2019residues}).
\begin{definition}[\Acl{LRS} Codes~{\cite[Def.~31]{martinez2018skew}}]
    \label{def:lrs-codes}
	Let $\a = (a_1, \dots,\allowbreak a_\shots) \allowbreak \in \Fqm^{\shots}$ contain representatives of pairwise distinct
	nontrivial conjugacy classes of $\Fqm$ and consider an $\shots$-composition
    $\n \defeq (n_1, \dots, n_\shots) \in \NN^{\shots}$ of $n \in \NN$.
    Let the vectors $\vecbeta^{(i)} = (\beta_1^{(i)}, \dots, \beta_{n_i}^{(i)}) \in \Fqm^{n_i}$ contain $\Fq$-linearly
    independent elements for all $i = 1, \dots, \shots$ and define
    $\vecbeta \defeq (\vecbeta^{(1)}, \dots, \vecbeta^{(\shots)}) \in \Fqm^{\lenVec}$.
    A \emph{\acl{LRS} code} of length $n$ and dimension
    $k \leq n$ is defined as
    \begin{equation}
        \linRS{\vecbeta,\a;\n,k} \defeq \left\{
        \opev{f}{\vecbeta}{\a}
        : f \in \SkewPolyring_{<k} \right\} \subseteq \Fqm^{\lenVec}.
    \end{equation}
\end{definition}

Observe that the parameter restrictions in~\autoref{def:lrs-codes} also imply restrictions on the length that \ac{LRS} codes can achieve.
Since the evaluation parameters $a_1, \dots, a_{\shots}$ have to belong to distinct nontrivial conjugacy classes, the number of blocks $\shots$
is upper bounded by the number of these classes.
As we know from~\autoref{sec:preliminaries}, $\Fqm$ has $q^{\gcd(u,m)}-1$ distinct nontrivial conjugacy classes, where
$u \in \{0, \dots, m-1\}$ is
defined by the equality $\aut = \theta^u$ for the Frobenius automorphism $\theta$ of $\Fqm/\Fq$.
Thus, $\shots \leq q^{\gcd(u,m)}-1$ has to apply.
At the same time, the code locators $\shot{\vecbeta}{i}$ of the $i$-th block have to contain $\Fq$-linearly
independent elements for all $i = 1, \dots, \shots$ which implies $\lenShot{i} \leq m$.
This means that the length $n$ of an \ac{LRS} code is always bounded by $n \leq (q^{\gcd(u,m)}-1) \cdot m$.

The next lemma is taken from~\cite[Lemma~III.12]{byrne2021fundamental} and lays the foundation for a Singleton-like
bound for sum-rank-metric codes with different block sizes.

\begin{lemma} \label{lem:maximize_elementwise_product}
    Consider an $\shots$-composition $\n = (\lenShot{1}, \dots, \lenShot{\shots})$ of $\len \in \NN$ and a vector
    $\heightVec = (\heightShot{1}, \dots, \heightShot{\shots}) \in \NN^{\shots}$ with $\heightShot{1} \geq \dots \geq \heightShot{\shots} > 0$ and
    $\lenVec \leq \heightVec$.
    Define the set
    \begin{equation}
        \set{U}_z \defeq \left\{ \z = (z_1, \dots, z_{\shots}) \in \NN^{\shots} : \z \leq \lenVec \text{ and }
        \sum_{i=1}^{\shots} z_i = z \right\}
    \end{equation}
    for each $z \in \{0, \dots, \len\}$.
    If we denote by $\ispecial \in \{1, \dots, \shots\}$ and $\lambda \in \{0, \dots, \lenShot{\ispecial} - 1\}$ the unique
    integers that satisfy $\sum_{i=1}^{\ispecial - 1} \lenShot{i} + \lambda = z$,
    then it holds
    \begin{equation}
        \max \left\{ \sum_{i=1}^{\shots} \heightShot{i} z_i : (z_1, \dots, z_{\shots}) \in \set{U}_z \right\}
        = \sum_{i=1}^{\ispecial - 1} \heightShot{i} \lenShot{i} + \heightShot{\ispecial} \lambda.
    \end{equation}
\end{lemma}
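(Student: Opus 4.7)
The plan is to verify the bound by an explicit greedy construction together with an exchange argument. Since $\heightShot{1}\geq\dots\geq\heightShot{\shots}$, intuitively the sum $\sum_i \heightShot{i} z_i$ is maximized by packing as much of the total budget $z$ as possible into the coordinates with the largest weights, i.e. the smallest indices.

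First I would define the greedy candidate
\[
\z_{\mathrm{gr}} \defeq (\lenShot{1},\dots,\lenShot{\ispecial-1},\lambda,0,\dots,0) \in \NN^{\shots}.
\]
A direct check shows $\z_{\mathrm{gr}} \leq \lenVec$ (using $\lambda < \lenShot{\ispecial}$), and that its coordinates sum to $\sum_{i=1}^{\ispecial-1}\lenShot{i}+\lambda = z$, so $\z_{\mathrm{gr}} \in \set{U}_z$. Evaluating the objective on $\z_{\mathrm{gr}}$ yields exactly the claimed value $\sum_{i=1}^{\ispecial-1}\heightShot{i}\lenShot{i}+\heightShot{\ispecial}\lambda$, so this side of the inequality is clear and the only real work is showing that no other $\z \in \set{U}_z$ attains a strictly larger value.

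Next I would set up the exchange step. Pick any $\z = (z_1,\dots,z_{\shots}) \in \set{U}_z$ with $\z \neq \z_{\mathrm{gr}}$. I claim there must exist indices $i < i'$ with $z_i < \lenShot{i}$ and $z_{i'} > 0$: otherwise every coordinate $z_i$ with $i < i'$ (for some largest nonzero $i'$) is saturated at $\lenShot{i}$, which together with $\sum z_i = z$ forces $\z = \z_{\mathrm{gr}}$. Now define $\z'$ by setting $z'_i = z_i + 1$, $z'_{i'} = z_{i'} - 1$, and $z'_j = z_j$ otherwise. Then $\z' \in \set{U}_z$ and
\[
\sum_{j=1}^{\shots}\heightShot{j} z'_j - \sum_{j=1}^{\shots}\heightShot{j} z_j = \heightShot{i} - \heightShot{i'} \geq 0,
\]
since $i<i'$ and the sequence $\heightShot{1},\dots,\heightShot{\shots}$ is non-increasing.

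Finally, I would argue termination. Associate to each $\z \in \set{U}_z$ the nonnegative integer $\Phi(\z)\defeq \sum_{j=1}^{\shots}(j-1)z_j$; the exchange step strictly decreases $\Phi$ by $i'-i \geq 1$, and $\Phi$ is bounded below by $\Phi(\z_{\mathrm{gr}})$. Iterating the exchange therefore produces, after finitely many steps, the greedy vector $\z_{\mathrm{gr}}$ without ever decreasing the objective, which shows $\sum_j \heightShot{j} z_j \leq \sum_j \heightShot{j}(\z_{\mathrm{gr}})_j$ for every $\z\in\set{U}_z$ and completes the proof. The only subtle point, and the one I would be careful with, is making sure the uniqueness of $(\ispecial,\lambda)$ with $\lambda<\lenShot{\ispecial}$ is used to guarantee that $\z_{\mathrm{gr}}\in\set{U}_z$ and that the exchange argument's termination condition really corresponds to $\z_{\mathrm{gr}}$; everything else is routine bookkeeping.
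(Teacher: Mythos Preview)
Your proof is correct. The greedy candidate plus exchange argument is the standard way to handle such weighted packing problems, and your execution is sound. One small phrasing issue: you write that $\Phi$ is ``bounded below by $\Phi(\z_{\mathrm{gr}})$,'' but that is actually a consequence of the argument rather than an input to it. For termination you only need $\Phi\geq 0$, which is immediate; then strict decrease forces the process to stop, and your case analysis shows the stopping point must be $\z_{\mathrm{gr}}$. This is harmless, but worth tightening when you write it out.

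As for comparison: the paper does not actually prove this lemma. It is quoted verbatim from \cite[Lemma~III.12]{byrne2021fundamental}, and the paper only supplies the informal column-marking intuition (fill the tallest blocks first) in the paragraph following the statement. Your argument formalizes exactly that intuition, so there is nothing to contrast methodologically.
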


We can think about this result in the context of a matrix tuple from $\Fq^{\m \times \n}$ where we are allowed to mark
$z$ columns.
Our goal is then to maximize the number of marked entries which is given as $\sum_{i=1}^{\shots} \heightShot{i} z_i$.
Since the matrices are sorted descendingly with respect to their number of rows, the logical strategy is to mark the
first $z$ columns.
The index $\ispecial$ then corresponds to the first block for which we cannot mark every column anymore.

\begin{theorem}[Singleton-like Bound~{\cite[Thm.~III.2]{byrne2021fundamental}}] \label{thm:singleton_bound}
    Let $\heightVec = (\heightShot{1}, \dots, \heightShot{\shots})$ and $\lenVec = (\lenShot{1}, \dots,
    \lenShot{\shots})$ be integer vectors with $\heightShot{1} \geq \dots \geq \heightShot{\shots} > 0$ and
    $0 < \lenShot{i} \leq \heightShot{i}$ for all $i \in \{1, \dots, \shots\}$.
    Consider a sum-rank-metric code $\mycode{C} \subseteq \Fq^{\heightVec \times \lenVec}$ with
    $\vert \mycode{C} \vert \geq 2$ and $\SumRankDist(\mycode{C}) = d$.
    Then,
    \begin{equation} \label{eq:singleton_bound}
        \vert \mycode{C} \vert
        \leq q^{\sum_{i = \ispecial}^{\shots} \heightShot{i} \lenShot{i} - \heightShot{\ispecial} \lambda},
    \end{equation}
    where $\ispecial \in \{1, \dots, \shots\}$ and $0 \leq \lambda < \lenShot{\ispecial}$ are the unique integers
    such that $d - 1 = \sum_{i=1}^{\ispecial - 1} \lenShot{i} + \lambda$ holds.
\end{theorem}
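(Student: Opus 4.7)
The plan is a sum-rank analogue of the classical puncturing proof of the Singleton bound, combined with \autoref{lem:maximize_elementwise_product} to choose the puncturing pattern optimally. For any vector $\z \in \set{U}_{d-1}$, fix in each block $i$ an arbitrary subset of $z_i$ columns and let $\pi_{\z}: \Fq^{\heightVec \times \lenVec} \to \Fq^{\heightVec \times (\lenVec - \z)}$ be the $\Fq$-linear projection that deletes these columns blockwise. Such a $\z$ exists whenever $d-1 \leq \len$, which we may assume, since otherwise the bound is vacuous.

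First I would show that $\pi_{\z}$ restricted to $\mycode{C}$ is injective. Deleting a single column from a matrix drops its rank by at most one, so deleting $z_i$ columns from the $i$-th block reduces its rank by at most $z_i$; summing over the blocks gives
\[
    \SumRankWeight\!\left(\pi_{\z}(\C)\right) \geq \SumRankWeight(\C) - \sum_{i=1}^{\shots} z_i = \SumRankWeight(\C) - (d-1)
\]
for every $\C \in \mycode{C}$. Since every nonzero codeword of $\mycode{C}$ has sum-rank weight at least $d$, its image under $\pi_{\z}$ is still nonzero; combined with the $\Fq$-linearity of $\mycode{C}$ and $\pi_{\z}$, this forces $\pi_{\z}\vert_{\mycode{C}}$ to be injective, and hence
\[
    |\mycode{C}| \leq q^{\sum_{i=1}^{\shots} \heightShot{i} (\lenShot{i} - z_i)} = q^{\sum_{i=1}^{\shots} \heightShot{i} \lenShot{i} \;-\; \sum_{i=1}^{\shots} \heightShot{i} z_i}
\]
for every admissible $\z$.

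To sharpen this, I would maximize $\sum_{i=1}^{\shots} \heightShot{i} z_i$ over the feasible set $\set{U}_{d-1}$. Since $\heightShot{1} \geq \dots \geq \heightShot{\shots} > 0$, \autoref{lem:maximize_elementwise_product} applied with $z = d-1$ identifies this maximum as $\sum_{i=1}^{\ispecial-1} \heightShot{i} \lenShot{i} + \heightShot{\ispecial} \lambda$, with $\ispecial$ and $\lambda$ exactly the integers from the theorem statement. Plugging this back in and simplifying $\sum_{i=1}^{\shots} \heightShot{i} \lenShot{i} - \sum_{i=1}^{\ispecial-1} \heightShot{i} \lenShot{i} - \heightShot{\ispecial} \lambda = \sum_{i=\ispecial}^{\shots} \heightShot{i} \lenShot{i} - \heightShot{\ispecial} \lambda$ yields the claimed bound. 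The only real obstacle is the sum-rank version of the rank-drop inequality in the first step; everything thereafter is bookkeeping and a direct invocation of the preceding lemma.
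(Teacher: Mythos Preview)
The paper does not prove this theorem itself; it is quoted from \cite[Thm.~III.2]{byrne2021fundamental} and merely used, so there is no ``paper's own proof'' to compare against here.

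Your puncturing argument is correct and is essentially the standard proof of the sum-rank Singleton bound. One small point: you invoke the $\Fq$-linearity of $\mycode{C}$ to get injectivity of $\pi_{\z}\vert_{\mycode{C}}$, but the theorem as stated (and the original in \cite{byrne2021fundamental}) does not assume $\mycode{C}$ is linear. The fix is immediate and uses only the linearity of $\pi_{\z}$: for distinct $\C_1,\C_2\in\mycode{C}$, apply your rank-drop inequality to $\C_1-\C_2$ (which need not lie in $\mycode{C}$) to obtain
\[
    \SumRankWeight\!\left(\pi_{\z}(\C_1)-\pi_{\z}(\C_2)\right)=\SumRankWeight\!\left(\pi_{\z}(\C_1-\C_2)\right)\geq \SumRankDist(\C_1,\C_2)-(d-1)\geq 1,
\]
hence $\pi_{\z}(\C_1)\neq\pi_{\z}(\C_2)$. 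Also, since $|\mycode{C}|\geq 2$ already forces $d\leq \len=\sum_i \lenShot{i}$, the ``vacuous'' case $d-1>\len$ you hedge against cannot in fact occur, though this does no harm to the argument.
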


Note that~\autoref{thm:singleton_bound} generalizes the statements~\cite[Prop.~34]{martinez2018skew} and~\cite[Cor.~2]{martinez2019universal}
that were derived for codes with $\heightShot{1} = \dots = \heightShot{\shots}$.

\section{Folded Linearized Reed--Solomon Codes} \label{sec:flrs-codes}

Code classes obtained by a folding construction have been considered starting from \ac{RS} and Gabidulin codes
in~\cite{Guruswami2008Explicit} and~\cite{BartzSidorenko_FoldedGabidulin2015_DCC,bartz2017algebraic}, respectively.
Let us describe the folding process for a codeword $\c$ of length $n$ and a folding parameter $h$ that divides $n$.
We obtain the folded codeword by subdividing $\c$ into $\frac{n}{h}$ pieces of length $h$ and using them as columns of a
matrix of size $h \times \frac{n}{h}$.
The folded code is simply the collection of all folded codewords.

The folding operation can be expressed by means of the \emph{folding operator}
\begin{align}
    \foldOp{\foldPar}: \qquad \qquad \qquad \Fqm^{\len} &\to \Fqm^{\foldPar \times \lenFLRS} \\
    (x_1, \dots, x_{\len}) &\mapsto
    \begin{pmatrix}
        x_1 & x_{\foldPar + 1} & \dots & x_{\len - \foldPar + 1} \\
        x_2 & x_{\foldPar + 2} & \dots & x_{\len - \foldPar + 2} \\
        \vdots & \vdots & \ddots & \vdots \\
        x_{\foldPar} & x_{2 \foldPar} & \dots & x_{\len}
    \end{pmatrix}
\end{align}
where $\len, \lenFLRS \in \NN^{\ast}$ denote the length of the unfolded and folded vector, respectively, and where the
folding parameter $\foldPar \in \NN^{\ast}$ divides $\len$ with $\lenFLRS = \frac{\len}{\foldPar}$.
Its inverse allows to \emph{unfold} a matrix and is denoted by $\foldOpInv{\foldPar}$.

This paper focuses on folding \ac{LRS} codes which are a generalization of both \ac{RS} and Gabidulin codes.
Since \ac{LRS} codes are naturally equipped with a block structure, we apply the described folding mechanism blockwise
to obtain \ac{FLRS} codes.
Observe that since the length of the blocks may vary, we may choose a different folding parameter
for each block.
This produces sum-rank-metric codes whose codeword tuples consist of matrices with different numbers of rows and columns.
A visual representation of the folding construction for a particular block of an \ac{LRS} codeword is given
in~\autoref{fig:folding}.
A formal description is the following generalization of the above discussed folding operator:
\begin{align}
    \foldOp{\foldParVec}: \qquad \qquad \qquad \Fqm^{\lenVec} &\to \Fqm^{\foldParVec \times \lenFLRSVec} \\
    \left( \shot{\x}{1}, \dots, \shot{\x}{\shots} \right) &\mapsto
    \left( \foldOp{\foldParShot{1}}(\shot{\x}{1}), \dots, \foldOp{\foldParShot{\shots}}(\shot{\x}{\shots}) \right).
\end{align}
Here, vectors of length $\len$ are divided into $\shots$ blocks according to the $\shots$-composition $\lenVec$ and the
vector $\foldParVec = (\foldParShot{1}, \dots, \foldParShot{\shots})$ contains the different folding parameters for the
blocks.
The corresponding inverse map, i.e. the blockwise \emph{unfolding operation}, is denoted by $\foldOpInv{\foldParVec}$.

\begin{definition}[Folded Linearized Reed--Solomon Codes]
    \label{def:flrs-codes}
    Consider an \ac{LRS} code $\mycode{C} \defeq \linRS{\vecbeta,\a;\n,k}$ with $\shot{\vecbeta}{i} \defeq
    (1, \pe, \dots, \pe^{\lenShot{i}-1}) \in \Fqm^{\lenShot{i}}$ for a primitive element
    $\pe$ of $\Fqm$ and all $i = 1, \dots, \shots$.
    Choose a vector $\foldParVec = (\foldParShot{1}, \dots, \foldParShot{\shots}) \in \NN^{\shots}$ of folding
    parameters satisfying $\foldParShot{i} \divides \lenShot{i}$ and $\lenFLRSshot{i} \defeq \frac{\lenShot{i}}{\foldParShot{i}} \leq \foldParShot{i}$ for all $1 \leq i \leq \shots$ and write
    $\N \defeq (\lenFLRSshot{1}, \dots, \lenFLRSshot{\shots})$.
    The $\foldParVec$-folded variant of $\mycode{C}$ is the \emph{$\foldParVec$-folded \acl{LRS} code}
    $\foldedLinRS{\pe, \a, \foldParVec; \lenFLRSVec, k}$ of length $\lenFLRS \defeq \sum_{i=1}^{\shots} \lenFLRSshot{i}$
    and dimension $k$ defined as
    \begin{equation}
        \left\{
        \foldOp{\foldParVec}( \opev{f}{\vecbeta}{\a} )
        = \left(
        \foldOp{\foldParShot{1}}( \opev{f}{\vecbeta^{(1)}}{a_1} ), \dots,
        \foldOp{\foldParShot{\shots}}( \opev{f}{\vecbeta^{(\shots)}}{a_\shots} )
        \right) : f \in \SkewPolyring_{<k} \right\}.
    \end{equation}
\end{definition}

The ambient space of this code is $\Fqm^{\foldParVec \times \lenFLRSVec}$
and we can interpret the folded code as vector code of length $\lenFLRS$ over the field $\Fqd$ with extension degree
$d \defeq m \cdot \lcm (\foldParShot{1}, \dots, \foldParShot{\shots})$ over $\Fq$.
However, linearity is only guaranteed with respect to the subfield $\Fqm$ and due to the $\Fqm$-linearity of the unfolded
\ac{LRS} code.

To make the above definition more explicit, note that there is a message polynomial $f \in \SkewPolyring_{<k}$ for
every codeword $\C \in \foldedLinRS{\pe, \a, \foldParVec; \lenFLRSVec, k} \subseteq
\Fqm^{\foldParVec \times \lenFLRSVec}$ with
\begin{equation}
    \C = \left( \C^{(1)}(f), \dots, \C^{(\shots)}(f) \right)
\end{equation}
\begin{equation}\label{eq:defFLRScodeblock}
    \text{and }
    \C^{(i)}(f) \defeq
    \begin{pmatrix}
        \opev{f}{1}{a_i} & \opev{f}{\pe^{\foldParShot{i}}}{a_i} & \cdots
        & \opev{f}{\pe^{\lenShot{i}-\foldParShot{i}}}{a_i}
        \\
        \opev{f}{\pe}{a_i} & \opev{f}{\pe^{\foldParShot{i}+1}}{a_i} & \cdots
        & \opev{f}{\pe^{\lenShot{i}-\foldParShot{i}+1}}{a_i}
        \\
        \vdots & \vdots & \ddots & \vdots
        \\
        \opev{f}{\pe^{\foldParShot{i}-1}}{a_i} & \opev{f}{\pe^{2\foldParShot{i}-1}}{a_i} & \cdots
        & \opev{f}{\pe^{\lenShot{i}-1}}{a_i}
    \end{pmatrix}
    \in \Fqm^{\foldParShot{i} \times \lenFLRSshot{i}}
\end{equation}
for all $i \in \{1, \ldots, \shots\}$.

We can further draw conclusions about the maximum length of \ac{FLRS} codes, similar to the \ac{LRS} case.
Let us therefore assume that the parameters of the unfolded code are maximal.
In other words, choose an \ac{LRS} code $\mycode{C}$ in~\autoref{def:flrs-codes} with $\shots = q^{\gcd(u,m)} - 1$
same-sized blocks of length $m$ and resulting overall code length $\len = (q^{\gcd(u,m)} - 1) \cdot m$ (see~\autoref{sec:preliminaries} for the definition of $u$ and the derivation of this statement).
Since we want to maximize the length of the folded code, we choose $\foldParShot{i}$ for each $i = 1, \dots, \shots$ as small as possible such that
$\foldParShot{i} \divides \lenShot{i}$ and $\lenFLRSshot{i} \defeq \frac{\lenShot{i}}{\foldParShot{i}} \leq \foldParShot{i}$ hold.
As all blocks have the same length, we select the same folding parameter $h$ for each block and it has to satisfy
$h \divides m$ and $m \leq h^2$.
We cannot get any better than $h = \sqrt{m}$ and thus obtain the upper bound $\lenFLRS \leq (q^{\gcd(u,m)} - 1) \cdot \sqrt{m}$
on the total length $\lenFLRS$ of \ac{FLRS} codes.

\begin{remark}
    Note that we only consider a subclass of \ac{LRS} codes for folding.
	Namely, we choose the code locators as powers of a primitive element $\pe \in \Fqm^{\ast}$.
	This turns out to be crucial for the interpolation-based decoder that we present in~\autoref{sec:decoding}.
\end{remark}

\begin{figure}
    \centering
    \begin{subfigure}[b]{.48\textwidth}
        \centering
        \resizebox{\textwidth}{!}{%
\begin{tikzpicture}
	\foreach \x [count=\i] in {0,...,11} {
		\node (node\x) at (.5*\x, 0) {$c_{\i}$};
	}

	\foreach \b in {1,4,7,10} {
		\node (block\b) at (.5*\b, 0) [draw, minimum width=1.5cm, minimum height=.5cm] {};
	}
\end{tikzpicture}
        }
        \caption{Codeword block $\shot{\c}{i} = (c_1, \dots, c_{12})$ cut into blocks of length $h_i=3$.}
        \label{fig:folding-2}
    \end{subfigure}
    \vspace*{.3cm}
    \hfill
    \begin{subfigure}[b]{.48\textwidth}
        \centering
        \resizebox{\textwidth}{!}{%
\begin{tikzpicture}
	\foreach \y [count=\i] in {0,1,2} {
		\node (node\y) at (.5, -\y/2+.5) {$c_{\i}$};
	}
	\foreach \y [evaluate=\y as \i using int(\y+1)] in {3,4,5} {
		\node (node\y) at (2.5, -\y/2+2) {$c_{\i}$};
	}
	\foreach \y [evaluate=\y as \i using int(\y+1)] in {6,7,8} {
		\node (node\y) at (4.5, -\y/2+3.5) {$c_{\i}$};
	}
	\foreach \y [evaluate=\y as \i using int(\y+1)] in {9,10,11} {
		\node (node\y) at (6.5, -\y/2+5) {$c_{\i}$};
	}

	\foreach \b in {.5,2.5,4.5,6.5} {
		\node (block\b) at (\b, 0) [draw, minimum width=.5cm, minimum height=1.5cm] {};
	}

	\draw [blue] (.5, 1) -- (.5, -1) -- (2.5, 1) -- (2.5, -1) -- (4.5, 1) -- (4.5, -1) -- (6.5, 1) -- (6.5, -1);
\end{tikzpicture}
        }
        \caption{$3$-folded version of $\shot{\c}{i}$. The blue line illustrates the terminology ``folding''.}
        \label{fig:folding-5}
    \end{subfigure}
    \caption{
        Illustration of the folding construction for a block $\c^{(i)} = (c_1, \ldots, c_{12})$ of an \ac{LRS}
        codeword $\c = (\shot{\c}{1}, \dots, \shot{\c}{\shots})$ using folding parameter $h_i=3$.
    }
    \label{fig:folding}
\end{figure}
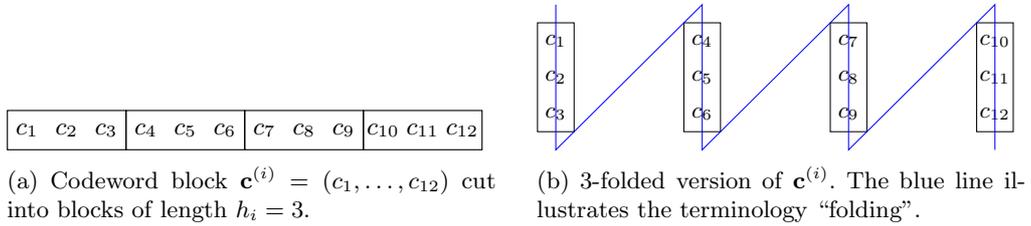

\begin{theorem}[Minimum Distance]\label{thm:minimum_distance_flrs}
    Let $\mycode{C} \defeq \foldedLinRS{\pe, \a, \foldParVec; \lenFLRSVec, k}$ be an \ac{FLRS} code and assume without
    loss of generality that $\foldParShot{1} \geq \dots \geq \foldParShot{\shots}$ applies.
    The minimum sum-rank distance of $\mycode{C}$ is
    \begin{equation}
        \SumRankDist(\mycode{C}) = \sum_{i=1}^{\ispecial} \lenFLRSshot{i}
        - \left\lceil \frac{k - \sum_{i=\ispecial + 1}^{\shots} \foldParShot{i} \lenFLRSshot{i} - 1}
        {\foldParShot{\ispecial}} \right\rceil + 1,
    \end{equation}
    where $\ispecial \in \{1, \dots, \shots\}$ is the unique choice that satisfies
    \begin{equation}
        0 \leq \SumRankDist(\mycode{C}) - \sum_{i=1}^{\ispecial - 1} \lenFLRSshot{i} - 1 <
        \lenFLRSshot{\ispecial}.
    \end{equation}
    In particular, $\mycode{C}$ achieves the Singleton-like bound~\eqref{eq:singleton_bound} with equality if and only if
    $\foldParShot{\ispecial}$ divides both $k$ and $\foldParShot{i} \lenFLRSshot{i}$ for all
    $i = \ispecial + 1, \dots, \shots$.
\end{theorem}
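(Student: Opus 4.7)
The plan is to bracket $\SumRankDist(\mycode{C})$ between a lower bound derived from the MSRD property of the underlying $\linRS$ code and an upper bound produced by an explicit low-weight codeword, and then to read off the MSRD characterization by comparison with the Singleton-like bound of \autoref{thm:singleton_bound}.

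The key technical ingredient is a rank-folding inequality: for each block $i$, if the folded block $\foldOp{\foldParShot{i}}(\shot{\c}{i})$, viewed after extension as a matrix in $\Fq^{m\foldParShot{i} \times \lenFLRSshot{i}}$, has $\Fq$-rank $\tilde t_i$, then the unfolded block $\shot{\c}{i}$ has $\Fq$-rank $t_i \leq \foldParShot{i}\,\tilde t_i$. The argument is geometric: the column space of the extended folded block is an $\tilde t_i$-dimensional subspace $V \subseteq \Fq^{m\foldParShot{i}}$, and projecting $V$ onto each of the $\foldParShot{i}$ horizontal slabs of $m$ coordinates yields subspaces $V_1, \dots, V_{\foldParShot{i}} \subseteq \Fq^m$ of dimension at most $\tilde t_i$ each; by definition of $\foldOp{\foldParShot{i}}$, every extended unfolded column lies in some $V_j$, hence the $\Fq$-span of the unfolded columns has dimension at most $\foldParShot{i}\,\tilde t_i$.

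For the lower bound, any nonzero codeword of $\mycode{C}$ comes from $f \in \SkewPolyring_{<k}$ whose unfolded evaluation has sum-rank weight at least $\len - k + 1$ by the MSRD property of $\linRS{\vecbeta,\a;\n,k}$. The rank-folding inequality then forces $\sum_i \foldParShot{i} \tilde t_i \geq \len - k + 1$ with side constraints $0 \leq \tilde t_i \leq \lenFLRSshot{i}$. Under the ordering $\foldParShot{1} \geq \dots \geq \foldParShot{\shots}$, minimizing $\sum_i \tilde t_i$ is a greedy LP: saturate $\tilde t_1, \dots, \tilde t_{\ispecial - 1}$ at their caps $\lenFLRSshot{i}$ (these entries fill the budget most efficiently per unit of $\sum_i \tilde t_i$), set $\tilde t_i = 0$ for $i > \ispecial$, and pick $\tilde t_\ispecial$ as the smallest integer meeting the remaining requirement. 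The resulting minimum matches the theorem's formula, and the defining inequality for $\ispecial$ drops out of the constraint $\tilde t_\ispecial \in \{1, \dots, \lenFLRSshot{\ispecial}\}$.

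The upper bound is obtained by constructing a nonzero $f \in \SkewPolyring_{<k}$ that realises this minimum. Prescribing $z_i$ folded columns of block $i$ to vanish forces $f$ to vanish at the $\foldParShot{i} z_i$ consecutive powers $\pe^{c \foldParShot{i} + j}$, $0 \leq c < z_i$, $0 \leq j < \foldParShot{i}$, in the conjugacy class of $a_i$. These powers are $\Fq$-linearly independent, so the degree-evaluation statement recalled in the preliminaries guarantees a nonzero such $f \in \SkewPolyring_{<k}$ whenever $\sum_i \foldParShot{i} z_i < k$. Maximising $\sum_i z_i$ under this budget is the dual LP, solved greedily in the reverse order by filling blocks with the smallest $\foldParShot{i}$ first; the resulting codeword meets the lower bound. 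Finally, for the MSRD characterization I would substitute $d$ into $d - 1 = \sum_{i < \ispecial} \lenFLRSshot{i} + \lambda$ to express $\lambda$ via the same ceiling, then impose equality in the Singleton-like bound $|\mycode{C}| = q^{\sum_{i \geq \ispecial} \foldParShot{i} m \lenFLRSshot{i} - \foldParShot{\ispecial} m \lambda}$ using $|\mycode{C}| = q^{mk}$. Tightness is equivalent to the ceiling in the distance formula contributing no rounding loss, which after a short calculation amounts precisely to $\foldParShot{\ispecial}$ dividing both $k$ and each $\foldParShot{i} \lenFLRSshot{i}$ for $i > \ispecial$. I expect the trickiest step to be this last matching of ceiling identities between the two expressions; the remaining ingredients are the greedy LP analysis and the standard MSRD property of $\linRS$ codes.
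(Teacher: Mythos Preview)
Your overall strategy is sound and in some ways cleaner than the paper's. The paper obtains its lower bound by a reduced-column-echelon-form counting argument (the zero entries of the RCEF are evaluations of $f$ at $\Fq$-linearly independent points, so their number is at most $k-1$) and takes its upper bound from the Singleton bound of \autoref{thm:singleton_bound}; you instead use the rank-folding inequality $t_i\le\foldParShot{i}\tilde t_i$ together with the MSRD property of the unfolded $\linRS$ code for the lower bound, and an explicit vanishing-column codeword for the upper bound. Both routes land on the same integer programme and the same value of $d$.

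The genuine gap is your unverified assertion that ``the resulting minimum matches the theorem's formula''. It does not in general. Your greedy yields
\[
d=\sum_{i\le\ispecial}\lenFLRSshot{i}-\Bigl\lfloor\tfrac{k-1-\sum_{i>\ispecial}\foldParShot{i}\lenFLRSshot{i}}{\foldParShot{\ispecial}}\Bigr\rfloor,
\]
whereas the stated formula carries $-\lceil\cdot\rceil+1$ instead of $-\lfloor\cdot\rfloor$; the two differ by one precisely when $\foldParShot{\ispecial}$ divides $k-1-\sum_{i>\ispecial}\foldParShot{i}\lenFLRSshot{i}$. Take $\shots=1$, $\foldParShot{1}=\lenFLRSshot{1}=2$, $k=3$: your own construction (a nonzero $f\in\SkewPolyring_{<3}$ with $\opev{f}{1}{a_1}=\opev{f}{\pe}{a_1}=0$) gives a codeword of $\Fq$-rank $1$, and your lower bound also gives $d\ge1$, so $d=1$; the stated formula returns $d=2$. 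The discrepancy traces to the paper's derivation, which invokes ``$\lfloor x\rfloor=\lceil x-1\rceil$ for any real $x$''---false at integer $x$. The MSRD clause is affected too: your bounds actually deliver the single condition $\foldParShot{\ispecial}\mid\bigl(k-\sum_{i>\ispecial}\foldParShot{i}\lenFLRSshot{i}\bigr)$, which does not split into $\foldParShot{\ispecial}\mid k$ and $\foldParShot{\ispecial}\mid\foldParShot{i}\lenFLRSshot{i}$ separately (e.g.\ $\shots=2$, $\foldParVec=(3,2)$, $\lenFLRSVec=(3,2)$, $k=7$ is MSRD with $d=3$ but fails the stated divisibility). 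So your plan correctly proves the floor version of the formula, but not the theorem exactly as written; the ``trickiest step'' you flag cannot be completed as planned.
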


\begin{proof}
    Let $\C = (\shot{\C}{1}, \dots, \shot{\C}{\shots})\in \mycode{C}$ be the nonzero codeword corresponding to the message polynomial $f \in \SkewPolyring_{<k}$.
    Then there are $z, z_1, \dots, z_{\shots} \geq 0$ with $z = \sum_{i=1}^{\shots} z_i$ such that
    $\SumRankWeight(\C) = \lenFLRS - z$ and $\rk_q(\C^{(i)}) = \lenFLRSshot{i} - z_i$ for $i = 1, \dots, \shots$.
    Let us denote by $\RCEF(\C) \in \Fqm^{\foldParVec \times \lenFLRSVec}$ the blockwise-reduced column-echelon form of
    $\C$ which is obtained by bringing each block $\shot{\C}{i}$ independently in its reduced column-echelon form with
    respect to $\Fq$ as follows:
    first obtain a matrix $\shot{\C}{i}_q \in \Fq^{m \foldParShot{i} \times \lenShot{i}}$ by replacing each row of the
    block $\shot{\C}{i}$ with its extended matrix that is obtained via $\ext_{\vecgamma}$ for an arbitrary
    $\Fq$-basis $\vecgamma$ of $\Fqm$.
    Next, bring $\shot{\C}{i}_q$ in reduced column-echelon form (e.g. by Gaussian elimination) and finally apply the
    inverse operation $\extInv_{\vecgamma}$ to the matrix blocks and get back a
    $(\foldParShot{i} \times \lenShot{i})$-matrix over $\Fqm$.
    Since $\extInv_{\vecgamma}$ preserves the zero columns and $\rk_q(\C^{(i)}) = \lenFLRSshot{i} - z_i$,
    the number of nonzero columns in the $i$-th block of $\RCEF(\C)$ is $\lenFLRSshot{i} - z_i$.
    Thus, the overall number of nonzero entries is certainly upper-bounded by
    $\sum_{i=1}^{\shots} \foldParShot{i} (\lenFLRSshot{i} - z_i)$.
    We further obtain an upper bound on the last result by finding the vector $\z = (z_1, \dots, z_{\shots})$ that realizes
    \begin{equation}
        \max \left\{ \sum_{i=1}^{\shots} \foldParShot{i} (\lenFLRSshot{i} - z_i) : \lenFLRSVec - \z \leq \lenFLRSVec
        \text{ and } \lenFLRS - z = \sum_{i=1}^{\shots} z_i \right\}
    \end{equation}
    for a fixed $z$.
    With~\autoref{lem:maximize_elementwise_product}, the maximum equals
    $\sum_{i=1}^{\ispecialTwo - 1} \foldParShot{i} \lenFLRSshot{i} + \foldParShot{\ispecialTwo} \varepsilon$,
    where $\ispecialTwo \in \{1, \dots, \shots\}$ and $0 \leq \varepsilon < \lenFLRSshot{\ispecialTwo}$ are the unique
    integers such that $\lenFLRS - z = \sum_{i=1}^{\ispecialTwo - 1} \lenFLRSshot{i} + \varepsilon$.
    When we shift the focus to the zero entries of $\RCEF(\C)$, we naturally obtain the lower bound
    $\sum_{i=\ispecialTwo}^{\shots} \foldParShot{i} \lenFLRSshot{i} - \foldParShot{\ispecialTwo} \varepsilon$ with $y$
    and $\varepsilon$ as before,
    since the number of zero and nonzero entries adds up to $\sum_{i=1}^{\shots} \foldParShot{i} \lenFLRSshot{i}$.

    Note that the $\Fqm$-semilinearity of the generalized operator evaluation ensures that for each $1 \leq i \leq
    \shots$ the entries of the $i$-th block of $\RCEF(\C)$ are still $\Fq$-linearly independent and can be expressed as
    evaluations of $f$ with respect to the evaluation parameter $a_i$.
    Since the number of $\Fq$-linearly independent roots of $f$ with respect to evaluation parameters from distinct
    nontrivial conjugacy classes is bounded by its degree and $\deg(f) \leq k - 1$, we get
    \begin{equation} \label{eq:min_dist_proof_lower_bound}
        \sum_{i=\ispecialTwo}^{\shots} \foldParShot{i} \lenFLRSshot{i}
        - \foldParShot{\ispecialTwo} \varepsilon \leq k - 1
        \quad
        \begin{gathered}
            \text{for } \ispecialTwo \in \{1, \dots, \shots\}, \varepsilon \in \{0, \dots, \lenFLRSshot{\ispecialTwo} - 1\} \\
            \text{unique with } \lenFLRS - z = \sum\nolimits_{i=1}^{\ispecialTwo - 1} \lenFLRSshot{i} + \varepsilon.
        \end{gathered}
    \end{equation}
    On the other hand, the Singleton-like bound for $\Fq$-linear sum-rank-metric codes (see \autoref{thm:singleton_bound}) yields
    \begin{equation} \label{eq:min_dist_proof_upper_bound}
        k \leq \sum_{i = \ispecial}^{\shots} \foldParShot{i} \lenFLRSshot{i}
        - \foldParShot{\ispecial} \lambda
        \quad
        \begin{gathered}
            \text{for } \ispecial \in \{1, \dots, \shots\}, \lambda \in
            \{0, \dots, \lenFLRSshot{\ispecial} - 1\} \\
            \text{unique with } \SumRankDist(\mycode{C}) - 1 =
            \sum\nolimits_{i=1}^{\ispecial - 1} \lenFLRSshot{i} + \lambda.
        \end{gathered}
    \end{equation}
    As we can choose a minimum-weight codeword $\C$ in the above reasoning, we can replace $\lenFLRS - z$ by
    $\SumRankDist(\mycode{C})$ in~\eqref{eq:min_dist_proof_lower_bound}.
    But then there are only two possibilities for the relationship between the indices $\ispecialTwo$ and $\ispecial$
    and the parameters $\varepsilon$ and $\lambda$.
    Namely,
    \begin{enumerate}
        \item $\ispecial = \ispecialTwo$, \tabto{2.5cm} $\varepsilon \in \{1, \dots, \lenFLRSshot{\ispecialTwo} - 1\}$,
                \tabto{6.5cm} and \tabto{8cm} $\lambda = \varepsilon - 1$ \tabto{11cm} or
        \item $\ispecial = \ispecialTwo - 1$, \tabto{2.5cm} $\varepsilon = 0$, \tabto{6.5cm} and
                \tabto{8cm} $\lambda = \lenFLRSshot{\ispecialTwo - 1} - 1$.
    \end{enumerate}
    Let us focus on the first case.
    We get
    \begin{gather}
        \label{eq:min_dist_proof_transformation_start_lower_bound}
        \sum_{i=\ispecialTwo}^{\shots} \foldParShot{i} \lenFLRSshot{i}
        - \foldParShot{\ispecialTwo} \left( \SumRankDist(\mycode{C})
        - \sum_{i=1}^{\ispecialTwo - 1} \lenFLRSshot{i} \right) + 1 \leq k
    \end{gather}
    by substituting $\varepsilon$ for the equality condition in~\eqref{eq:min_dist_proof_lower_bound}.
    We then shift the first summand $\foldParShot{\ispecialTwo} \lenFLRSshot{\ispecialTwo}$ of the first sum into
    the second sum, do some transformations, and finally use the integrality of the left-hand side as well as the fact
    $\lfloor x \rfloor = \lceil x - 1 \rceil$ for any real number $x$ to obtain
    \begin{equation} \label{eq:min_dist_proof_transformed_lower_bound}
        \sum_{i=1}^{\ispecialTwo} \lenFLRSshot{i} - \SumRankDist(\mycode{C}) + 1
        \leq \left\lceil \frac{k - \sum_{i=\ispecialTwo + 1}^{\shots} \foldParShot{i} \lenFLRSshot{i} - 1}
        {\foldParShot{\ispecialTwo}} \right\rceil.
    \end{equation}
    Similarly, substituting $\lambda$ for the equality condition in~\eqref{eq:min_dist_proof_upper_bound} yields
    \begin{gather}
        k \leq
        \sum_{i = \ispecial}^{\shots} \foldParShot{i} \lenFLRSshot{i}
        - \foldParShot{\ispecial} \left( \SumRankDist(\mycode{C}) - 1
        - \sum_{i=1}^{\ispecial - 1} \lenFLRSshot{i} \right)
        \\
        \label{eq:min_dist_proof_transformed_upper_bound}
        \iff
        \left\lceil \frac{k - \sum_{i = \ispecial + 1}^{\shots} \foldParShot{i} \lenFLRSshot{i}}
        {\foldParShot{\ispecial}} \right\rceil
        \leq \sum_{i=1}^{\ispecial} \lenFLRSshot{i} - \SumRankDist(\mycode{C}) + 1.
    \end{gather}
    As $\ispecial = \ispecialTwo$ holds, the right-hand side of~\eqref{eq:min_dist_proof_transformed_lower_bound} is
    less than or equal to the left-hand side of~\eqref{eq:min_dist_proof_transformed_upper_bound}.
    But since the left-hand side of~\eqref{eq:min_dist_proof_transformed_lower_bound} and the right-hand side
    of~\eqref{eq:min_dist_proof_transformed_upper_bound} are equal, all inequalities in the chain must be equalities and
    we get
    \begin{equation} \label{eq:min_dist_proof_result}
        \SumRankDist(\mycode{C}) = \sum_{i=1}^{\ispecial} \lenFLRSshot{i}
        - \left\lceil \frac{k - \sum_{i=\ispecial + 1}^{\shots} \foldParShot{i} \lenFLRSshot{i} - 1}
        {\foldParShot{\ispecial}} \right\rceil + 1,
    \end{equation}
    where $\ispecial \in \{1, \dots, \shots\}$ is unique with $0 \leq
    \SumRankDist(\mycode{C}) - \sum\nolimits_{i=1}^{\ispecial - 1} \lenFLRSshot{i} - 1 <
    \lenFLRSshot{\ispecial}$.

    Let us move on to the second case and recall that $\varepsilon = 0$.
    Therefore, we can replace the factor $\foldParShot{\ispecialTwo}$ of $\varepsilon$ (i.e.\ of
    $\SumRankDist(\mycode{C}) - \sum_{i=1}^{\ispecialTwo - 1} \lenFLRSshot{i}$)
    in~\eqref{eq:min_dist_proof_transformation_start_lower_bound} with $\foldParShot{\ispecialTwo - 1}$.
    Similar transformations as above and again the integrality of the left-hand side yield
    \begin{equation} \label{eq:min_dist_proof_differently_transformed_lower_bound}
        \sum_{i=1}^{\ispecialTwo - 1} \lenFLRSshot{i} - \SumRankDist(\mycode{C}) + 1
        \leq \left\lceil \frac{k - \sum_{i=\ispecialTwo}^{\shots} \foldParShot{i} \lenFLRSshot{i} - 1}
        {\foldParShot{\ispecialTwo - 1}} \right\rceil.
    \end{equation}
    Since we have $\ispecial = \ispecialTwo - 1$ in this case, the right-hand side
    of~\eqref{eq:min_dist_proof_differently_transformed_lower_bound} is less than or equal to the left-hand side
    of~\eqref{eq:min_dist_proof_transformed_upper_bound}.
    But, as in the first case, the left-hand side of~\eqref{eq:min_dist_proof_differently_transformed_lower_bound} and
    the right-hand side of~\eqref{eq:min_dist_proof_transformed_upper_bound} are equal.
    Hence, the inequalities are in fact equalities and we obtain~\eqref{eq:min_dist_proof_result}.

    The Singleton-like bound is met if and only if $\foldParShot{\ispecialTwo}$ divides
    $k - \sum_{i=\ispecialTwo + 1}^{\shots} \foldParShot{i} \lenFLRSshot{i}$, which is equivalent to
    $\foldParShot{\ispecialTwo}$ dividing $k$ as well as $\foldParShot{i} \lenFLRSshot{i}$ for each
    $i = 1, \dots, \shots$.
    This concludes the proof.
\end{proof}

\begin{remark}
    \autoref{thm:minimum_distance_flrs} needs an \ac{FLRS} code to satisfy $\foldParShot{1} \geq \dots \geq
    \foldParShot{\shots}$ for technical reasons.
    However, this is not a restriction since we can simply reorder the blocks of a sum-rank-metric code without changing
    its weight distribution or its minimum distance.
    Formally speaking, we choose a permutation $\pi$ from the symmetric group $\set{S}_{\shots}$ for which
    $\foldParShot{\piInv(1)} \geq \dots \geq \foldParShot{\piInv(\shots)}$ holds and consider
    \begin{equation}
        \tilde{\mycode{C}} = \{ (\shot{\C}{\piInv(1)}, \dots, \shot{\C}{\piInv(\shots)}) :
        (\shot{\C}{1}, \dots, \shot{\C}{\shots}) \in \mycode{C} \}.
    \end{equation}
\end{remark}

\section{Interpolation-Based Decoding of Folded Linearized Reed--Solomon Codes} \label{sec:decoding}

In this section we derive an interpolation-based decoder for \ac{FLRS} codes that is based on the Guruswami--Rudra decoder for \ac{FRS} codes~\cite{Guruswami2008Explicit} and the Mahdavifahr--Vardy decoder for folded Gabidulin codes~\cite{Mahdavifar2012Listdecoding}.
As channel model we consider an additive sum-rank channel that relates the input $\C \in \Fq^{\foldParVec \times \lenFLRSVec}$
to the received output $\R \in \Fqm^{\foldParVec \times \lenFLRSVec}$ by adding an error $\E \in
\Fqm^{\foldParVec \times \lenFLRSVec}$, that is $\R = \C + \E$.
The addition in $\Fqm^{\foldParVec \times \lenFLRSVec}$ is performed componentwise.

We denote the sum-rank weight of the error $\E = (\shot{\E}{1}, \dots, \shot{\E}{\shots}) \in
\Fqm^{\foldParVec \times \lenFLRSVec}$ by $\SumRankWeight(\E) = t$
and its weight decomposition by $\t = (t_1, \dots, t_{\shots})$ with $t_i = \rk_{q} (\shot{\E}{i})$ for all
$i = 1, \dots, \shots$.
$\E$ is chosen uniformly at random from the set of all tuples in $\Fqm^{\foldParVec \times \lenFLRSVec}$ having a fixed
weight $t$ as well as a weight decomposition belonging to a prescribed set of decompositions.

Suppose we transmit a codeword $\C \in \foldedLinRS{\pe, \a, \foldParVec; \lenFLRSVec, k}$ and we receive the word $\R = (\shot{\R}{1}, \dots, \shot{\R}{\shots})$ with
\begin{equation} \label{eq:R_notation}
    \shot{\R}{i} =
    \begin{pmatrix}
        r_1^{(i)} & r_{\foldParShot{i}+1}^{(i)} & \cdots & r_{\lenShot{i}-\foldParShot{i}+1}^{(i)}
        \\
        \vdots & \vdots & \ddots & \vdots
        \\
        r_{\foldParShot{i}}^{(i)} & r_{2\foldParShot{i}}^{(i)} & \cdots & r_{\lenShot{i}}^{(i)}
    \end{pmatrix}
    \in \Fqm^{\foldParShot{i} \times \lenFLRSshot{i}}
    \quad \text{for all } i \in \{1, \ldots, \shots\}.
\end{equation}

Note that the decodability of a specific error will in general depend on its weight decomposition and not only on the
chosen code, the error weight $t$, and the decoder's parameters (see~\autoref{thm:decodingRadius}).
When we consider codes using the same folding parameter $\foldPar \in \NN^{\ast}$ for all blocks, only the error weight
$t$ decides if an error is decodable for the chosen code and decoder.

As is typical for interpolation-based decoding, our decoder consists of two steps that we will describe in the following:
interpolation and root finding.
In the first phase, we construct interpolation points from the received word $\R$ and obtain a multivariate skew
polynomial $Q$ that satisfies certain conditions.
In the second phase, we use the interpolation polynomial $Q$ to find candidates for the message polynomial and hence for
the transmitted codeword.

\subsection{Interpolation Step} \label{subsec:interpolation-step}

We first choose an interpolation parameter $\intDim \in \NN^{\ast}$ satisfying
\begin{equation}\label{eq:intDimConstraint}
    \intDim \leq \min_{i \in \{1, \dots, \shots\}} \foldParShot{i},
\end{equation}
where the constraint arises from the selection method of the interpolation points.
The latter are elements of $\Fqm^{\intDim + 1}$ whose last $\intDim$ entries are obtained from the received word $\R$
using a sliding-window approach.
Namely, we place a window of size $\intDim \times 1$ on the top left corner of $\R$ and slide it down one position at a
time as long as each position of the window covers an entry of $\R$.
Then the window is moved to the next column, starting the same process again from the top.
The first entry of an interpolation point obtained in this way is the code locator corresponding to the window's
starting position, that is a power of the primitive element $\pe$.

Formally speaking, we consider for each $i = 1, \dots, \shots$ the two sets
\begin{gather}\label{eq:intPointDef}
    \begin{aligned}
         \set{W}_i &\defeq \left\{ (j-1) \foldParShot{i} + l :
         j \in \{1, \ldots, \lenFLRSshot{i}\}, l \in \{1, \ldots, \foldParShot{i} - \intDim + 1\} \right\}
         \\
        \text{and} \quad
        \set{P}_i &\defeq \left\{ \left( \pe^{w-1}, r_{w}^{(i)}, r_{w+1}^{(i)}, \dots, r_{w+\intDim-1}^{(i)} \right):
        w \in \set{W}_i \right\},
    \end{aligned}
\end{gather}
where $\set{P}_i$ contains all interpolation points corresponding to the $i$-th block $\shot{\R}{i}$ of $\R$.
The index set $\set{W}_i$ consists of all eligible starting positions for the sliding window within $\shot{\R}{i}$ and
each interpolation point can be naturally identified with a tuple $(w,i)$ with $w \in \set{W}_i$ and $i \in
\{1, \dots, \shots\}$.
Note that, by construction, the set of all interpolation points $\set{P} \defeq \bigcup_{i=1}^{\shots} \set{P}_i$ has
cardinality
\begin{equation}
    \vert \set{P} \vert = \sum_{i=1}^{\shots} \lenFLRSshot{i} (\foldParShot{i} - s + 1).
\end{equation}

\begin{example} \label{ex:int_points}
    Consider an \ac{FLRS} code with folding parameters $\foldParVec = (3, 2)$ and folded block lengths
    $\lenFLRSVec = (2, 2)$.
    Choose $\intDim = 2$ and denote $\R$ according to~\eqref{eq:R_notation}, i.e.\
    \begin{equation}
        \R =
        \left(
        \begin{pmatrix}
            r_1^{(1)} & r_4^{(1)} \\
            r_2^{(1)} & r_5^{(1)} \\
            r_3^{(1)} & r_6^{(1)}
        \end{pmatrix}
        ,
        \begin{pmatrix}
            r_1^{(2)} & r_3^{(2)} \\
            r_2^{(2)} & r_4^{(2)}
        \end{pmatrix}
        \right).
    \end{equation}
    Then the set of interpolation points is the union of
    \begin{align}
        \set{P}_1 &= \Big\{ (1, r_1^{(1)}, r_2^{(1)}), (\pe, r_2^{(1)}, r_3^{(1)}), (\pe^3, r_4^{(1)}, r_5^{(1)}),
        (\pe^4, r_5^{(1)}, r_6^{(1)}) \Big\}
        \\
        \text{and} \quad \set{P}_2 &= \Big\{ (1, r_1^{(2)}, r_2^{(2)}), (\pe^2, r_3^{(2)}, r_4^{(2)}) \Big\}.
    \end{align}
\end{example}

We wish to find a multivariate skew interpolation polynomial $Q$ that satisfies certain interpolation constraints and
has the form
\begin{equation}\label{eq:mult_var_skew_poly_FLRS}
    Q(x, y_1, \dots, y_{\intDim}) = Q_0(x) + Q_1(x) y_1 + \dots + Q_{\intDim}(x) y_{\intDim},
\end{equation}
where $Q_r(x) \in \SkewPolyring$ for all $r \in \{0, \ldots, \intDim\}$.
The \emph{generalized operator evaluation} of such a polynomial $Q \in \MultSkewPolyring$ at a given interpolation point
$\p = (p_0, \dots, p_{\intDim}) \in \Fqm^{\intDim + 1}$ with respect to an evaluation parameter $a \in \Fqm$ is defined
as
\begin{equation}
    \label{eq:defFuncGenOpFLRS}
    \mathscr{E}_{Q}(\p)_{a} \defeq
    \opev{Q_0}{p_0}{a} + \opev{Q_1}{p_1}{a} + \dots + \opev{Q_\intDim}{p_{\intDim}}{a}.
\end{equation}

\begin{problem}[Interpolation Problem] \label{prob:intProblemFLRS}
    Let a parameter $\degConstraint \in \NN^{\ast}$, a set $\set{P} = \bigcup_{i=1}^{\shots} \set{P}_i$
    of interpolation points and evaluation parameters $a_1, \dots, a_{\shots}$ be given.
    Find a nonzero $(s+1)$-variate skew polynomial $Q$ of the
    form~\eqref{eq:mult_var_skew_poly_FLRS} satisfying
    \begin{enumerate}
        \item
            $\mathscr{E}_{Q}(\p)_{a_i} = 0$ for all $\p \in \set{P}_i$ and $i \in \{1, \ldots, \shots\}$ as well as
        \item
            $\deg(Q_0) < D$ and $\deg(Q_r) < D - k + 1$ for all $r \in \{1, \ldots, \intDim\}$.
    \end{enumerate}
\end{problem}
Note that the evaluation parameters $a_1, \dots, a_{\shots}$ are the entries of $\a$ of the considered \ac{FLRS} code
$\foldedLinRS{\pe, \a, \foldParVec; \lenFLRSVec, k}$.

\autoref{prob:intProblemFLRS} can be solved using skew Kötter interpolation from~\cite{liu2014kotter}
(similar as in~\cite[Sec.~V]{bartz2014efficient}) requiring at most $\OCompl{\intDim \len^2}$ operations in $\Fqm$ in the zero-derivation case.
There exist fast interpolation algorithms~\cite{bartz2022fast, bartz2022knh} that can solve~\autoref{prob:intProblemFLRS} requiring at most $\softoh{\intDim^\omega \OMul{\len}}$ operations in $\Fqm$ in the zero-derivation case, where $\softoh{\cdot}$ denotes the \emph{soft}-O notation (which neglects log factors), $\OMul{n}\in\OCompl{n^{1.635}}$ is the cost of multiplying two skew-polynomials of degree at most $n$ and $\omega < 2.37286$ is the matrix multiplication exponent~\cite{le_gall_powers_2014}.

Since the second condition of the interpolation problem allows us to write
\begin{equation}
    \label{eq:Qcoefficients}
    Q_0(x) = \sum_{j=0}^{\degConstraint-1} q_{0, j} x^j
    \quad \text{and} \quad
    Q_r(x) = \sum_{j=0}^{\degConstraint-k} q_{r, j} x^j
    \quad \text{for} \quad
    r \in \{1, \ldots, \intDim\}
\end{equation}
with all coefficients from $\Fqm$, we can also solve~\autoref{prob:intProblemFLRS} by solving a system of $\Fqm$-linear
equations whose coefficient matrix describes the
first condition of the interpolation problem.
We collect all interpolation points from $\set{P}_i$ as rows in a matrix $\P_i \in
\Fqm^{\lenFLRSshot{i}(\foldParShot{i} - \intDim + 1) \times (\intDim + 1)}$ for each $1 \leq i \leq \shots$ and denote
its columns by $\p_{i, 0}, \ldots, \p_{i, \intDim}$.
Define further $\p_r = (\p_{1, r}^{\top} \ | \ \cdots \ | \ \p_{\shots, r}^{\top})$ for $0 \leq r \leq \intDim$.
Then,~\autoref{prob:intProblemFLRS} can be written as
\begin{gather}\label{eq:intSystem}
    \S\q_I^{\top} = \0 \\
    \text{with} \quad
    \S = \left(
    \begin{array}{c|c|c|c}
        \left(\opMoore{D}{\p_0}{\a}\right)^{\top} &
        \left(\opMoore{D-k+1}{\p_1}{\a}\right)^{\top} &
        \cdots &
        \left(\opMoore{D-k+1}{\p_s}{\a}\right)^{\top}
    \end{array}
    \right)
    \notag \\
    \text{and} \quad
    \q_I = \left(
    q_{0, 0} \cdots q_{0, D-1} \ | \ q_{1, 0} \cdots q_{1, D-k} \ | \ \cdots \ | \
    q_{\intDim, 0} \cdots q_{\intDim, D-k}
    \right).
    \notag
\end{gather}

\begin{example} \label{ex:int_matrix}
    Let us continue~\autoref{ex:int_points} with $k = 2$ and derive the corresponding interpolation matrix for the choice
    $\degConstraint = 3$.
    As we will see shortly in~\autoref{lem:degConstraintForExistence}, this choice guarantees the existence of a nonzero
    solution of~\eqref{eq:intSystem}.
    We get
    \begin{equation}
        \P_1 =
        \begin{pmatrix}
            1 & r_1^{(1)} & r_2^{(1)} \\
            \pe & r_2^{(1)} & r_3^{(1)} \\
            \pe^3 & r_4^{(1)} & r_5^{(1)} \\
            \pe^4 & r_5^{(1)} & r_6^{(1)}
        \end{pmatrix}
        \qquad \text{ and } \qquad
        \P_2 =
        \begin{pmatrix}
            1 & r_1^{(2)} & r_2^{(2)} \\
            \pe^2 & r_3^{(2)} & r_4^{(2)} \\
        \end{pmatrix}
    \end{equation}
    and hence the interpolation matrix $\S$ is given by
    \begin{gather}
        \left(
        \begin{array}{c|c|c}
            \opMoore{3}{\p_0}{\a}^{\top}
            & \opMoore{2}{\p_1}{\a}^{\top}
            & \opMoore{2}{\p_2}{\a}^{\top}
        \end{array}
        \right)
        =
        \left(
        \begin{array}{c|c|c}
            \opVandermonde{3}{\p_{1, 0}}{a_1}^{\top}
            & \opVandermonde{2}{\p_{1, 1}}{a_1}^{\top}
            & \opVandermonde{2}{\p_{1, 2}}{a_1}^{\top}
            \\
            \opVandermonde{3}{\p_{2, 0}}{a_2}^{\top}
            & \opVandermonde{2}{\p_{2, 1}}{a_2}^{\top}
            & \opVandermonde{2}{\p_{2, 2}}{a_2}^{\top}
        \end{array}
        \right),
        \\
        \text{i.e.\ } \quad
        \S = \left(
        \begin{array}{ccc|cc|cc}
            1 & \op{a_1}{1} & \opexp{a_1}{1}{2} &
            r_1^{(1)} & \op{a_1}{r_1^{(1)}} &
            r_2^{(1)} & \op{a_1}{r_2^{(1)}}
            \\
            \pe & \op{a_1}{\pe} & \opexp{a_1}{\pe}{2} &
            r_2^{(1)} & \op{a_1}{r_2^{(1)}} &
            r_3^{(1)} & \op{a_1}{r_3^{(1)}}
            \\
            \pe^3 & \op{a_1}{\pe^3} & \opexp{a_1}{\pe^3}{2} &
            r_4^{(1)} & \op{a_1}{r_4^{(1)}} &
            r_5^{(1)} & \op{a_1}{r_5^{(1)}}
            \\
            \pe^4 & \op{a_1}{\pe^4} & \opexp{a_1}{\pe^4}{2} &
            r_5^{(1)} & \op{a_1}{r_5^{(1)}} &
            r_6^{(1)} & \op{a_1}{r_6^{(1)}}
            \\
            \hline
            1 & \op{a_2}{1} & \opexp{a_2}{1}{2} &
            r_1^{(2)} & \op{a_2}{r_1^{(2)}} &
            r_2^{(2)} & \op{a_2}{r_2^{(2)}}
            \\
            \pe^2 & \op{a_2}{\pe^2} & \opexp{a_2}{\pe^2}{2} &
            r_3^{(2)} & \op{a_2}{r_3^{(2)}} &
            r_4^{(2)} & \op{a_2}{r_4^{(2)}}
        \end{array}
        \right).
    \end{gather}
\end{example}

\begin{lemma}[Existence]\label{lem:degConstraintForExistence}
    A nonzero solution to~\autoref{prob:intProblemFLRS} exists if
    \begin{equation} \label{eq:deg_constraint}
        \degConstraint = \left\lceil
        \frac{\sum_{i=1}^{\shots} \lenFLRSshot{i} (\foldParShot{i} - \intDim + 1) + \intDim (k - 1) + 1}{\intDim + 1}
        \right\rceil.
    \end{equation}
\end{lemma}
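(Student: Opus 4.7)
The plan is to convert the interpolation problem into a homogeneous $\Fqm$-linear system and argue existence by comparing the number of unknowns to the number of constraints.

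First I would count the free coefficients in a polynomial $Q$ of the form~\eqref{eq:mult_var_skew_poly_FLRS} satisfying condition~2 of~\autoref{prob:intProblemFLRS}. Using the representation in~\eqref{eq:Qcoefficients}, $Q_0$ contributes $\degConstraint$ coefficients and each of $Q_1,\ldots,Q_{\intDim}$ contributes $\degConstraint-k+1$ coefficients, giving a total of
\begin{equation}
    \degConstraint + \intDim(\degConstraint - k + 1) = (\intDim + 1)\degConstraint - \intDim(k-1)
\end{equation}
free variables over $\Fqm$.

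Next I would count the interpolation constraints imposed by condition~1. By the construction of $\set{P}_i$ in~\eqref{eq:intPointDef}, we have $|\set{P}_i| = \lenFLRSshot{i}(\foldParShot{i} - \intDim + 1)$ points for block $i$, and each point $\p \in \set{P}_i$ translates via~\eqref{eq:defFuncGenOpFLRS} into one $\Fqm$-linear equation in the unknown coefficients $q_{r,j}$. This is exactly the system~\eqref{eq:intSystem} with coefficient matrix $\S$, and the total number of rows of $\S$ equals $|\set{P}| = \sum_{i=1}^{\shots} \lenFLRSshot{i}(\foldParShot{i} - \intDim + 1)$.

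A nontrivial solution to the homogeneous system $\S\q_I^\top = \0$ is guaranteed whenever the number of unknowns strictly exceeds the number of equations, i.e.\
\begin{equation}
    (\intDim+1)\degConstraint - \intDim(k-1) \;\geq\; \sum_{i=1}^{\shots} \lenFLRSshot{i}(\foldParShot{i} - \intDim + 1) + 1.
\end{equation}
Solving this inequality for $\degConstraint$ and using integrality (i.e.\ taking the ceiling on the right-hand side) yields exactly the bound~\eqref{eq:deg_constraint}, concluding the proof.

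There is no real obstacle here: the argument is pure dimension counting, and the only care needed is to keep track of the two different degree budgets (one for $Q_0$, a smaller one for $Q_1,\ldots,Q_\intDim$) and the block-wise interpolation-point count $\lenFLRSshot{i}(\foldParShot{i} - \intDim + 1)$ stemming from the sliding-window construction together with the constraint~\eqref{eq:intDimConstraint}.
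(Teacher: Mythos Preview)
Your argument is correct and essentially identical to the paper's own proof: both reduce the interpolation problem to the homogeneous linear system~\eqref{eq:intSystem}, count $(\intDim+1)\degConstraint - \intDim(k-1)$ unknowns against $\sum_{i=1}^{\shots} \lenFLRSshot{i}(\foldParShot{i} - \intDim + 1)$ equations, and take the ceiling after solving for $\degConstraint$.
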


\begin{proof}
    A nontrivial solution of~\eqref{eq:intSystem} exists if less equations than unknowns are involved.
    The number of equations corresponds to the number of interpolation points and hence the condition on the existence
    of a nonzero solution reads as follows:
    \begin{align}
        \sum_{i=1}^{\shots} \lenFLRSshot{i} (\foldParShot{i} - s + 1) &< \degConstraint (\intDim + 1) - \intDim (k - 1)
        \label{eq:degConstraintIneq}
        \\
        \iff \degConstraint &\geq
        \frac{\sum_{i=1}^{\shots} \lenFLRSshot{i} (\foldParShot{i} - s + 1) + \intDim (k - 1) + 1}{\intDim + 1}.
    \end{align}
    Since $\degConstraint$ is integral, the statement follows.
\end{proof}

If the same folding parameter is used for each block, that is if there is a $h \in \NN^{\ast}$ such that
$\foldPar = \foldParShot{i}$ holds for all $1 \leq i \leq \shots$, this reduces to
\begin{equation}
    \degConstraint=\left\lceil\frac{\lenFLRS(\foldPar-\intDim+1)+\intDim(k-1)+1}{\intDim+1}\right\rceil,
\end{equation}
which coincides with~\cite[Lemma~2]{hoermann2022efficient}.
Note that this is still true for different numbers of columns $\lenFLRSshot{1}, \dots, \lenFLRSshot{\shots}$.

\begin{lemma}[Roots of Polynomial]\label{lem:decConditionFLRS}
    Define the univariate skew polynomial
    \begin{equation} \label{eq:def_univariate_skew_poly}
        \begin{aligned}
            P(x) &\defeq Q_0(x)+Q_1(x)f(x)+Q_2(x)f(x)\pe+\dots+Q_\intDim(x)f(x)\pe^{\intDim - 1} \\
            &= Q(x, f(x), f(x)\pe, \dots, f(x)\pe^{\intDim-1}) \in \SkewPolyring
        \end{aligned}
    \end{equation}
    and write $t_i \defeq \rk_q(\mat{E}^{(i)})$ for $1 \leq i \leq \shots$.
    Then there exist $\Fq$-linearly independent elements
    $\zeta_1^{(i)}, \dots, \zeta_{(\lenFLRSshot{i} - t_i)(\foldParShot{i} - \intDim + 1)}^{(i)} \in \Fqm$ for each
    $i \in \{1, \dots, \shots\}$ such that $\opev{P}{\zeta_j^{(i)}}{a_i} = 0$ for all $1 \leq i \leq \shots$ and all
    $1 \leq j \leq (\lenFLRSshot{i} - t_i) (\foldParShot{i} - \intDim + 1)$.
\end{lemma}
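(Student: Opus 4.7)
The plan is to convert each interpolation constraint into an identity that isolates $\opev{P}{\pe^{w-1}}{a_i}$, and then use the $\Fq$-linearity of the generalized operator evaluation to manufacture enough roots of $P$ via a dimension count. To carry out the first step, I would observe that for every $w \in \set{W}_i$ and every $r \in \{1, \ldots, \intDim\}$ the received symbol decomposes as $r_{w+r-1}^{(i)} = \opev{f}{\pe^{w+r-2}}{a_i} + e_{w+r-1}^{(i)}$, and use the module identity $\opev{gh}{b}{a} = \opev{g}{\opev{h}{b}{a}}{a}$ for operator evaluation. Applied to the constant polynomial $h = \pe^{r-1}$ (whose operator evaluation at $\pe^{w-1}$ equals $\pe^{r-1}\cdot\pe^{w-1} = \pe^{w+r-2}$), this rewrites $\opev{f}{\pe^{w+r-2}}{a_i} = \opev{f \cdot \pe^{r-1}}{\pe^{w-1}}{a_i}$. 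Substituting into the interpolation condition $\mathscr{E}_{Q}(\p)_{a_i} = 0$ from \autoref{prob:intProblemFLRS} and using $\Fq$-linearity of $\opev{Q_r}{\cdot}{a_i}$ then yields
\begin{equation}
\opev{P}{\pe^{w-1}}{a_i} = -\sum_{r=1}^{\intDim} \opev{Q_r}{e_{w+r-1}^{(i)}}{a_i} \quad \text{for all } w \in \set{W}_i.
\end{equation}

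Next, I would take an arbitrary $\Fq$-linear combination of these identities with coefficients $(c_w) \in \Fq^{\set{W}_i}$. Pulling the $c_w$ through the $\Fq$-linear maps $\opev{P}{\cdot}{a_i}$ and $\opev{Q_r}{\cdot}{a_i}$ gives
\begin{equation}
\opev{P}{\sum_{w} c_w \pe^{w-1}}{a_i} = -\sum_{r=1}^{\intDim} \opev{Q_r}{\sum_{w} c_w \, e_{w+r-1}^{(i)}}{a_i},
\end{equation}
so every $(c_w)$ in the kernel of the $\Fq$-linear map $\tilde{\Phi}_i : \Fq^{\set{W}_i} \to \Fqm^{\intDim}$ defined by $(c_w) \mapsto \sum_{w} c_w (e_{w}^{(i)}, \ldots, e_{w+\intDim-1}^{(i)})$ produces an element $\zeta = \sum_{w} c_w \pe^{w-1}$ with $\opev{P}{\zeta}{a_i} = 0$.

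Finally, I would bound the kernel dimension. Grouping the indices by the shift $l \in \{1, \ldots, \foldParShot{i} - \intDim + 1\}$ via $w = (j-1)\foldParShot{i} + l$, the $l$-contribution to $\tilde{\Phi}_i$ is obtained by projecting the $j$-th column of $\mat{E}^{(i)}$ onto its coordinates $l, \ldots, l + \intDim - 1$; since these columns $\Fq$-span a space of dimension $t_i$, each such projection contributes an image of dimension at most $t_i$, hence $\dim_{\Fq} \im(\tilde{\Phi}_i) \leq t_i(\foldParShot{i} - \intDim + 1)$. Together with $\vert \set{W}_i \vert = \lenFLRSshot{i}(\foldParShot{i} - \intDim + 1)$, this yields $\dim_{\Fq} \ker(\tilde{\Phi}_i) \geq (\lenFLRSshot{i} - t_i)(\foldParShot{i} - \intDim + 1)$. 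The map $(c_w) \mapsto \sum_w c_w \pe^{w-1}$ is injective because the exponents $w - 1$ lie in $\{0, \ldots, \lenShot{i} - 1\} \subseteq \{0, \ldots, m-1\}$ and $\{1, \pe, \ldots, \pe^{m-1}\}$ is an $\Fq$-basis of $\Fqm$ (as $\pe$ is primitive, so its minimal polynomial has degree $m$); choosing an $\Fq$-basis of the image then supplies the required $(\lenFLRSshot{i} - t_i)(\foldParShot{i} - \intDim + 1)$ linearly independent $\zeta_j^{(i)}$. The main obstacle is precisely this last step: the sum-rank weight $t_i$ is defined on the \emph{folded} matrix $\mat{E}^{(i)}$, and the bound only materialises once one notices that the sliding-window structure factors through projections of the column space, so that each shift contributes at most $t_i$ new dimensions rather than $\intDim$.
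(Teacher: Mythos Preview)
Your argument is correct, and it takes a genuinely different route from the paper. The paper proceeds constructively: for each block it picks an invertible matrix $\mat{T}_i \in \Fq^{\lenFLRSshot{i}\times\lenFLRSshot{i}}$ that column-reduces $\mat{E}^{(i)}$ so that only $t_i$ columns remain nonzero, sets $\veczeta^{(i)} = \L\,\mat{T}_i$ with $\L$ the locator matrix, and observes that the first $\lenFLRSshot{i}-t_i$ columns of $\R^{(i)}\mat{T}_i$ are uncorrupted; since the rows of $\L$ satisfy the shift relation $\zeta_{l+1,j} = \pe\,\zeta_{l,j}$, each uncorrupted transformed interpolation point directly witnesses $\opev{P}{\zeta_{l,j}}{a_i}=0$, and these $\zeta_{l,j}$ are $\Fq$-linearly independent by construction. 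Your approach instead isolates the identity $\opev{P}{\pe^{w-1}}{a_i} = -\sum_r \opev{Q_r}{e_{w+r-1}^{(i)}}{a_i}$ via the product rule and then runs a rank--nullity argument on the $\Fq$-linear map $\tilde{\Phi}_i$, bounding its image dimension by decomposing over the shift index $l$.

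The paper's version is shorter and explicitly exhibits the roots; your version is a cleaner linear-algebraic abstraction that never needs to name a specific $\mat{T}_i$ and makes transparent exactly where the factor $(\foldParShot{i}-\intDim+1)$ and the rank $t_i$ enter the count. Both hinge on the same structural fact---that the sliding windows within a single column are projections of one and the same column vector of $\mat{E}^{(i)}$---but the paper packages this via column reduction while you package it via the image bound $\dim_{\Fq}\im(\tilde{\Phi}_i)\le t_i(\foldParShot{i}-\intDim+1)$.
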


\begin{proof}
    Since $\rk_q(\mat{E}^{(i)}) = t_i$, there exists a nonsingular matrix $\mat{T}_i \in
    \Fq^{\lenFLRSshot{i} \times \lenFLRSshot{i}}$ such that $\mat{E}^{(i)}\mat{T}_i$ has only $t_i$ nonzero columns for
    every $i \in \{1, \ldots, \shots\}$.
    Without loss of generality assume that these columns are the last ones of $\mat{E}^{(i)}\mat{T}_i$ and define
    $\veczeta^{(i)} = \L \cdot \mat{T}_i$ with $\L \in \Fqm^{\foldParShot{i} \times \lenFLRSshot{i}}$ containing the
    code locators $1, \dots, \pe^{\lenShot{i}-1}$ (cp.~\eqref{eq:defFLRScodeblock}).
    Note that the first $\lenFLRSshot{i} - t_i$ columns of $\R^{(i)} \mat{T}_i = \C^{(i)}\mat{T}_i + \E^{(i)}\mat{T}_i$
    are noncorrupted leading to $(\lenFLRSshot{i} - t_i)(\foldParShot{i} - \intDim + 1)$ noncorrupted interpolation
    points according to~\eqref{eq:intPointDef}.
    Now, for each $1 \leq i \leq \shots$, the first entries of the
    $(\lenFLRSshot{i} - t_i)(\foldParShot{i} - \intDim + 1)$ noncorrupted interpolation points (i.e.\ the top left
    submatrix of size $(\lenFLRSshot{i} - t_i) \times (\foldParShot{i} - \intDim + 1)$ of $\zeta^{(i)}$) are by
    construction both $\Fq$-linearly independent and roots of $P(x)$.
\end{proof}

\begin{theorem}[Decoding Radius] \label{thm:decodingRadius}
    Let $Q(x,y_1,\dots,y_\intDim)$ be a nonzero solution of \autoref{prob:intProblemFLRS}.
    If the error-weight decomposition $\t = (t_1, \dots, t_{\shots})$ satisfies
    \begin{equation}\label{eq:listDecRegionFLRS}
        \sum_{i=1}^{\shots} t_i (\foldParShot{i} - \intDim + 1)
        < \frac{\intDim}{\intDim+1} \left( \sum_{i=1}^{\shots} \lenFLRSshot{i} (\foldParShot{i} - s + 1) - k + 1 \right),
    \end{equation}
    then $P \in \SkewPolyring$ defined in~\eqref{eq:def_univariate_skew_poly} is the zero polynomial, that is for all $x \in \Fqm$
    \begin{equation}\label{eq:rootFindingEquationFLRS}
        P(x)=Q_0(x)+Q_1(x)f(x)+\!\cdots\!+Q_\intDim(x)f(x)\pe^{\intDim-1}=0.
    \end{equation}
\end{theorem}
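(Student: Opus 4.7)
The plan is to show that $P$ has more $\Fq$-linearly independent roots (across distinct nontrivial conjugacy classes of $\Fqm$) than its degree allows, so that $P$ must be zero by the root-bound argument for skew polynomials recalled in~\autoref{sec:preliminaries}. The whole argument hinges on combining the degree budget from~\autoref{prob:intProblemFLRS} with the root count from~\autoref{lem:decConditionFLRS} and then translating the resulting inequality into the decoding condition~\eqref{eq:listDecRegionFLRS}.

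First I would bound $\deg(P)$ from above. Using $\deg(Q_0) \leq D-1$, $\deg(Q_r) \leq D-k$ for $r \geq 1$, and $\deg(f) \leq k-1$, together with the fact that in $\SkewPolyring$ right-multiplication by the nonzero scalar $\pe^{r-1}$ preserves the degree of $f$, every summand of~\eqref{eq:def_univariate_skew_poly} has degree at most $D-1$, hence $\deg(P) \leq D-1$. Next, I would collect the $\Fq$-linearly independent roots of $P$ provided by~\autoref{lem:decConditionFLRS}: for each block $i$ there are $(\lenFLRSshot{i}-t_i)(\foldParShot{i}-\intDim+1)$ such roots evaluated with parameter $a_i$, and since the $a_i$ lie in pairwise distinct nontrivial conjugacy classes, the root-bound from \cite{caruso2019residues} tells us that a nonzero $P$ would satisfy
\begin{equation}
    \deg(P) \geq \sum_{i=1}^{\shots} (\lenFLRSshot{i} - t_i)(\foldParShot{i} - \intDim + 1).
\end{equation}

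The key step is then a clean arithmetic comparison. Write $A \defeq \sum_{i=1}^{\shots}\lenFLRSshot{i}(\foldParShot{i}-\intDim+1)$ and $R \defeq \sum_{i=1}^{\shots}(\lenFLRSshot{i}-t_i)(\foldParShot{i}-\intDim+1)$, so that $A - R = \sum_i t_i(\foldParShot{i}-\intDim+1)$. Condition~\eqref{eq:listDecRegionFLRS} rearranges to $R > \tfrac{A + \intDim(k-1)}{\intDim+1}$, and because $R \in \NN$ this is equivalent to $R \geq \bigl\lceil \tfrac{A + \intDim(k-1) + 1}{\intDim+1} \bigr\rceil = D$, the value from~\autoref{lem:degConstraintForExistence}. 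Chaining the bounds gives $\deg(P) \leq D - 1 < R$, which contradicts the root-bound unless $P$ is identically zero.

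The only subtle point I anticipate is the integer/real conversion between the strict inequality~\eqref{eq:listDecRegionFLRS} and the ceiling defining $D$; I would handle this by writing out $R > M/N \iff NR \geq M+1 \iff R \geq \lceil (M+1)/N \rceil$ for integers $M,N$. A secondary sanity check is that the degree bound in $\SkewPolyring$ still behaves additively under multiplication and is unaffected by right-multiplication by a nonzero scalar, which follows from $x c = \aut(c) x + \der(c)$ and $\aut(\pe^{r-1}) \neq 0$. With these two pieces in place, the conclusion $P \equiv 0$, i.e.~\eqref{eq:rootFindingEquationFLRS}, is immediate.
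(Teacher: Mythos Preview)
Your proposal is correct and follows essentially the same approach as the paper: bound $\deg(P)\le D-1$ via the degree constraints, invoke \autoref{lem:decConditionFLRS} for the root count $R=\sum_i(\lenFLRSshot{i}-t_i)(\foldParShot{i}-\intDim+1)$, and show that~\eqref{eq:listDecRegionFLRS} forces $D\le R$ so the skew root bound from~\cite{caruso2019residues} kills $P$. The only cosmetic difference is that you make the integrality argument $R>\tfrac{A+\intDim(k-1)}{\intDim+1}\Rightarrow R\ge D$ fully explicit via the ceiling, whereas the paper chains the pre-ceiling inequality~\eqref{eq:degConstraintIneq} with $D\le R$ and reads off the equivalence directly; both routes are the same computation.
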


\begin{proof}
    By~\autoref{lem:decConditionFLRS}, there exist elements
    $\zeta_1^{(i)}, \dots, \zeta_{(\lenFLRSshot{i} - t_i)(\foldParShot{i} - \intDim + 1)}^{(i)}$ in $\Fqm$ that are
    $\Fq$-linearly independent for each $i \in \{1, \dots, \shots\}$ such that $\opev{P}{\zeta_j^{(i)}}{a_i} = 0$ for
    $1 \leq i \leq \shots$ and $1 \leq j \leq (\lenFLRSshot{i} - t_i)(\foldParShot{i} - \intDim + 1)$.
    By choosing
    \begin{equation} \label{eq:degConstExceedsBound}
        \degConstraint \leq \sum_{i=1}^{\shots} (\lenFLRSshot{i} - t_i)(\foldParShot{i} - \intDim + 1),
    \end{equation}
    $P(x)$ exceeds the degree bound from~\cite[Prop.~1.3.7]{caruso2019residues} which is possible only if $P(x)=0$.
    Together with inequality~\eqref{eq:degConstraintIneq}, we get
    \begin{align}
        &\phantom{\iff}
        \frac{\sum_{i=1}^{\shots} \lenFLRSshot{i} (\foldParShot{i} - s + 1) + \intDim (k - 1)}{\intDim + 1}
        < \degConstraint
        \leq \sum_{i=1}^{\shots} (\lenFLRSshot{i} - t_i)(\foldParShot{i} - \intDim + 1)
        \\
        &\iff
        \sum_{i=1}^{\shots} \lenFLRSshot{i} (\foldParShot{i} - s + 1) + \intDim (k - 1)
        < (\intDim + 1) \sum_{i=1}^{\shots} (\lenFLRSshot{i} - t_i)(\foldParShot{i} - \intDim + 1)
        \\
        &\iff
        \sum_{i=1}^{\shots} t_i (\foldParShot{i} - \intDim + 1)
        < \frac{\intDim}{\intDim+1} \left( \sum_{i=1}^{\shots} \lenFLRSshot{i} (\foldParShot{i} - s + 1) - k + 1 \right).
    \end{align}
\end{proof}

Note that the left-hand side equals the number of erroneous interpolation points.
Intuitively speaking, a rank error in a block with many rows is worse than one in a block with a small folding parameter
because it creates more corrupted interpolation points.
This is due to the fact that we can interpret a rank error in the $i$-th block as a symbol error over the extension
field $\F_{q^{\foldParShot{i}}}$ corresponding to $\foldParShot{i}$ $\Fq$-errors.

Even though~\autoref{thm:decodingRadius} describes the admissible decoding radius, the derived condition does not only
depend on the sum-rank weight $t$ of the error but also on its weight decomposition $\t = (t_1, \dots, t_{\shots})$.
If we focus on the simpler special case of using the same folding parameter $h \in \NN^{\ast}$ for all blocks,
formula~\eqref{eq:listDecRegionFLRS} simplifies to the same inequality as in~\cite[Thm.~1]{hoermann2022efficient} that
only depends on the error weight $t$.
Namely,
\begin{equation} \label{eq:decoding_region_same_h}
    t < \frac{\intDim}{\intDim+1} \left(\frac{\lenFLRS(\foldPar-\intDim+1)-k+1}{\foldPar-\intDim+1}\right).
\end{equation}
This yields the desirable property that we can characterize all decodable errors simply as the ones lying in a sum-rank
ball.

In this case, we derive the normalized decoding radius $\tau \defeq \frac{t}{\lenFLRS}$
from~\eqref{eq:decoding_region_same_h} as
\begin{equation} \label{eq:normalized_decoding_radius}
    \begin{aligned}
        \tau = \frac{t}{\lenFLRS} &< \frac{\intDim}{\intDim+1} \left( \frac{\lenFLRS (\foldPar - \intDim + 1) - k + 1}
        {\lenFLRS (\foldPar - \intDim + 1)} \right) \\
        &= \frac{\intDim}{\intDim+1} \left( 1 - \frac{\foldPar R - \frac{1}{\lenFLRS}}{\foldPar - \intDim + 1} \right)
        \xrightarrow{\lenFLRS \to \infty}
        \frac{\intDim}{\intDim+1} \left( 1 - \frac{\foldPar}{\foldPar - \intDim + 1} R \right)
    \end{aligned}
\end{equation}
where $R \defeq \frac{k}{\foldPar \lenFLRS}$ denotes the code rate.

In the more general case, we can imagine the set of decodable error patterns as a sum-rank ball with some additional
bulges.
We can derive from~\eqref{eq:normalized_decoding_radius} that all errors with sum-rank weight $t$ satisfying
\begin{equation} \label{eq:decoding_region_worst_case}
    t < \frac{\intDim}{\intDim+1}
    \frac{\sum_{i=1}^{\shots} \lenFLRSshot{i} (\foldParShot{i} - s + 1) - k + 1}{\foldParShot{max} - \intDim + 1}
\end{equation}
for $\foldParShot{max} \defeq \max_{i \in \{1, \dots, \shots\}} \foldParShot{i}$ can be decoded for sure.
This corresponds to the ball.
On the other hand, the buldges represent specific decodable error-weight decompositions having larger sum-rank weight.
However, the worst-case bound
\begin{equation} \label{eq:decoding_region_best_case}
    t < \frac{\intDim}{\intDim+1}
    \frac{\sum_{i=1}^{\shots} \lenFLRSshot{i} (\foldParShot{i} - s + 1) - k + 1}{\foldParShot{min} - \intDim + 1}
\end{equation}
with $\foldParShot{min} \defeq \min_{i \in \{1, \dots, \shots\}} \foldParShot{i}$ shows that the code can definitely
not correct error patterns of weight $t$ exceeding its right-hand side.
\autoref{tab:decodable_error_patterns} contains some examples for codes and the error patterns they can decode.

{
\rowcolors{2}{lightgray}{white}

\begin{table}
    \centering
    \setlength{\tabcolsep}{4pt}
    \caption{Decodable error-weight decompositions for codes of dimension $k = 2$ and decoder parameter $\intDim = 2$.}
    \label{tab:decodable_error_patterns}
    \begin{tabular}{|c|c||M{.9cm}|M{.9cm}|M{.9cm}|M{.9cm}|M{.9cm}||c|c|c|}
        \hline
        $\lenVec$ & $\foldParVec$ & \multicolumn{5}{c||}{number of decodable error patterns} & decoding & minimum \\
        & & $t = 1$ & $t = 2$ & $t = 3$ & $t = 4$ & $t = 5$ & radius$^{\ast}$ & distance \\
        \hline
        \hline
        $(6, 6)$ & $(3, 3)$ & 2 / 2 & 3 / 3 & none & none & none & 2.33 & 4 \\
        & $(2, 2)$ & 2 / 2 & 3 / 3 & none & none & none & 3.33 & 6 \\
        & $(3, 2)$ & 2 / 2 & 2 / 3 & 1 / 3 & none & none & 2.0~$\vert$~4.0 & 5 \\
        \hline
        \hline
        $(6, 6, 6)$ & $(3, 3, 3)$ & 3 / 3 & 6 / 6 & 7 / 7 & none & none & 3.67 & 6 \\
        & $(2, 2, 2)$ & 3 / 3 & 6 / 6 & 10 / 10 & 12 / 12 & 12 / 12 & 5.33 & 9 \\
        & $(3, 3, 2)$ & 3 / 3 & 6 / 6 & 8 / 8 & 5 / 8 & none & 3.33~$\vert$~6.67 & 7 \\
        & $(3, 2, 2)$ & 3 / 3 & 6 / 6 & 9 / 9 & 7 / 10 & 2 / 9 & 3.0~$\vert$~6.0 & 8 \\
        \hline
    \end{tabular}
    \\
    $^{\ast}$~\eqref{eq:decoding_region_same_h} for codes with the same folding parameter for all blocks,
    ~\eqref{eq:decoding_region_worst_case}~$\vert$~\eqref{eq:decoding_region_best_case} in all other cases.
\end{table}
}

\subsection{Root-Finding Step} \label{subsec:root-finding-step}

By~\autoref{thm:decodingRadius}, the message polynomial $f \in \SkewPolyring_{<k}$
satisfies~\eqref{eq:rootFindingEquationFLRS} if~\eqref{eq:listDecRegionFLRS} holds for the error-weight decomposition
$\t = (t_1, \dots, t_{\shots})$.
Therefore, we consider the following root-finding problem.
\begin{problem}[Root-Finding Problem] \label{prob:rootFinding}
    Let $Q \in \MultSkewPolyring$ be a nonzero solution of~\autoref{prob:intProblemFLRS} and let the error-weight
    decomposition $\t = (t_1, \dots, t_{\shots})$ satisfy constraint~\eqref{eq:listDecRegionFLRS}.
    Find all skew polynomials $f \in \SkewPolyring_{<k}$ for which~\eqref{eq:rootFindingEquationFLRS} applies.
\end{problem}
Condition~\eqref{eq:rootFindingEquationFLRS} is equivalent to all coefficients of the polynomial on the left-hand side
of~\eqref{eq:rootFindingEquationFLRS} being zero.
Multiple application of $\aut^{-1}$ to the equations resulting from the coefficients allows to
express~\autoref{prob:rootFinding} as an $\Fqm$-linear system of equations in the unknown
\begin{equation}
    \f \defeq (f_0, \aut^{-1}(f_1), \ldots, \aut^{-k+1}(f_{k-1}))^{\top}.
\end{equation}

As e.g.\ in~\cite{wachter2013decoding,BartzSidorenko_FoldedGabidulin2015_DCC}, we use a basis of the interpolation
problem's solution space instead of choosing only one solution $Q$ of system~\eqref{eq:intSystem}.
This improvement is justified by the following result.
\begin{lemma}[Number of Interpolation Solutions] \label{lem:intSolutionDim}
    For $d_I \defeq \dim_{q^m} (\ker(\S))$ with $\S$ defined in~\eqref{eq:intSystem}, it holds
    \begin{equation}
        d_I \geq \intDim (\degConstraint - k + 1) - \sum_{i=1}^{\shots} t_i (\foldParShot{i} - \intDim + 1).
    \end{equation}
\end{lemma}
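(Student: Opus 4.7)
The goal is to establish $\rk(\S) \leq \degConstraint + \sum_{i=1}^{\shots} t_i(\foldParShot{i}-\intDim+1)$, because then $d_I = \bigl(\degConstraint+\intDim(\degConstraint-k+1)\bigr) - \rk(\S)$ yields the claim by rank--nullity. The strategy is a change of variables that cleanly separates the ``message'' contribution $P$ from the error contribution, followed by a structural rank bound on the error part.

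The first step is to replace the unknowns $(Q_0, Q_1, \dots, Q_\intDim)$ by $(P, Q_1, \dots, Q_\intDim)$, where $P \defeq Q_0 + \sum_{j=1}^{\intDim} Q_j\, f\, \pe^{j-1}$. Since $\deg(Q_j f \pe^{j-1}) \leq (\degConstraint-k)+(k-1)=\degConstraint-1$, the polynomial $P$ still has $\degConstraint$ coefficients, and the transformation $Q_0 = P - \sum_{j=1}^{\intDim} Q_j\, f\, \pe^{j-1}$ is a bijective linear change of variables; hence $\dim\ker(\S)$ is preserved. Writing $r_{w+j-1}^{(i)} = \opev{f}{\pe^{w+j-2}}{a_i}+e_{w+j-1}^{(i)}$, then using the product rule $\opev{Q_j f \pe^{j-1}}{\pe^{w-1}}{a_i}=\opev{Q_j}{\opev{f}{\pe^{w+j-2}}{a_i}}{a_i}$ together with the additivity of $\opev{Q_j}{\cdot}{a_i}$, every row of the interpolation system rewrites as
\begin{equation*}
\opev{P}{\pe^{w-1}}{a_i} + \sum_{j=1}^{\intDim} \opev{Q_j}{e_{w+j-1}^{(i)}}{a_i} = 0.
\end{equation*}
The reshaped interpolation matrix therefore splits horizontally as $\tilde{\S}=[\S_P \mid \S_E]$, with $\degConstraint$ columns in $\S_P$ and $\intDim(\degConstraint-k+1)$ columns in $\S_E$.

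Next I would bound the two block ranks separately. Trivially, $\rk(\S_P)\leq\degConstraint$. Partitioning $\S_E$ by block, $\S_E$ stacks the submatrices $\S_E^{(1)},\dots,\S_E^{(\shots)}$ vertically, so $\rk(\S_E)\leq\sum_i\rk(\S_E^{(i)})$. The key technical claim is then $\rk(\S_E^{(i)})\leq t_i(\foldParShot{i}-\intDim+1)$ for every $i$. The argument I would use: fix an $\Fq$-basis $\vec{\eta}_1,\dots,\vec{\eta}_{t_i}\in\Fqm^{\foldParShot{i}}$ of the $\Fq$-column span of $\mat{E}^{(i)}$, so that every column of $\mat{E}^{(i)}$, and hence every entry $e_{w+j-1}^{(i)}$, is an $\Fq$-linear combination of corresponding entries of the $\vec{\eta}_k$. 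Because $\aut$ fixes $\Fq$ pointwise, the operator $\opexp{a_i}{\cdot}{l}$ is $\Fq$-linear, so each actual row of $\S_E^{(i)}$ is the same $\Fq$-linear combination of $t_i(\foldParShot{i}-\intDim+1)$ \emph{virtual rows} obtained by running the identical construction on the $\vec{\eta}_k$ in place of columns of $\mat{E}^{(i)}$. Consequently, the $\Fqm$-row span of $\S_E^{(i)}$ sits inside an $\Fqm$-subspace of dimension at most $t_i(\foldParShot{i}-\intDim+1)$.

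Putting everything together gives $\rk(\S)=\rk(\tilde{\S}) \leq \degConstraint + \sum_i t_i(\foldParShot{i}-\intDim+1)$, and hence $d_I \geq \intDim(\degConstraint-k+1) - \sum_i t_i(\foldParShot{i}-\intDim+1)$. The main obstacle is the rank bound on $\S_E^{(i)}$: one has to carefully identify the virtual-row construction and exploit that the $\Fq$-linearity of operator evaluation at a fixed parameter $a_i$ lifts an $\Fq$-rank constraint on $\mat{E}^{(i)}$ to an $\Fqm$-rank constraint on $\S_E^{(i)}$. Once that step is secured, the rest is straightforward linear algebra and direct manipulation of the product and additivity rules for generalized operator evaluation.
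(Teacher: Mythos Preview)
Your proof is correct and reaches the same rank bound $\rk_{q^m}(\S)\le \degConstraint+\sum_i t_i(\foldParShot{i}-\intDim+1)$ as the paper, but the route is organized differently. The paper works directly with the columns of $\S$: it observes that the first $\degConstraint$ columns (the block $\opMoore{\degConstraint}{\p_0}{\a}^\top$) already have rank $\degConstraint$, that in the error-free case the remaining columns lie in their $\Fqm$-span, and then invokes \autoref{lem:decConditionFLRS} to argue that at most $\sum_i t_i(\foldParShot{i}-\intDim+1)$ interpolation points are corrupted, which caps the additional rank contributed by the error. You instead perform the change of variables $(Q_0,Q_1,\dots,Q_\intDim)\mapsto(P,Q_1,\dots,Q_\intDim)$ up front, which converts the paper's implicit column-span observation into an explicit block decomposition $\tilde\S=[\S_P\mid\S_E]$, and then bound $\rk(\S_E^{(i)})$ by a direct virtual-row argument exploiting the $\Fq$-linearity of $\opexp{a_i}{\cdot}{l}$ rather than by citing the earlier lemma. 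Your approach is more self-contained and makes the role of the $\Fq$-rank of $\E^{(i)}$ fully explicit; the paper's argument is terser because it reuses the column transformation $\mat{T}_i$ already set up for the decoding-radius proof.
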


\begin{proof}
    The first $\degConstraint$ columns of $\S$ are given as $\left(\opMoore{\degConstraint}{\p_0}{\a}\right)^{\top}$.
    Since the $\shots$ blocks of $\p_0$ consist of pairwise distinct powers of $\pe$, the elements of a single block are
    $\Fq$-linearly independent.
    Hence $\rk_{q^m}\left(\opMoore{\degConstraint}{\p_0}{\a}\right)
    = \min (\degConstraint, \sum_{i=1}^{\shots} \lenFLRSshot{i} (\foldParShot{i} - \intDim + 1)) = \degConstraint$,
    where the last equality follows from equation~\eqref{eq:degConstExceedsBound}.
    In the absence of an error, the remaining columns consist of linear combinations of the first $\degConstraint$ ones
    and do not increase the rank.
    If an error $\E$ with $\SumRankWeight(\E) = t$ is introduced, at most
    $\sum_{i=1}^{\shots} t_i (\foldParShot{i} - \intDim + 1)$ interpolation points are corrupted according
    to~\autoref{lem:decConditionFLRS}.
    As a consequence, these columns can increase the rank of $\S$ by at most
    $\sum_{i=1}^{\shots} t_i (\foldParShot{i} - \intDim + 1)$.
    Thus, $\rk_{q^m}(\S) \leq \degConstraint + \sum_{i=1}^{\shots} t_i (\foldParShot{i} - \intDim + 1)$ and the
    rank-nullity theorem directly yields
    \begin{align}
        d_I &\defeq \dim_{q^m} (\ker(\S)) = \degConstraint (\intDim + 1) - \intDim (k-1) - \rk_{q^m}(\S) \\
        &\geq \degConstraint (\intDim + 1) - \intDim (k-1) -
        \left( \degConstraint + \sum_{i=1}^{\shots} t_i (\foldParShot{i} - \intDim + 1) \right) \\
        &= \intDim (\degConstraint - k + 1) - \sum_{i=1}^{\shots} t_i (\foldParShot{i} - \intDim + 1).
    \end{align}
\end{proof}

Again we get a simpler bound when we consider the same folding parameter $h \in \NN^{\ast}$ for each block
(cp.~\cite[Lemma~4]{hoermann2022efficient}):
\begin{equation}
    d_I \geq \intDim (\degConstraint - k + 1) - t (\foldPar - \intDim + 1).
\end{equation}

Let now $Q^{(1)}, \ldots, Q^{(d_I)} \in \MultSkewPolyring$ form a basis of the solution space
of~\autoref{prob:intProblemFLRS} and denote the coefficients of $Q^{(u)}$ by $q_{i,j}^{(u)}$ for all $1 \leq u \leq d_I$
(cp.~\eqref{eq:Qcoefficients}).
In other words, we write for all $u \in \{1, \ldots, d_I\}$
\begin{equation}
    Q^{(u)}(x, y_1, \ldots, y_\shots)
    = \sum_{j=0}^{\degConstraint-1} q_{0, j}^{(u)} x^j
    + \Bigg( \sum_{j=0}^{\degConstraint-k} q_{1, j}^{(u)} x^j \Bigg)y_1 + \ldots
    + \Bigg( \sum_{j=0}^{\degConstraint-k} q_{\intDim, j}^{(u)} x^j \Bigg)y_\intDim.
\end{equation}
Define further the ordinary polynomials
\begin{equation}\label{eq:rf_polys}
    B_j^{(u)}(x) = q_{1, j}^{(u)} + q_{2, j}^{(u)} x + \cdots + q_{\intDim, j}^{(u)} x^{\intDim-1} \in \Polyring
\end{equation}
for $j \in \{0, \ldots, \degConstraint-k\}$ and $u \in \{1, \ldots, d_I\}$ as well as the additional notations
\vspace*{-5pt}
\begin{align}
    \b_{j,a} &= \left( \aut^{-a}\left(B_j^{(1)}(\aut^{a}(\pe))\right), \ldots,
    \aut^{-a}\left(B_j^{(d_I)}(\aut^{a}(\pe))\right) \right)^{\top} \\
    \text{and} \qquad
    \q_a &= \left( \aut^{-a}\left(q_{0,a}^{(1)}\right), \ldots, \aut^{-a}\left(q_{0,a}^{(d_I)}\right) \right)^{\top}
\end{align}
for $0 \leq j \leq \degConstraint-k$ and $0 \leq a \leq \degConstraint-1$.
Then the root-finding system is given as
\begin{gather}\label{eq:improvedRootFindingSystemFLRS}
    \B \cdot \f = -\q
    \\
    \text{with }
    \B \defeq
    \begin{pmatrix}
        \b_{0,0} & & & \\
        \b_{1,1} & \b_{0,1} & & \\[-3pt]
        \vdots & \b_{1,2} & \ddots & \\[-3pt]
        \b_{\degConstraint-k,\degConstraint-k} & \vdots & & \b_{0,k-1} \\
        & \b_{\degConstraint-k,\degConstraint-k+1} & & \b_{1,k} \\
        & & \ddots & \vdots \\[-3pt]
        & & & \b_{\degConstraint-k,\degConstraint-1}
    \end{pmatrix}
    \text{ and }
    \q \defeq
    \begin{pmatrix}
        \q_0 \\
        \vdots \\
        \q_{\degConstraint-1}
    \end{pmatrix}.
    \notag
\end{gather}
The root-finding system~\eqref{eq:improvedRootFindingSystemFLRS} can be solved by back substitution in at most
$\OCompl{k^2}$ operations in $\Fqm$ since we can focus on (at most) $k$ nontrivial equations from different blocks of
$d_I$ rows.
Observe that the transmitted message polynomial $f \in \SkewPolyring_{<k}$ is always a solution of~\eqref{eq:improvedRootFindingSystemFLRS}
as long as $\t = (t_1, \dots, t_{\shots})$ satisfies the decoding radius in~\eqref{eq:listDecRegionFLRS}.

\begin{example}
    Let us set up the root-finding problem for the \ac{FLRS} code considered in \autoref{ex:int_points}
    and~\autoref{ex:int_matrix}.
    We obtain
    \begin{align}
        P(x) &= \underbrace{q_{0, 0} + q_{0, 1}x + q_{0, 2}x^2}_{Q_0(x)}
        + \underbrace{(q_{1, 0} + q_{1, 1}x)(f_0 + f_1 x)}_{Q_1(x)f(x)}
        \\
        &\phantom{=} + \underbrace{(q_{2,0} + q_{2, 1}x)(f_0\pe + f_1\aut(\pe)x)}_{Q_2(x)f(x)\pe}
        \\
        \text{with} \quad Q_1(x)f(x) &= q_{1, 0}f_0 + \left( q_{1, 0}f_1 + q_{1, 1}\aut(f_0) \right)x
        + q_{1, 1}\aut(f_1) x^2
        \\
        \text{and} \quad Q_2(x)f(x)\pe &= q_{2, 0}f_0\pe
        + \left( q_{2, 0}f_1\aut(\pe) + q_{2, 1}\aut(f_0)\aut(\pe) \right)x + q_{2, 1}\aut(f_1)\aut^2(\pe) x^2.
    \end{align}
    Now we define $B_j(x) \defeq q_{1, j} + q_{2, j}x \in \Polyring$ for $j \in \{0, 1, 2\}$ and write the coefficients
    of $P(x) = p_0 + p_1 x + p_2 x^2$ as follows:
    \begin{align}
        p_0 &= q_{0, 0} + f_0(q_{1, 0} + q_{2, 0} \pe)
        = q_{0, 0} + f_0 B_0(\pe) \\
        p_1 &= q_{0, 1} + f_1(q_{1, 0}+ q_{2, 0} \aut(\pe)) + \aut(f_0)(q_{1, 1} + q_{2, 1} \aut(\pe)) \\
        &= q_{0, 1} + f_1 B_0(\aut(\pe)) + \aut(f_0) B_1(\aut(\pe)) \\
        p_2 &= q_{0, 2} + \aut(f_1)(q_{1, 1} + q_{2, 1} \aut^2(\pe))
        = q_{0, 2} + \aut(f_1) B_1(\aut^2(\pe))
    \end{align}
    Because $p_i = 0$ for all $i \in \{0, 1, 2\}$, application of $\aut^{-i}$ to the equation belonging to $p_i$ does
    not change the solution space of the above system of equations.
    Hence, we can equivalently solve the root-finding system
    \begin{equation}
        \begin{pmatrix}
            B_0(\pe) & \\
            \aut^{-1}\left( B_1(\aut(\pe)) \right) & \aut^{-1}\left( B_0(\aut(\pe)) \right) \\
            & \aut^{-2}\left( B_1(\aut^2(\pe)) \right) \\
        \end{pmatrix}
        \cdot
        \begin{pmatrix}
            f_0 \\
            \aut^{-1}(f_1)
        \end{pmatrix}
        = -
        \begin{pmatrix}
            q_{0, 0} \\
            \aut^{-1}(q_{0, 1}) \\
            \aut^{-2}(q_{0, 2})
        \end{pmatrix}.
    \end{equation}
\end{example}

\subsection{List and Probabilistic Unique Decoding} \label{subsec:list-and-probabilistic-decoding}

The interpolation-based scheme from above is summarized in~\autoref{alg:decoder} and can be used for list decoding or as
a probabilistic unique decoder.
In the first case, all solutions of~\eqref{eq:improvedRootFindingSystemFLRS} are returned as a list of candidate message
polynomials.
Note that this list contains all message polynomials corresponding to codewords having sum-rank distance less than the
decoding radius from the actually sent codeword.
However, there may also be some candidates in the list that lie outside of the sum-rank ball around the sent codeword.
In the second case, the decoder returns the unique solution of~\eqref{eq:improvedRootFindingSystemFLRS} or declares a
decoding failure if there are multiple candidates.
Let us investigate the usage of our decoding scheme as list decoder and bound its output size as follows:

\begin{algorithm}[ht!]
    \caption{\algoname{Interpolation-Based Decoding of \ac{FLRS} Codes}}
    \label{alg:decoder}
    \begin{algorithmic}[1]
        \State{Choose $s \leq \min_{i \in \{1, \dots, \shots\}} \foldParShot{i}$ and $\degConstraint$ according to~\eqref{eq:deg_constraint}}
        \State{Compute the sets $\set{P}_{1}, \dots, \set{P}_{\shots}$ of interpolation points according to~\eqref{eq:intPointDef}}
        \State{Find a basis $(Q^{(1)}, \dots, Q^{(d_I)})$ of the solution space of~\autoref{prob:intProblemFLRS}}
        \State{Find a basis $(f^{(1)}, \dots, f^{(d_{RF})})$ of the solution space of~\autoref{prob:rootFinding} with respect to all \newline skew polynomials $Q^{(1)}, \dots, Q^{(d_I)}$}
    \end{algorithmic}
\end{algorithm}

\begin{lemma}[Worst-Case List Size] \label{lem:worstCaseListSize}
    The list size is upper bounded by $q^{m(\intDim-1)}$.
\end{lemma}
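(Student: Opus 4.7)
The plan is to identify the list returned by \autoref{alg:decoder} with the affine $\Fqm$-solution set of the linear system $\B \f = -\q$ from~\eqref{eq:improvedRootFindingSystemFLRS} and then upper bound its $\Fqm$-dimension by $\intDim - 1$. An empty list trivially satisfies the claim, and any nonempty affine subspace of $\Fqm^k$ of dimension $d_{RF} := k - \rk_{q^m}(\B)$ has cardinality $q^{m d_{RF}}$, so it suffices to prove $\rk_{q^m}(\B) \geq k - \intDim + 1$.

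For this rank bound, it is enough to exhibit a single nonzero basis element $Q^{(u)} \in (Q^{(1)}, \dots, Q^{(d_I)})$ whose associated sub-matrix $\B^{(u)}$ (the scalar part of $\B$ obtained by retaining only the $u$-indexed component of each $\b_{j,a}$) already attains rank at least $k - \intDim + 1$; adding back the equations coming from the other basis elements can only keep or increase the rank of $\B$. Let $j^{(u)}$ be the smallest index with $B_{j^{(u)}}^{(u)}(x) \neq 0$, where $B_j^{(u)}$ is defined in~\eqref{eq:rf_polys}. Such a pair $(u, j^{(u)})$ must exist: otherwise $Q_1^{(u)} = \dots = Q_\intDim^{(u)} = 0$ for some $u$, and the interpolation constraint would force $Q_0^{(u)}$ to vanish at all $\sum_i \lenFLRSshot{i}(\foldParShot{i} - \intDim + 1) \geq \degConstraint$ first coordinates of the interpolation points; by the skew-polynomial degree bound invoked in the proof of~\autoref{lem:decConditionFLRS} this contradicts $\deg(Q_0^{(u)}) < \degConstraint$, so $Q^{(u)} = 0$, a contradiction.

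Next I would exploit the banded shape of $\B^{(u)}$: rows $a = j^{(u)}, j^{(u)}+1, \dots, j^{(u)}+k-1$ intersected with columns $0, 1, \dots, k-1$ form a triangular $k \times k$ submatrix whose diagonal carries the entries $\aut^{-a}\bigl(B_{j^{(u)}}^{(u)}(\aut^a(\pe))\bigr)$. Since $B_{j^{(u)}}^{(u)}(x)$ is a nonzero ordinary polynomial of degree strictly less than $\intDim$, it has at most $\intDim - 1$ roots in $\Fqm$. Because $\pe$ is primitive in $\Fqm$ (and $k \leq m$, which one checks from the parameter ranges of~\autoref{def:flrs-codes}), the values $\aut^a(\pe) = \pe^{q^a}$ are pairwise distinct, so at most $\intDim - 1$ of the diagonal entries vanish. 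The remaining $k - \intDim + 1$ nonzero diagonal entries span an invertible $(k - \intDim + 1) \times (k - \intDim + 1)$ lower-triangular submatrix, which delivers the desired rank bound and hence $|\mathrm{list}| \leq q^{m(\intDim-1)}$.

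The principal technical obstacle is the case $j^{(u)} > 0$: then the top $j^{(u)}$ scalar rows of $\B^{(u)}$ vanish identically and nonzero polynomials $B_{j'}^{(u)}$ with $j' > j^{(u)}$ may populate entries off the candidate diagonal. Verifying rigorously that the banded structure still yields a triangular submatrix with the claimed diagonal (or, equivalently, that one can always select a basis element with $j^{(u)} = 0$ without loss of generality) requires careful bookkeeping, but once this structural fact is in place the root-counting argument closes the proof.
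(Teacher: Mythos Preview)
Your approach is essentially the paper's: bound $\rk_{q^m}(\B)$ from below by extracting a lower-triangular $k\times k$ block whose diagonal entries are evaluations of a single nonzero ordinary polynomial of degree at most $\intDim-1$, then count roots. The paper simply takes $u=1$ and $j=0$, tacitly assuming $B_0^{(1)}\neq 0$; your variant, which locates the minimal $j^{(u)}$ with $B_{j^{(u)}}^{(u)}\neq 0$ and justifies its existence via the interpolation constraints on $Q_0^{(u)}$, is actually more careful than the paper's own argument on this point.

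Your self-identified ``principal technical obstacle'' is not one. When $j^{(u)}>0$, the submatrix of $\B^{(u)}$ on rows $j^{(u)},\dots,j^{(u)}+k-1$ and all $k$ columns is still lower triangular: the entry in row $j^{(u)}+c'$ and column $c$ is (up to an application of $\aut^{-a}$) the evaluation of $B^{(u)}_{j^{(u)}+c'-c}$, and for $c>c'$ this index is below $j^{(u)}$, hence the polynomial vanishes by minimality of $j^{(u)}$. Contributions from $B_{j'}^{(u)}$ with $j'>j^{(u)}$ land strictly below the diagonal and do no harm. So the bookkeeping you flag as delicate is in fact routine.

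One caution: the claim that ``$k\le m$, which one checks from the parameter ranges of \autoref{def:flrs-codes}'' is not correct in general---only $k\le n$ is required, and $n$ may well exceed $m$ when $\shots>1$. What the root-counting step actually needs is that the $k$ points $\aut^{j^{(u)}}(\pe),\dots,\aut^{j^{(u)}+k-1}(\pe)$ are pairwise distinct, i.e.\ that $k$ does not exceed the order of $\aut$. The paper's proof silently relies on the same fact (it speaks of ``at most $\intDim-1$ roots'' without checking that the $k$ conjugates are distinct), so this is not a discrepancy between your argument and theirs; but you should either state it as a hypothesis or note that it holds in the standard setting where $\aut$ generates the Galois group and $k\le m$, rather than asserting it follows from the definition.
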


\begin{proof}
    With $d_{RF} \defeq \dim_{q^m}(\ker(\B))$, the list size equals $q^{m \cdot d_{RF}}$ and $d_{RF} = k - \rk_{q^m}(\B)$
    due to the rank-nullity theorem.
    Let $\B_{\triangle}$ denote the lower triangular matrix consisting of the first $d_I k$ rows of $\B$.
    Then, $\rk_{q^m}(\B) \geq \rk_{q^m}(\B_{\triangle})$ and the latter is lower bounded by the number of nonzero vectors
    on its diagonal.
    These vectors are $\b_{0,0}, \ldots, \b_{0,k-1}$ and we focus on their first components while neglecting application
    of $\aut$.
    Observe that each of them is given as the evaluation of $B_0^{(1)}$ at another conjugate of $\pe$.
    Since $B_0^{(1)}$ can have at most $\intDim - 1$ roots, it follows that at most $\intDim - 1$ of the vectors on the
    diagonal can be zero.
    Thus, $\rk_{q^m}(\B) \geq k - \intDim + 1$ and, as a consequence, $d_{RF} \leq \intDim - 1$.
\end{proof}

Note that, despite the exponential worst-case list size, an $\Fqm$-basis of the list can be found in polynomial time.
\autoref{thm:listDecoding} summarizes the results for list decoding of \ac{FLRS} codes.

\begin{theorem}[List Decoding] \label{thm:listDecoding}
    Consider an \ac{FLRS} code $\foldedLinRS{\pe, \a, \foldParVec; \lenFLRSVec, k}$ and a codeword
    $\C$ that is transmitted over a sum-rank channel.
    Assume that the error has weight $t$ and that its weight decomposition
    $\t = (t_1, \dots, t_{\shots})$ satisfies
    \begin{equation}
       \sum_{i=1}^{\shots} t_i (\foldParShot{i} - \intDim + 1)
        < \frac{\intDim}{\intDim+1} \left( \sum_{i=1}^{\shots} \lenFLRSshot{i} (\foldParShot{i} - s + 1) - k + 1 \right)
    \end{equation}
    for an interpolation parameter $1 \leq \intDim \leq \min_{i \in \{1, \dots, \shots\}} \foldParShot{i}$.
    Then, a basis of an at most $(\intDim - 1)$-dimensional $\Fqm$-vector space that contains candidate message polynomials
    satisfying~\eqref{eq:rootFindingEquationFLRS} can be obtained in at most $\OCompl{\intDim \len^2}$ operations in $\Fqm$.
\end{theorem}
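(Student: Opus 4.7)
The plan is to assemble the statement from the three results already established in this section. The theorem claims correctness (the transmitted $f$ is in the returned space), a dimension bound (at most $\intDim - 1$ over $\Fqm$), and a complexity bound (at most $\OCompl{\intDim \len^2}$ operations in $\Fqm$). Each of these corresponds to a single earlier result, so the proof is primarily a bookkeeping step — the main task is to verify that the hypotheses of each invoked result are satisfied by the hypotheses of this theorem.

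First I would argue correctness. The hypothesis on the weight decomposition $\t$ is exactly inequality~\eqref{eq:listDecRegionFLRS}, so~\autoref{thm:decodingRadius} applies and gives that the transmitted message polynomial $f$ satisfies~\eqref{eq:rootFindingEquationFLRS} for any nonzero solution $Q$ of~\autoref{prob:intProblemFLRS}. Since every basis element $Q^{(u)}$ from \autoref{alg:decoder} is such a solution, $f$ lies in the $\Fqm$-solution space of~\eqref{eq:improvedRootFindingSystemFLRS}, which is exactly the output space of the root-finding step. Consequently the algorithm's output space contains $f$, hence in particular every message polynomial producing a codeword within the claimed sum-rank decoding radius.

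Second I would import the dimension bound. The output space coincides with $\ker(\B)$, whose $\Fqm$-dimension is bounded above by $\intDim - 1$ by~\autoref{lem:worstCaseListSize}. This directly gives the asserted upper bound on the vector-space dimension.

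Third I would assemble the complexity. The interpolation step is the dominating contribution: by the discussion following~\autoref{prob:intProblemFLRS}, a basis of the solution space can be computed by skew Kötter interpolation in $\OCompl{\intDim \len^2}$ operations in $\Fqm$ (here $\len$ is the length of the unfolded code, which upper bounds the number of interpolation constraints). The root-finding step amounts to back substitution in the block-triangular system~\eqref{eq:improvedRootFindingSystemFLRS}, costing only $\OCompl{k^2}$ operations as noted in~\autoref{subsec:root-finding-step}, and is absorbed by the interpolation cost since $k \leq \len$. The only subtlety I expect is ensuring that the choice $\intDim \leq \min_i \foldParShot{i}$ made in the theorem is consistent with both~\eqref{eq:intDimConstraint} and with the hypotheses of~\autoref{lem:worstCaseListSize}; this is immediate from the construction of the interpolation points, so no further obstacle arises.
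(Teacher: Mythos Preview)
Your proposal is correct and matches the paper's approach exactly: the paper presents \autoref{thm:listDecoding} explicitly as a summary of the preceding results (\autoref{thm:decodingRadius} for correctness, \autoref{lem:worstCaseListSize} for the dimension bound, and the complexity discussion following \autoref{prob:intProblemFLRS} and in \autoref{subsec:root-finding-step}), without giving a separate proof. Your assembly of these three ingredients, including the check that $k \leq \len$ absorbs the root-finding cost, is precisely what the paper intends.
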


Recall that the decoding radius can be described by~\eqref{eq:decoding_region_same_h} if the same folding parameter
$\foldPar$ is used for all blocks.

A different concept is probabilistic unique decoding where the decoder either returns a unique solution or declares a
failure.
In our setting, a failure occurs exactly when the root-finding matrix $\B$ is rank-deficient.
Similar to~\cite{BartzSidorenko_FoldedGabidulin2015_DCC}, we now derive a heuristic upper bound on the probability
$\mathbb{P}\left( \rk_{q^m}(\B) < k \right)$.

\begin{lemma}[Decoding Failure Probability] \label{lem:failureProbabilityBound}
    Assume that the coefficients of the polynomials $B_0^{(u)}(x) \in \Polyring$ from~\eqref{eq:rf_polys} for
    $u \in \{1, \ldots, d_I\}$ are independent and have a uniform distribution among $\Fqm$.
    Then it holds that
    \begin{equation}\label{eq:heuristic_upper_bound}
        \mathbb{P}\left( \rk_{q^m}(\B) < k \right) \lesssim k \cdot \left( \frac{k}{q^m} \right)^{d_I},
    \end{equation}
    where $\lesssim$ indicates that the bound is a heuristic approximation.
\end{lemma}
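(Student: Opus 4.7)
\medskip
\noindent\textbf{Proof plan.}
The plan is to exploit the block--lower--triangular shape of $\B$ in order to translate the event $\rk_{q^m}(\B) < k$ into a much simpler event concerning only the $k$ ``diagonal'' vectors $\b_{0,0}, \b_{0,1}, \ldots, \b_{0,k-1}$, and then to estimate the probability of that simpler event using the assumption on the coefficients of the $B_0^{(u)}$.

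\medskip
\noindent\textbf{Step 1 (reduction to vanishing of diagonal vectors).}
First I would note that $\B$ is block--lower--triangular with diagonal blocks $\b_{0,0}, \ldots, \b_{0,k-1}$, i.e.\ column $j$ (for $0\le j\le k-1$) has its top nonzero block $\b_{0,j}$ in block-row $j$ and zeros in block-rows $0, \ldots, j-1$. Suppose $\rk_{q^m}(\B) < k$; then there exists a nonzero $\f = (f_0, \ldots, f_{k-1})^{\top} \in \ker(\B)$. Let $\ell \defeq \min\{j : f_j \neq 0\}$. Reading the $\ell$-th block-row of $\B\f = \0$, all terms $f_0 \b_{\ell,\ell}, \ldots, f_{\ell-1} \b_{1,\ell}$ vanish and the only surviving contribution is $f_\ell\, \b_{0,\ell}$, so $f_\ell \b_{0,\ell} = \0$. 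Since $f_\ell \neq 0$, this forces $\b_{0,\ell} = \0$. Hence
\begin{equation}
    \{\rk_{q^m}(\B) < k\} \;\subseteq\; \bigcup_{\ell=0}^{k-1} \{\b_{0,\ell} = \0\}.
\end{equation}

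\medskip
\noindent\textbf{Step 2 (probability of a single diagonal block vanishing).}
Next I would unpack $\b_{0,\ell} = \0$. By definition of $\b_{0,\ell}$ and since $\aut^{-\ell}$ is an automorphism, this is the event
\begin{equation}
    B_0^{(u)}(\aut^{\ell}(\pe)) = 0 \quad \text{for every } u \in \{1, \ldots, d_I\}.
\end{equation}
The polynomial $B_0^{(u)}(x) = q_{1,0}^{(u)} + q_{2,0}^{(u)} x + \cdots + q_{\intDim,0}^{(u)} x^{\intDim-1}$ depends only on the coefficients indexed by $u$, which are disjoint from those of $B_0^{(u')}$ for $u' \neq u$. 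Thus, under the lemma's assumption that these coefficients are independent and uniform on $\Fqm$, the variables $B_0^{(1)}(\aut^\ell(\pe)), \ldots, B_0^{(d_I)}(\aut^\ell(\pe))$ are mutually independent. For each fixed $u$, the evaluation at $\aut^\ell(\pe)$ is an affine $\Fqm$-linear map of the coefficient vector whose constant term has coefficient $1$ (namely $q_{1,0}^{(u)}$), so the image is uniformly distributed on $\Fqm$. Consequently, each individual evaluation equals $0$ with probability $1/q^m$, and by independence across $u$,
\begin{equation}
    \mathbb{P}(\b_{0,\ell} = \0) \;=\; \left( \frac{1}{q^m} \right)^{d_I}.
\end{equation}

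\medskip
\noindent\textbf{Step 3 (assembly via union bound).}
Finally, combining the two steps by a union bound over $\ell \in \{0, \ldots, k-1\}$ yields
\begin{equation}
    \mathbb{P}(\rk_{q^m}(\B) < k) \;\leq\; \sum_{\ell=0}^{k-1} \mathbb{P}(\b_{0,\ell} = \0) \;\leq\; k \cdot \left( \frac{1}{q^m} \right)^{d_I} \;\leq\; k \cdot \left( \frac{k}{q^m} \right)^{d_I},
\end{equation}
which is exactly the heuristic bound of the lemma (in fact a slightly stronger form). The main obstacle is not the algebra but the probabilistic modeling: the vectors $\b_{0,\ell}$ for different $\ell$ are \emph{not} independent (they involve the same polynomials $B_0^{(u)}$ evaluated at different conjugates of $\pe$), so the dependency structure prevents a multiplicative bound across $\ell$. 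The $\lesssim$ symbol reflects precisely that the uniform/independence assumption on the coefficients of the $B_0^{(u)}$ is a heuristic that replaces the true, highly constrained distribution inherited from the interpolation system~\eqref{eq:intSystem}.
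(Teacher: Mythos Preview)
Your proof is correct and follows the same three-step skeleton as the paper: reduce to the vanishing of the diagonal blocks $\b_{0,0},\ldots,\b_{0,k-1}$ via the block-lower-triangular structure, bound the probability of each single vanishing, and combine with a union bound. Your Step~1 is argued slightly differently (you work directly with a nonzero kernel vector rather than with the truncated matrix $\B_\triangle$), but it is equivalent to the paper's reduction.

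The genuine difference is in Step~2. The paper interprets, for each fixed $u$, the vector $\bigl(B_0^{(u)}(\aut^0(\pe)),\ldots,B_0^{(u)}(\aut^{k-1}(\pe))\bigr)$ as a random codeword of a length-$k$, dimension-$s$ Reed--Solomon code and invokes a weight-distribution approximation to conclude that the probability this codeword is \emph{not} of full Hamming weight is $\lesssim k/q^m$; this then bounds the probability that any single coordinate vanishes by $k/q^m$, giving $(k/q^m)^{d_I}$ per diagonal block. You instead observe directly that, under the uniform-coefficient assumption, each evaluation $B_0^{(u)}(\aut^\ell(\pe))$ is exactly uniform on $\Fqm$ (since the coefficient of $q_{1,0}^{(u)}$ is $1$), so $\mathbb{P}(\b_{0,\ell}=\0) = (1/q^m)^{d_I}$ with no approximation. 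Your route is more elementary and yields the sharper bound $k\cdot(1/q^m)^{d_I}$; the paper's RS detour introduces the extra factor $k^{d_I}$ and an additional heuristic step (the weight-distribution approximation), but makes explicit the RS structure linking the different $\b_{0,\ell}$ across $\ell$ for the same $u$---a structure that could in principle be exploited for a finer joint analysis, though the paper ultimately does not do so.
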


\begin{proof}
    Let $\B_{\triangle}$ denote the matrix consisting of the first $d_I k$ rows of $\B$ as in the proof
    of~\autoref{lem:worstCaseListSize}.
    Note that
    \begin{equation}
        \mathbb{P}\left( \rk_{q^m}(\B) < k \right) \leq \mathbb{P}\left( \rk_{q^m}(\B_{\triangle}) < k \right)
    \end{equation}
    holds and we can focus on finding an upper bound for the right-hand side.
    As lower triangular matrix, $\B_{\triangle}$ has rank $k$ if and only if $\b_{0, 0}, \ldots, \b_{0, k-1}$ are
    nonzero.
    Instead of these vectors, we investigate
    $\tilde{\b}_{0,a} = \left( B_0^{(1)}(\aut^{a}(\pe)), \ldots, B_0^{(d_I)}(\aut^{a}(\pe)) \right)^{\top}$ for
    $a \in \{0, \ldots, k-1\}$ because application of $\aut$ can be neglected.
    Following ideas from~\cite[Lemma 8]{BartzSidorenko_FoldedGabidulin2015_DCC}, we can now interpret the vector
    consisting of the $u$-th entries of $\tilde{\b}_{0,0}, \ldots, \tilde{\b}_{0,k-1}$ for each $1 \leq u \leq d_I$ as
    a codeword of a Reed--Solomon code $\mycode{C}_{RS}$ of length $k$ and dimension $s$.
    These $d_I$ codewords have a uniform distribution with respect to the codebook of $\mycode{C}_{RS}$ due to our
    assumption on the distribution of the polynomial coefficients.
    Thanks to~\cite[eq. (1)]{cheung1989weightDistribution}, we can approximate the probability that a random codeword
    has full weight $k$ as
    \begin{equation}
     \frac{|\{ \c \in \mycode{C}_{RS} : \HammingWeight(\c) = k \}|}{|\mycode{C}_{RS}|} \approx
     \frac{|\{ \v \in \Fqm^k : \HammingWeight(\v) = k \}|}{|\Fqm^k|} = \left( 1 - \frac{1}{q^m} \right)^k.
    \end{equation}
    Let us fix an $a \in \{0, \ldots, k-1\}$ and consider $\tilde{\b}_{0,a}$.
    Then, any full-weight codeword induces a nonzero entry in $\tilde{\b}_{0,a}$ and conversely, the probability that
    one entry of $\tilde{\b}_{0,a}$ is zero is upper bounded by
    \begin{equation} \label{eq:failureProbProofOneEntryZero}
     1 - \left( 1 - \frac{1}{q^m} \right)^k < \frac{k}{q^m},
    \end{equation}
    where the estimation uses the binomial theorem.
    Due to our independence assumption for the coefficients of $B_0^{(u)}(x)$ for $u \in \{1, \ldots, d_I\}$, the
    entries of $\tilde{\b}_{0,a}$ are also independent and the probability that the whole vector is zero for a fixed
    $a$ is bounded by the $d_I$-th power of the right-hand side of~\eqref{eq:failureProbProofOneEntryZero}.
    Finally, the union bound deals with the probability that at least one vector is zero and yields
    \begin{equation}
     \mathbb{P}\left( \bigcup_{a=0}^{k-1} (\tilde{\b}_{0,a} = \0) \right)
     \leq \sum_{a=0}^{k-1} \mathbb{P}\left( \tilde{\b}_{0,a} = \0 \right)
     \lesssim \sum_{a=0}^{k-1} \left( \frac{k}{q^m} \right)^{d_I}
     = k \cdot \left( \frac{k}{q^m} \right)^{d_I}.
    \end{equation}
\end{proof}

\autoref{subsec:simulation_results} presents results that empirically verify the heuristic upper bound by Monte Carlo
simulations.
Moreover, further simulations show that the assumption that the coefficients of $B_0^{(u)}(x)$ for
$u \in \{1, \ldots, d_I\}$ are uniformly distributed is reasonable.

Let us now introduce a threshold parameter $\mu \in \NN^{\ast}$ and enforce $d_I \geq \mu$ by adapting the proof
of~\autoref{lem:degConstraintForExistence} which results in the new degree constraint
\begin{equation} \label{eq:new_deg_constraint}
    \degConstraint =
    \left\lceil \frac{\sum_{i=1}^{\shots} \lenFLRSshot{i} (\foldParShot{i} - \intDim + 1) + \intDim (k - 1) + \mu}
    {\intDim + 1}
    \right\rceil.
\end{equation}
We incorporate this threshold into the results we have shown so far and obtain \autoref{thm:probabilisticUniqueDecoding}
which provides a summary for probabilistic unique decoding of \ac{FLRS} codes.

\begin{theorem}[Probabilistic Unique Decoding] \label{thm:probabilisticUniqueDecoding}
    For an interpolation parameter $1 \leq \intDim \leq \min_{i \in \{1, \dots, \shots\}} \foldParShot{i}$ and a
    dimension threshold $\mu \in \NN^{\ast}$, transmit a codeword $\C$ of an \ac{FLRS} code $\foldedLinRS{\pe, \a, \foldParVec; \lenFLRSVec, k}$ over a sum-rank channel.
    Assume that the error weight $t$ has a weight decomposition $\t = (t_1, \dots, t_{\shots})$ satisfying
    \begin{equation} \label{eq:prob_unique_decoding_radius}
        \sum_{i=1}^{\shots} t_i (\foldParShot{i} - \intDim + 1)
        \leq \frac{\intDim}{\intDim+1}
        \left( \sum_{i=1}^{\shots} \lenFLRSshot{i} (\foldParShot{i} - \intDim + 1) - k + 1 \right)
        - \frac{\mu}{\intDim + 1}
    \end{equation}
    and that the coefficients of the polynomials $B_0^{(u)}(x)$ for $u \in \{1, \ldots, \mu\}$ are independent and uniformly distributed among $\Fqm$.
    Then, $\C$ can be uniquely recovered with complexity $\OCompl{\intDim \len^2}$ in $\Fqm$ and with an approximate
    probability of at least
    \begin{equation}
        1 - k \cdot \left( \frac{k}{q^m} \right)^{\mu}.
    \end{equation}
\end{theorem}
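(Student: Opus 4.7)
The plan is to combine the three building blocks already assembled in Section~\ref{subsec:list-and-probabilistic-decoding}: the existence analysis of the interpolation system (Lemma~\ref{lem:degConstraintForExistence}), the decoding-radius calculation (Theorem~\ref{thm:decodingRadius}), and the heuristic failure-probability bound (Lemma~\ref{lem:failureProbabilityBound}). The role of the threshold $\mu$ is precisely to force the dimension $d_I$ of the interpolation solution space to be at least $\mu$, so that the failure-probability estimate can be invoked at $d_I = \mu$.

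First, I would verify that the modified degree constraint~\eqref{eq:new_deg_constraint} guarantees $d_I \geq \mu$. This is a direct repetition of the counting argument in the proof of Lemma~\ref{lem:degConstraintForExistence}: rather than requiring that the number of $\Fqm$-unknowns exceed the number of equations by at least one, we require the excess to be at least $\mu$, which leads exactly to the ceiling formula in~\eqref{eq:new_deg_constraint}. An $\Fqm$-basis of this solution space of dimension $d_I \geq \mu$ can still be computed using skew Kötter interpolation at cost $\mathcal{O}(\intDim \len^2)$ operations in $\Fqm$.

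Next, I would re-derive the decoding radius with the larger value of $\degConstraint$. The key ingredient in the proof of Theorem~\ref{thm:decodingRadius} is inequality~\eqref{eq:degConstExceedsBound}, which forces $P \in \SkewPolyring$ from~\eqref{eq:def_univariate_skew_poly} to be the zero polynomial whenever $\degConstraint \leq \sum_{i=1}^{\shots}(\lenFLRSshot{i} - t_i)(\foldParShot{i} - \intDim + 1)$. Substituting the new $\degConstraint$ from~\eqref{eq:new_deg_constraint} into this inequality and rearranging (absorbing the $+\mu/(\intDim+1)$ term produced by the ceiling) produces exactly the modified decoding radius~\eqref{eq:prob_unique_decoding_radius}. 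Hence $f$ lies in the solution space of the root-finding system~\eqref{eq:improvedRootFindingSystemFLRS}.

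Finally, for uniqueness I would argue that the scheme recovers $f$ uniquely precisely when $\rk_{q^m}(\B) = k$, and bound the complementary probability. Under the distributional hypothesis on the coefficients of $B_0^{(u)}$ for $u \in \{1, \ldots, \mu\}$, Lemma~\ref{lem:failureProbabilityBound} yields the heuristic estimate $\mathbb{P}(\rk_{q^m}(\B) < k) \lesssim k\,(k/q^m)^{d_I}$; since we have ensured $d_I \geq \mu$ and since this bound is monotonically decreasing in $d_I$ in the relevant regime $k < q^m$, we obtain $\mathbb{P}(\text{failure}) \lesssim k\,(k/q^m)^{\mu}$, giving the claimed success probability. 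The complexity follows by noting that the root-finding step~\eqref{eq:improvedRootFindingSystemFLRS} can be handled by back substitution on at most $k$ nontrivial block-rows at cost $\mathcal{O}(k^2)$, which is dominated by the $\mathcal{O}(\intDim \len^2)$ interpolation cost. The main obstacle is the bookkeeping in the second step: it must be checked carefully that the ceiling-induced slack in~\eqref{eq:new_deg_constraint} translates into the exact additive $\mu/(\intDim+1)$ term appearing on the right-hand side of~\eqref{eq:prob_unique_decoding_radius} without any spurious rounding loss.
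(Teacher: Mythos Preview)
Your proposal is correct and follows essentially the same route as the paper: modify the existence inequality~\eqref{eq:degConstraintIneq} so that the excess of unknowns over equations is at least $\mu$ (yielding~\eqref{eq:new_deg_constraint} and hence $d_I\geq\mu$), combine this with~\eqref{eq:degConstExceedsBound} exactly as in the proof of Theorem~\ref{thm:decodingRadius} to obtain~\eqref{eq:prob_unique_decoding_radius}, and then invoke Lemma~\ref{lem:failureProbabilityBound} at the worst case $d_I=\mu$ together with the $\mathcal{O}(\intDim\len^2)+\mathcal{O}(k^2)$ complexity count. Your explicit mention of the monotonicity of $k(k/q^m)^{d_I}$ in $d_I$ (valid since $k<q^m$) is a detail the paper leaves implicit, and your concern about ceiling-induced slack is legitimate bookkeeping but causes no trouble here.
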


\begin{proof}
    The decoding radius follows when inequality~\eqref{eq:degConstraintIneq} is replaced by
    \begin{equation}
        \sum_{i=1}^{\shots} \lenFLRSshot{i} (\foldParShot{i} - s + 1)
        \leq \degConstraint (\intDim + 1) - \intDim (k - 1) - \mu
    \end{equation}
    in the proof of~\autoref{thm:decodingRadius}.
    Since the degree constraint~\eqref{eq:new_deg_constraint} enforces $d_I \geq \mu$, the heuristic upper bound on
    the failure probability from~\autoref{lem:failureProbabilityBound} attains the worst-case value for $d_I = \mu$.
    The success probability of decoding is hence at least $1 - k \cdot \left( \frac{k}{q^m} \right)^{\mu}$.

    The total complexity of $\OCompl{\intDim \len^2}$ follows since at most $\OCompl{\intDim \len^2}$ $\Fqm$-operations
    are needed to solve the interpolation problem and the solution of the root-finding problem can be computed in $\OCompl{k^2}$ operations.
\end{proof}

For the same folding parameter $\foldPar$ for each block, we get the simplified degree constraint
\begin{equation}
    \degConstraint =
    \left\lceil \frac{\lenFLRS(\foldPar - \intDim + 1) + \intDim (k - 1) + \mu}{\intDim + 1} \right\rceil
\end{equation}
and, as in~\cite[Thm.~3]{hoermann2022efficient}, the description of the decoding radius reads as
\begin{equation}
    t \leq \frac{\intDim}{\intDim + 1}
    \left( \frac{\lenFLRS(\foldPar - \intDim + 1) - k + 1}{\foldPar - \intDim + 1} \right)
    - \frac{\mu}{(\intDim + 1)(\foldPar - \intDim + 1)}.
\end{equation}

\subsection{Improved Decoding of High-Rate Codes} \label{subsec:improved_high_rate}

The normalized decoding radius $\tau \defeq \frac{t}{\lenFLRS}$ of the interpolation-based decoder for codes using the same folding parameter for all blocks, which is given in~\eqref{eq:normalized_decoding_radius}, is positive only for code rates $R < \frac{\foldPar - \intDim + 1}{\foldPar}$.
This is our motivation to now consider an interpolation-based decoder for \ac{FLRS} codes that allows to correct sum-rank errors beyond the unique decoding radius for any code rate $R > 0$.
The main idea behind this decoder is inspired by Justesen's decoder for \ac{FRS} codes~\cite[Sec.~III-B]{Guruswami2008Explicit},~\cite{brauchle2015} and the Justesen-like decoder for high-rate folded Ga\-bi\-du\-lin codes~\cite{bartz2015list,BartzSidorenko_FoldedGabidulin2015_DCC,bartz2017algebraic}.
Compared to the Guruswami--Rudra-like decoder from~\autoref{sec:decoding}, the proposed Justesen-like decoder uses additional interpolation points to improve the decoding performance for higher code rates.
In particular, we allow the sliding window of size $\intDim+1$ to wrap around to the neighboring symbols (except for the last symbol in each block).

As before we choose an interpolation parameter $\intDim \in \NN^{\ast}$ satisfying~\eqref{eq:intDimConstraint}.
Then for each $i = 1, \dots, \shots$ we get the index set $\set{W}_i^\text{HR}$ and the corresponding interpolation-point set $\set{P}_i^\text{HR}$ as
\begin{gather}\label{eq:intPointDefJustesen}
    \begin{aligned}
         \set{W}_i^\text{HR} \defeq &\left\{ (j-1) \foldParShot{i} + l :
         j \in \{1, \ldots, \lenFLRSshot{i}-1\}, l \in \{1, \ldots, \foldParShot{i}\} \right\}
         \\
         \cup
         &\left\{ (\lenFLRSshot{i}-1) \foldParShot{i} + l :
         l \in \{1, \ldots, \foldParShot{i} - \intDim + 1\} \right\}
         \\
        \text{and} \quad
        \set{P}_i^\text{HR} \defeq &\left\{ \left( \pe^{w-1}, r_{w}^{(i)}, r_{w+1}^{(i)}, \dots, r_{w+\intDim-1}^{(i)} \right):
        w \in \set{W}_i^\text{HR} \right\}.
    \end{aligned}
\end{gather}

\begin{remark}
    The additional interpolation points for each block $i = 1, \dots, \shots$ compared to the Guruswami--Rudra-like
    decoder can be easily deduced by the equality
    \begin{equation}
        \set{W}_i^\text{HR} = \set{W}_i \cup \left\{ (j-1) \foldParShot{i} + l :
        j \in \{1, \ldots, \lenFLRSshot{i} - 1\}, l \in \{\foldParShot{i} - \intDim + 2, \ldots, \foldParShot{i}\} \right\}.
    \end{equation}
\end{remark}

\begin{example}
    When we consider the code from~\autoref{ex:int_points}, the interpolation points for the high-rate decoder are
    \begin{equation}
        \set{P}_1^\text{HR} = \set{P}_1 \cup \Big\{ (\pe^2, r_3^{(1)}, r_4^{(1)}) \Big\}
        \qquad \text{and} \qquad
        \set{P}_2^\text{HR} = \set{P}_2 \cup \Big\{ (\pe, r_2^{(2)}, r_3^{(2)}) \Big\}.
    \end{equation}
\end{example}

\begin{problem}[High-Rate Interpolation Problem] \label{prob:intProblemFLRSjustesen}
    Solve~\autoref{prob:intProblemFLRS} with the input sets $\set{P}_1^\text{HR}, \allowbreak \dots,\allowbreak \set{P}_{\shots}^\text{HR}$,
    where the $i$-th set is associated to evaluation parameter $a_i$.
\end{problem}

Since the interpolation point set $\set{P}^\text{HR} \defeq \bigcup_{i=1}^{\shots} \set{P}_i^\text{HR}$ contains
\begin{equation*}
    \vert \set{P}^\text{HR} \vert = \sum_{i=1}^{\shots} \left(\lenFLRSshot{i}\foldParShot{i} - (\intDim - 1)\right)
    = \len - \shots(\intDim - 1)
\end{equation*}
interpolation points, we get the following condition for the existence of a nonzero solution of the high-rate
interpolation problem:

\begin{lemma}[Existence]\label{lem:degConstraintForExistenceJustesen}
    A nonzero solution to~\autoref{prob:intProblemFLRSjustesen} exists if
    \begin{equation} \label{eq:deg_constraint_existence_justesen}
        \degConstraint = \left\lceil
        \frac{\sum_{i=1}^{\shots} \lenFLRSshot{i}\foldParShot{i} - \shots(\intDim-1) + \intDim (k-1) + 1}{\intDim + 1}
        \right\rceil.
    \end{equation}
\end{lemma}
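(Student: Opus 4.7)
The plan is to mirror the proof of~\autoref{lem:degConstraintForExistence} essentially verbatim, but with the revised count of interpolation constraints. The high-rate interpolation problem is, like the original one, a homogeneous $\Fqm$-linear system in the unknown coefficients of $Q_0, Q_1, \dots, Q_\intDim$, and a nontrivial solution is guaranteed as soon as the number of equations is strictly less than the number of unknowns.

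First I would recall the unknown count: by the degree constraints of~\autoref{prob:intProblemFLRS} (which are inherited by~\autoref{prob:intProblemFLRSjustesen}) the polynomial $Q_0$ contributes $\degConstraint$ coefficients and each of $Q_1, \dots, Q_\intDim$ contributes $\degConstraint - k + 1$ coefficients, for a total of $\degConstraint(\intDim+1) - \intDim(k-1)$ unknowns. Next I would count the equations: each interpolation point in $\set{P}^\text{HR} = \bigcup_{i=1}^{\shots} \set{P}_i^\text{HR}$ imposes one $\Fqm$-linear constraint via the vanishing of the generalized operator evaluation at that point, and the cardinality computation carried out immediately before the lemma gives
\begin{equation}
    |\set{P}^\text{HR}| = \sum_{i=1}^{\shots} \lenFLRSshot{i}\foldParShot{i} - \shots(\intDim - 1).
\end{equation}

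Then I would write down the sufficient condition
\begin{equation}
    \sum_{i=1}^{\shots} \lenFLRSshot{i}\foldParShot{i} - \shots(\intDim-1) < \degConstraint(\intDim+1) - \intDim(k-1),
\end{equation}
solve for $\degConstraint$, and use integrality of $\degConstraint$ to lift the strict inequality to a ceiling, which yields exactly~\eqref{eq:deg_constraint_existence_justesen}. No new machinery beyond that already used in~\autoref{lem:degConstraintForExistence} is required; the only point of genuine care is that the bookkeeping of interpolation points actually matches the set $\set{W}_i^\text{HR}$ defined in~\eqref{eq:intPointDefJustesen}, which is why I would cite the cardinality computation given just above the lemma rather than redo it inside the proof.

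Since the reasoning is a straightforward adaptation of a proof already in the paper, I do not anticipate any real obstacle; the only subtlety worth flagging is that the assumption $\intDim \leq \min_i \foldParShot{i}$ from~\eqref{eq:intDimConstraint} is needed to ensure that $\foldParShot{i} - \intDim + 1 \geq 1$ so that the cardinality formula for $\set{P}^\text{HR}$ (and in particular the nonnegativity of $\lenFLRSshot{i}\foldParShot{i} - (\intDim - 1)$ per block) is valid.
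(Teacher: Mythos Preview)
Your proposal is correct and follows essentially the same approach as the paper: both count the $\degConstraint(\intDim+1)-\intDim(k-1)$ unknowns and the $\sum_{i=1}^{\shots}\lenFLRSshot{i}\foldParShot{i}-\shots(\intDim-1)$ homogeneous linear equations, then invoke the ``fewer equations than unknowns'' criterion. Your write-up is in fact more detailed than the paper's two-sentence proof, and the extra remark about~\eqref{eq:intDimConstraint} ensuring nonnegativity of the per-block counts is a nice sanity check that the paper omits.
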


\begin{proof}
    \autoref{prob:intProblemFLRSjustesen} forms a homogeneous linear system of $\sum_{i=1}^{\shots} \left(\lenFLRSshot{i}\foldParShot{i} - (\intDim - 1)\right)$ equations in $\degConstraint(\intDim + 1) - \intDim(k - 1)$ unknowns, which has a nontrivial solution if less equations than unknowns are involved.
    This is satisfied for~\eqref{eq:deg_constraint_existence_justesen}.
\end{proof}

The new choice $\set{P}^{\text{HR}}$ of interpolation points yields at least as many uncorrupted interpolation points as $\set{P}$.
Hence, we also get at least as many $\Fq$-linearly independent zeros of the corresponding univariate polynomial $P$.

\begin{lemma}[Roots of Polynomial]\label{lem:decConditionFLRSjustesen}
    Define the univariate skew polynomial
    \begin{align}
        P(x) &\defeq Q_0(x)+Q_1(x)f(x)+Q_2(x)f(x)\pe+\dots+Q_\intDim(x)f(x)\pe^{\intDim - 1} \\
        &= Q(x, f(x), f(x)\pe, \dots, f(x)\pe^{\intDim-1}) \in \SkewPolyring
    \end{align}
    and write $t_i \defeq \rk_q(\mat{E}^{(i)})$ for $1 \leq i \leq \shots$.
    Then there exist $\Fq$-linearly independent elements
    $\zeta_1^{(i)}, \dots, \zeta_{\lenFLRSshot{i} \foldParShot{i} - (\intDim-1) - t_i(\foldParShot{i} + \intDim - 1)}^{(i)} \in \Fqm$ for each
    $i \in \{1, \dots, \shots\}$ such that $\opev{P}{\zeta_j^{(i)}}{a_i} = 0$ for all $1 \leq i \leq \shots$ and all
    $1 \leq j \leq \lenFLRSshot{i} \foldParShot{i} - (\intDim-1) - t_i(\foldParShot{i} + \intDim - 1)$.
\end{lemma}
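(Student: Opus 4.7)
The plan is to extend the column-transformation argument of Lemma~\ref{lem:decConditionFLRS} so that it also accounts for the extra cross-column interpolation points introduced by~\eqref{eq:intPointDefJustesen}. As in the low-rate case, the starting point is that $\rk_q(\E^{(i)}) = t_i$ guarantees a nonsingular $\mat{T}_i \in \Fq^{\lenFLRSshot{i} \times \lenFLRSshot{i}}$ making $\E^{(i)}\mat{T}_i$ have only $t_i$ nonzero columns. Setting $\veczeta^{(i)} \defeq \L\mat{T}_i$, a direct expansion yields $(\veczeta^{(i)})_{l, j} = \pe^{l-1}\tilde{\zeta}^{(i)}_j$ with $\tilde{\zeta}^{(i)}_j \defeq \sum_k \pe^{(k-1)\foldParShot{i}}(\mat{T}_i)_{k, j}$, so that row $l$ of $\veczeta^{(i)}$ is simply $\pe^{l-1}$ times its first row.

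First I would recover the within-column part of the argument. A transformed within-column sliding window on $\R^{(i)}\mat{T}_i$ is an $\Fq$-linear combination of the corresponding original within-column interpolation points of $\R^{(i)}$, and by the $\Fq$-linearity of $\mathscr{E}_Q$ it still satisfies the interpolation condition. Expanding as in the proof of Lemma~\ref{lem:decConditionFLRS}, using the product rule for the generalized operator evaluation and the identity $\pi\pe^{s-1} = \pe^{l+s-2}\tilde{\zeta}^{(i)}_j$ for $\pi = \pe^{l-1}\tilde{\zeta}^{(i)}_j$, each such uncorrupted window yields $\opev{P}{\pi}{a_i} = 0$. This produces $(\lenFLRSshot{i} - t_i)(\foldParShot{i} - \intDim + 1)$ roots of $P$ of this form.

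The genuinely new step is the cross-column contribution. For a transformed cross-column window between two consecutive uncorrupted columns $(j, j+1)$ of $\R^{(i)}\mat{T}_i$, the same expansion shows that the interpolation condition reduces to $\opev{P}{\pe^{l-1}\tilde{\zeta}^{(i)}_j}{a_i} = 0$ precisely when the shift relation $\tilde{\zeta}^{(i)}_{j+1} = \pe^{\foldParShot{i}}\tilde{\zeta}^{(i)}_j$ holds. I would therefore choose $\mat{T}_i$ so that its first $\lenFLRSshot{i} - t_i$ columns lie in the $\Fq$-kernel of $\E^{(i)}$ and, across every pair of adjacent uncorrupted columns, also satisfy the coefficient-level shift relation $(\mat{T}_i)_{k, j+1} = (\mat{T}_i)_{k-1, j}$ with the convention $(\mat{T}_i)_{0, j} = 0$. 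Under the worst-case arrangement in which the nonzero columns of $\E^{(i)}\mat{T}_i$ are maximally spread out and non-boundary, this leaves $\lenFLRSshot{i} - 2t_i - 1$ eligible pairs, contributing $(\lenFLRSshot{i} - 2t_i - 1)(\intDim - 1)$ additional roots.

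Summing the two contributions and simplifying recovers the claimed count $\lenFLRSshot{i}\foldParShot{i} - (\intDim-1) - t_i(\foldParShot{i} + \intDim - 1)$. The $\Fq$-linear independence of the resulting elements $\pe^{l-1}\tilde{\zeta}^{(i)}_j$ follows from the independence of the entries of $\L$ together with the invertibility of $\mat{T}_i$, exactly as in Lemma~\ref{lem:decConditionFLRS}. I expect the hardest step to be showing that a single $\mat{T}_i$ can be constructed that simultaneously annihilates the intended columns of $\E^{(i)}$ and is shift-invariant across all adjacent uncorrupted pairs in the worst-case arrangement, since the shift operator is nilpotent and its kernel conditions couple nontrivially with $\ker_{\Fq}(\E^{(i)})$; it may be necessary to build $\mat{T}_i$ block-by-block over the runs of uncorrupted columns and argue independence of the roots obtained from different runs separately.
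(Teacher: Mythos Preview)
Your diagnosis of the cross-column difficulty is sharper than the paper's own argument, which is essentially a verbatim copy of the low-rate proof with only the numerical count replaced; the paper even identifies the roots with ``the top left submatrix of size $(\lenFLRSshot{i}-t_i)\times(\foldParShot{i}-\intDim+1)$ of $\zeta^{(i)}$'', which is the low-rate count $(\lenFLRSshot{i}-t_i)(\foldParShot{i}-\intDim+1)$ and not the claimed $M_i=\lenFLRSshot{i}\foldParShot{i}-(\intDim-1)-t_i(\foldParShot{i}+\intDim-1)$. So you are right that the cross-column windows require an additional argument.

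Your proposed fix, however, does not go through. The shift relation $(\mat{T}_i)_{k,j+1}=(\mat{T}_i)_{k-1,j}$ forces column $j{+}1$ of $\mat{T}_i$ to be the down-shift $\mat{S}$ of column $j$, so across any run of $r$ consecutive ``uncorrupted'' columns the columns of $\mat{T}_i$ are $\v,\mat{S}\v,\dots,\mat{S}^{r-1}\v$. For all of these to lie in $\ker_{\Fq}(\E^{(i)})$ one needs $\E^{(i)}\mat{S}^{j}\v=0$ for $j=0,\dots,r-1$; since $\ker_{\Fq}(\E^{(i)})$ is in general not $\mat{S}$-invariant, these are $r$ genuinely distinct rank-$t_i$ constraints on $\v$, and for the run lengths you need there is no reason a nonzero (let alone suitably independent) $\v$ survives. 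The appeal to a ``worst-case arrangement'' does not rescue this: the positions of the zero columns of $\E^{(i)}\mat{T}_i$ are \emph{chosen} by you via $\mat{T}_i$, so there is no adversarial arrangement---you must actually exhibit an invertible $\mat{T}_i$ with enough shift-coupled kernel columns, and the codimension count above obstructs that. (There is also an internal inconsistency: ``first $N_i-t_i$ columns in the kernel'' already fixes the corrupted columns at the end, contradicting ``maximally spread out''.) A route that avoids all of this is to work directly with the identity $\opev{P}{\pe^{w-1}}{a_i}=-\sum_{r=1}^{\intDim}\opev{Q_r}{e_{w+r-1}}{a_i}$ for every $w\in\set{W}_i^{\text{HR}}=\{1,\dots,\lenShot{i}-\intDim+1\}$: then $\sum_w\lambda_w\pe^{w-1}$ is a root whenever $\lambda$ lies in the right $\Fq$-kernel of the $\intDim\times(\lenShot{i}-\intDim+1)$ matrix $\mat{M}$ with $\mat{M}_{r,w}=e_{w+r-1}$, and a rank factorisation $\E^{(i)}=\mat{A}\mat{B}$ with $\mat{A}\in\Fqm^{\foldParShot{i}\times t_i}$, $\mat{B}\in\Fq^{t_i\times\lenFLRSshot{i}}$ shows every column of $\mat{M}$ lies in the $\Fq$-span of at most $t_i(\foldParShot{i}-\intDim+1)+2t_i(\intDim-1)=t_i(\foldParShot{i}+\intDim-1)$ fixed vectors, giving $\dim_{\Fq}\ker(\mat{M})\ge M_i$ without any $\mat{T}_i$.
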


\begin{proof}
    Since $\rk_q(\mat{E}^{(i)}) = t_i$, there exists a nonsingular matrix $\mat{T}_i \in \Fq^{\lenFLRSshot{i} \times \lenFLRSshot{i}}$ such that $\mat{E}^{(i)}\mat{T}_i$ has only $t_i$ nonzero columns for every $i \in \{1, \ldots, \shots\}$.
    Without loss of generality assume that these columns are the last ones of $\mat{E}^{(i)}\mat{T}_i$ and define $\veczeta^{(i)} = \L \cdot \mat{T}_i$ with $\L \in \Fqm^{\foldParShot{i} \times \lenFLRSshot{i}}$ containing the code locators $1, \dots, \pe^{\lenShot{i}-1}$ (cp.~\eqref{eq:defFLRScodeblock}).
    Note that the first $\lenFLRSshot{i} - t_i$ columns of $\R^{(i)} \mat{T}_i = \C^{(i)}\mat{T}_i + \E^{(i)}\mat{T}_i$ are noncorrupted leading to $\lenFLRSshot{i} \foldParShot{i} - (\intDim-1) - t_i(\foldParShot{i} + \intDim - 1)$ noncorrupted interpolation points according to~\eqref{eq:intPointDefJustesen}.
    Now, for each $1 \leq i \leq \shots$, the first entries of the $\lenFLRSshot{i} \foldParShot{i} - (\intDim-1) - t_i(\foldParShot{i} + \intDim - 1)$ noncorrupted interpolation points (i.e.\ the top left submatrix of size $(\lenFLRSshot{i} - t_i) \times (\foldParShot{i} - \intDim + 1)$ of $\zeta^{(i)}$) are by construction both $\Fq$-linearly independent and roots of $P(x)$.
\end{proof}

This results in a different decoding radius, which is shown below.

\begin{theorem}[Decoding Radius] \label{thm:decodingRadiusJustesen}
    Let $Q(x,y_1,\dots,y_\intDim)$ be a nonzero solution of~\autoref{prob:intProblemFLRSjustesen}.
    If the error-weight decomposition $\t = (t_1, \dots, t_{\shots})$ satisfies
    \begin{equation}\label{eq:listDecRegionFLRSjustesen}
        \sum_{i=1}^{\shots} t_i (\foldParShot{i} + \intDim - 1)
        < \frac{\intDim}{\intDim+1} \left( \sum_{i=1}^{\shots} \lenFLRSshot{i}\foldParShot{i} - \shots(\intDim - 1) - k + 1 \right),
    \end{equation}
    then $P \in \SkewPolyring$ is the zero polynomial, that is for all $x \in \Fqm$
    \begin{equation}\label{eq:rootFindingEquationFLRSjustesen}
        P(x)=Q_0(x)+Q_1(x)f(x)+\!\cdots\!+Q_\intDim(x)f(x)\pe^{\intDim-1}=0.
    \end{equation}
\end{theorem}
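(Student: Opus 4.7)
The plan is to follow verbatim the structure of the proof of \autoref{thm:decodingRadius}, with \autoref{lem:decConditionFLRSjustesen} and \autoref{lem:degConstraintForExistenceJustesen} replacing their Guruswami--Rudra-like counterparts \autoref{lem:decConditionFLRS} and \autoref{lem:degConstraintForExistence}. The ingredients are already in place; what remains is to chain them and carry out the algebra.

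First, I would apply \autoref{lem:decConditionFLRSjustesen} to obtain, for each $i \in \{1, \dots, \shots\}$, a set of $\lenFLRSshot{i}\foldParShot{i} - (\intDim-1) - t_i(\foldParShot{i}+\intDim-1)$ elements of $\Fqm$ that are $\Fq$-linearly independent and are roots of $P$ with respect to the evaluation parameter $a_i$. Because the $a_i$ belong to pairwise distinct nontrivial conjugacy classes, the total count of such block-wise linearly independent roots is $\sum_{i=1}^{\shots}\bigl(\lenFLRSshot{i}\foldParShot{i} - (\intDim-1) - t_i(\foldParShot{i}+\intDim-1)\bigr)$. Since $\deg Q_0 < \degConstraint$ and each term $Q_r(x) f(x) \pe^{r-1}$ for $r \geq 1$ has degree at most $(\degConstraint - k) + (k-1) = \degConstraint - 1$, we have $\deg P < \degConstraint$, so by the degree bound of~\cite[Prop.~1.3.7]{caruso2019residues} the polynomial $P$ must vanish whenever
\begin{equation*}
    \degConstraint \leq \sum_{i=1}^{\shots}\bigl(\lenFLRSshot{i}\foldParShot{i} - (\intDim-1) - t_i(\foldParShot{i}+\intDim-1)\bigr).
\end{equation*}

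Next, I would combine this with the existence bound underlying \autoref{lem:degConstraintForExistenceJustesen}, namely
\begin{equation*}
    \frac{\sum_{i=1}^{\shots}\lenFLRSshot{i}\foldParShot{i} - \shots(\intDim-1) + \intDim(k-1)}{\intDim+1} < \degConstraint,
\end{equation*}
to produce the chain
\begin{equation*}
    \frac{\sum_{i=1}^{\shots}\lenFLRSshot{i}\foldParShot{i} - \shots(\intDim-1) + \intDim(k-1)}{\intDim+1} < \sum_{i=1}^{\shots}\bigl(\lenFLRSshot{i}\foldParShot{i} - (\intDim-1) - t_i(\foldParShot{i}+\intDim-1)\bigr).
\end{equation*}
Clearing the denominator by multiplying with $\intDim + 1$, cancelling the $\sum_{i=1}^{\shots} \lenFLRSshot{i}\foldParShot{i}$ and $\shots(\intDim-1)$ terms that appear on both sides, and isolating $\sum_{i=1}^{\shots} t_i(\foldParShot{i}+\intDim-1)$ on the left yields precisely~\eqref{eq:listDecRegionFLRSjustesen}.

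The only nontrivial point is the bookkeeping: the per-block amplification factor is now $\foldParShot{i}+\intDim-1$ rather than $\foldParShot{i}-\intDim+1$ (reflecting that a rank-$1$ error can now corrupt windows that wrap over into an adjacent column), and the total window count per block is $\lenFLRSshot{i}\foldParShot{i} - (\intDim-1)$ rather than $\lenFLRSshot{i}(\foldParShot{i}-\intDim+1)$. I would be particularly careful to verify that after clearing denominators the cross-terms involving $\shots(\intDim-1)$ and $\intDim(k-1)$ collapse correctly into the constant $\sum_{i=1}^{\shots}\lenFLRSshot{i}\foldParShot{i} - \shots(\intDim-1) - k + 1$ on the right-hand side, so that the resulting bound exactly matches the one stated in the theorem.
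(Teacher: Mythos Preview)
Your proposal is correct and follows essentially the same approach as the paper's proof: invoke \autoref{lem:decConditionFLRSjustesen} to count block-wise $\Fq$-linearly independent roots of $P$, force $P=0$ via the degree bound from~\cite{caruso2019residues} by requiring $\degConstraint$ not to exceed that count, and then combine with the existence inequality underlying \autoref{lem:degConstraintForExistenceJustesen} to derive~\eqref{eq:listDecRegionFLRSjustesen}. Your explicit verification that $\deg P < \degConstraint$ and your remarks on the bookkeeping are slightly more detailed than the paper, but the argument is the same.
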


\begin{proof}
    By~\autoref{lem:decConditionFLRSjustesen}, there exist elements $\zeta_1^{(i)}, \dots, \zeta_{\lenFLRSshot{i} \foldParShot{i} - (\intDim-1) - t_i(\foldParShot{i} + \intDim - 1)}^{(i)}$ in $\Fqm$ that are $\Fq$-linearly independent for each $i \in \{1, \dots, \shots\}$ such that $\opev{P}{\zeta_j^{(i)}}{a_i} = 0$ for all $1 \leq i \leq \shots$ and $1 \leq j \leq \lenFLRSshot{i} \foldParShot{i} - (\intDim-1) - t_i(\foldParShot{i} + \intDim - 1)$.
    By choosing
    \begin{equation} \label{eq:degConstExceedsBoundJustesen}
        \degConstraint \leq \sum_{i=1}^{\shots} \left(\lenFLRSshot{i} \foldParShot{i} - (\intDim-1) - t_i(\foldParShot{i} + \intDim - 1)\right)
    \end{equation}
    $P(x)$ exceeds the degree bound from~\cite[Prop.~1.3.7]{caruso2019residues} which is possible only if $P(x)=0$.
    By combining~\eqref{eq:degConstExceedsBoundJustesen} with~\eqref{eq:deg_constraint_existence_justesen} we get
    \begin{align*}
        \sum_{i=1}^{\shots} \left(\lenFLRSshot{i}\foldParShot{i} - (\intDim - 1)\right) + \intDim (k - 1)
        &< (\intDim + 1) \left(\sum_{i=1}^{\shots} \left(\lenFLRSshot{i} \foldParShot{i} - (\intDim-1) - t_i(\foldParShot{i} + \intDim - 1)\right)\right)
        \\
        \iff \qquad
        \sum_{i=1}^{\shots} t_i (\foldParShot{i} + \intDim - 1)
        &< \frac{\intDim}{\intDim+1} \left( \sum_{i=1}^{\shots} \lenFLRSshot{i}\foldParShot{i} - \shots(\intDim - 1) - k + 1 \right).
    \end{align*}
\end{proof}

For the same folding parameter $h \in \NN^{\ast}$ for all blocks the decoding radius in~\eqref{eq:listDecRegionFLRSjustesen} simplifies to
\begin{equation} \label{eq:decoding_region_justesen_same_h}
    t < \frac{\intDim}{\intDim+1} \left(\frac{\lenFLRS \foldPar -\shots(\intDim - 1)-k+1}{\foldPar+\intDim-1}\right)
\end{equation}
which for $\shots=1$ coincides with the result for high-rate folded Gabidulin codes from~\cite{bartz2015list,BartzSidorenko_FoldedGabidulin2015_DCC,bartz2017algebraic}.

Similar to~\eqref{eq:normalized_decoding_radius}, we also derive the normalized decoding radius $\tau \defeq
\frac{t}{\lenFLRS}$ for codes with the same folding parameter $\foldPar$ for each block
from~\eqref{eq:decoding_region_justesen_same_h} and obtain
\begin{align} \label{eq:normalized_decoding_radius_Justesen}
    \tau = \frac{t}{\lenFLRS} &< \frac{\intDim}{\intDim+1} \left( \frac{\lenFLRS \foldPar - \shots (\intDim - 1) - k + 1}
    {\lenFLRS (\foldPar + \intDim - 1)} \right)
    \\
    &\xrightarrow{\lenFLRS \to \infty}
    \frac{\intDim}{\intDim+1} \frac{\foldPar}{\foldPar + \intDim - 1} \left( 1 - R \right)
\end{align}
for the code rate $R \defeq \frac{k}{\foldPar \lenFLRS}$.

\autoref{thm:decodingRadiusJustesen} shows that if the weight decomposition $\t$ of the error satisfies~\eqref{eq:listDecRegionFLRSjustesen}, a list containing the message polynomial $f \in \SkewPolyring_{< k}$ can be obtained by finding all solutions of~\eqref{eq:rootFindingEquationFLRSjustesen}.
This coincides with the root-finding problem from~\autoref{subsec:root-finding-step} and we can hence summarize the list decoder for high-rate \ac{FLRS} codes as follows:

\begin{theorem}[List Decoding] \label{thm:listDecodingJustesen}
    Consider an \ac{FLRS} code $\foldedLinRS{\pe, \a, \foldParVec; \lenFLRSVec, k}$ and a codeword
    $\C$ that is transmitted over a sum-rank channel such that the error has weight $t$ and its weight decomposition
    $\t = (t_1, \dots, t_{\shots})$ satisfies
    \begin{equation}
       \sum_{i=1}^{\shots} t_i (\foldParShot{i} + \intDim - 1)
        < \frac{\intDim}{\intDim+1} \left( \sum_{i=1}^{\shots} \lenFLRSshot{i}\foldParShot{i} - \shots(\intDim - 1) - k + 1 \right)
    \end{equation}
    for an interpolation parameter $1 \leq \intDim \leq \min_{i \in \{1, \dots, \shots\}} \foldParShot{i}$.
    Then, a basis of an at most $(\intDim - 1)$-dimensional $\Fqm$-vector space that contains candidate message polynomials
    satisfying~\eqref{eq:rootFindingEquationFLRSjustesen} can be obtained in at most $\OCompl{\intDim \len^2}$ operations in $\Fqm$.
\end{theorem}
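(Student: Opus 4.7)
The plan is to mirror the argument used for the Guruswami--Rudra-like decoder in~\autoref{thm:listDecoding}, and to re-use the root-finding step from~\autoref{subsec:root-finding-step} essentially unchanged, since equation~\eqref{eq:rootFindingEquationFLRSjustesen} has exactly the same shape as~\eqref{eq:rootFindingEquationFLRS}. The only genuinely new ingredient is the bookkeeping for the enlarged set $\set{P}^{\text{HR}}$ of interpolation points.

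First I would invoke~\autoref{lem:degConstraintForExistenceJustesen} with the degree bound~\eqref{eq:deg_constraint_existence_justesen} to guarantee the existence of a nonzero solution $Q$ of~\autoref{prob:intProblemFLRSjustesen}, and, more generally, a basis $Q^{(1)}, \dots, Q^{(d_I)}$ of its solution space. As stated after~\autoref{prob:intProblemFLRS}, the skew Kötter interpolation of~\cite{liu2014kotter,bartz2014efficient} computes such a basis in $\OCompl{\intDim \len^2}$ operations in $\Fqm$; nothing about that algorithm depends on the specific positions of the points, so it applies verbatim to $\set{P}^{\text{HR}}$ because $|\set{P}^{\text{HR}}| \leq \len$.

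Next I would apply~\autoref{thm:decodingRadiusJustesen} under the hypothesis on $\t$, which yields $P(x) = Q(x, f(x), f(x)\pe, \dots, f(x)\pe^{\intDim-1}) = 0$ for the transmitted message polynomial $f$. Consequently, for every basis element $Q^{(u)}$, the message polynomial $f$ is a solution of the corresponding root-finding equation~\eqref{eq:rootFindingEquationFLRSjustesen}. Collecting these equations into the linear system $\B \cdot \f = -\q$ from~\eqref{eq:improvedRootFindingSystemFLRS} produces a system that is identical in structure to the one analysed in~\autoref{subsec:root-finding-step}, only built from the Justesen-like basis. Solving it by back substitution contributes only $\OCompl{k^2}$ further operations, so the overall complexity remains $\OCompl{\intDim \len^2}$.

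Finally, to bound the list size I would reuse the argument of~\autoref{lem:worstCaseListSize}: the diagonal blocks $\b_{0,0}, \dots, \b_{0,k-1}$ of $\B$ are, up to applications of $\aut$, evaluations of the univariate polynomial $B_0^{(1)}(x) \in \Polyring$ of degree at most $\intDim - 1$ at distinct conjugates of $\pe$. Hence at most $\intDim - 1$ of these diagonal entries can vanish, which gives $\rk_{q^m}(\B) \geq k - \intDim + 1$ and therefore $\dim_{q^m}\ker(\B) \leq \intDim - 1$. The candidate message polynomials thus lie in an $\Fqm$-affine subspace whose direction has dimension at most $\intDim - 1$, as claimed. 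The main obstacle I anticipate is simply double-checking the counting in $|\set{P}^{\text{HR}}|$ and that the diagonal/rank argument from~\autoref{lem:worstCaseListSize} still applies when the interpolation points come from the wrap-around set $\set{W}_i^{\text{HR}}$; both reduce to routine bookkeeping because the root-finding matrix $\B$ depends only on the basis $\{Q^{(u)}\}$ and on the code locators $\pe^{a}$, not on how the points were laid out in $\set{P}^{\text{HR}}$.
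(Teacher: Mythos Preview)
Your proposal is correct and mirrors the paper's approach exactly: the paper does not give a standalone proof of \autoref{thm:listDecodingJustesen} but simply states it as a summary, relying on \autoref{lem:degConstraintForExistenceJustesen}, \autoref{thm:decodingRadiusJustesen}, and the observation that the root-finding equation~\eqref{eq:rootFindingEquationFLRSjustesen} coincides with~\eqref{eq:rootFindingEquationFLRS}, so that \autoref{lem:worstCaseListSize} and the complexity analysis of~\autoref{subsec:root-finding-step} carry over unchanged. Your remark that the root-finding matrix $\B$ depends only on the basis $\{Q^{(u)}\}$ and not on the layout of interpolation points is precisely why the paper can reuse the earlier analysis verbatim.
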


By following the ideas of~\autoref{lem:intSolutionDim} we observe that the dimension of the $\Fqm$-linear solution space of the interpolation system for the Justesen-like decoder satisfies
\begin{equation*}
    d_I \geq \intDim(\degConstraint - k + 1) - \sum_{i=1}^{\shots} t_i(\foldParShot{i} - \intDim + 1).
\end{equation*}
Imposing the threshold $d_I \geq \mu$ yields to the degree constraint
\begin{equation}
    \degConstraint = \left\lceil \frac{\sum_{i=1}^{\shots}\lenFLRSshot{i} \foldParShot{i} - \shots(\intDim -1) - \intDim(k - 1) + \mu}{\intDim + 1} \right \rceil
\end{equation}
which lets us provide a summary for probabilistic unique decoding of \ac{FLRS} codes in~\autoref{thm:probabilisticUniqueDecodingJustesen}.

\begin{theorem}[Probabilistic Unique Decoding] \label{thm:probabilisticUniqueDecodingJustesen}
    For an interpolation parameter $1 \leq \intDim \leq \min_{i \in \{1, \dots, \shots\}} \foldParShot{i}$ and a dimension threshold $\mu \in \NN^{\ast}$, transmit a codeword $\C$ of an \ac{FLRS} code $\foldedLinRS{\pe, \a, \foldParVec; \lenFLRSVec, k}$ over a sum-rank channel.
    If the coefficients of the polynomials $B_0^{(u)}(x)$ for $u \in \{1, \ldots, \mu\}$ are independent and uniformly distributed among $\Fqm$ and the error-weight decomposition
    $\t = (t_1, \dots, t_{\shots})$ satisfies
    \begin{equation}
       \sum_{i=1}^{\shots} t_i (\foldParShot{i} + \intDim - 1)
        \leq \frac{\intDim}{\intDim+1} \left( \sum_{i=1}^{\shots} \lenFLRSshot{i}\foldParShot{i} - \shots(\intDim - 1) - k + 1 \right)
        - \frac{\mu}{\intDim + 1},
    \end{equation}
    $\C$ can be uniquely recovered with complexity $\OCompl{\intDim \len^2}$ in $\Fqm$ with an approximate probability of at least
    \begin{equation}
        1 - k \cdot \left( \frac{k}{q^m} \right)^{\mu}.
    \end{equation}
\end{theorem}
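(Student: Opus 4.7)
The plan is to mirror the structure of the proof of \autoref{thm:probabilisticUniqueDecoding}, but starting from the high-rate ingredients developed in \autoref{subsec:improved_high_rate}. The only two modifications I expect to need relative to \autoref{thm:probabilisticUniqueDecoding} are (i) replacing the Guruswami--Rudra-like counting of interpolation points with the enlarged set $\set{P}^{\mathrm{HR}}$ from~\eqref{eq:intPointDefJustesen}, and (ii) threading the threshold $\mu$ through the dimension bound of~\autoref{lem:intSolutionDim} so that $d_I \geq \mu$ is guaranteed.

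First, I would redo the existence argument of~\autoref{lem:degConstraintForExistenceJustesen} with $\mu$ additional linearly independent solutions demanded. The interpolation system has $\len - \shots(\intDim-1)$ equations and $\degConstraint(\intDim+1) - \intDim(k-1)$ unknowns, so enforcing $d_I \geq \mu$ amounts to
\begin{equation}
    \len - \shots(\intDim-1) \;\leq\; \degConstraint(\intDim+1) - \intDim(k-1) - \mu,
\end{equation}
which, by integrality, is guaranteed by the degree constraint
\begin{equation}
    \degConstraint = \left\lceil \tfrac{\sum_{i=1}^{\shots}\lenFLRSshot{i}\foldParShot{i} - \shots(\intDim-1) + \intDim(k-1) + \mu}{\intDim+1} \right\rceil.
\end{equation}
Next, I would feed this modified lower bound on $\degConstraint$ into the chain of inequalities from the proof of~\autoref{thm:decodingRadiusJustesen}: combining $\degConstraint \leq \sum_i (\lenFLRSshot{i}\foldParShot{i} - (\intDim-1) - t_i(\foldParShot{i}+\intDim-1))$ (which forces $P \equiv 0$ by~\autoref{lem:decConditionFLRSjustesen}) with the above existence bound yields, after rearrangement, exactly the stated decoding-radius condition with the additional $-\mu/(\intDim+1)$ slack term.

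For the probability bound, I would invoke~\autoref{lem:failureProbabilityBound}. Its proof does not use the particular choice of interpolation-point set, only the structure of the root-finding matrix $\B$ and the hypothesis that the coefficients of $B_0^{(u)}$ for $u = 1,\dots,d_I$ are independent and uniformly distributed in $\Fqm$. Since the degree constraint above forces $d_I \geq \mu$, the worst case of the heuristic bound is attained at $d_I = \mu$, and unique decoding succeeds with approximate probability at least $1 - k\cdot(k/q^m)^{\mu}$.

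Finally, for the complexity claim I would cite the interpolation cost $\OCompl{\intDim \len^2}$ from the discussion following~\autoref{prob:intProblemFLRS} (which applies verbatim since $\vert \set{P}^{\mathrm{HR}} \vert \leq \len$) plus the $\OCompl{k^2}$ back-substitution cost for the triangular system~\eqref{eq:improvedRootFindingSystemFLRS}; the total is dominated by $\OCompl{\intDim \len^2}$. The main obstacle I anticipate is purely bookkeeping: one must be careful that the Justesen-like interpolation-point set does not break the rank argument underlying~\autoref{lem:intSolutionDim}, i.e.\ that the first $\degConstraint$ columns of the interpolation matrix still have full rank $\degConstraint$ because the $\p_0$-block still consists of pairwise distinct powers of the primitive element $\pe$. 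Since the enlarged set $\set{W}_i^{\mathrm{HR}}$ only adds new starting positions (each still giving a distinct power of $\pe$), this remains intact, and the rest of the argument goes through essentially unchanged.
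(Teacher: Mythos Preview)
Your proposal is correct and follows essentially the same approach as the paper: the paper derives the $\mu$-adjusted degree constraint for the Justesen-like interpolation system in the discussion immediately preceding the theorem and then leaves the rest implicit, relying on the reader to replay the proof of \autoref{thm:probabilisticUniqueDecoding} with the high-rate ingredients \autoref{lem:degConstraintForExistenceJustesen}, \autoref{lem:decConditionFLRSjustesen}, and \autoref{thm:decodingRadiusJustesen} in place of their Guruswami--Rudra-like counterparts. Your explicit check that the $\p_0$-block of the enlarged interpolation matrix still consists of pairwise distinct powers of $\pe$ (so that the rank argument of \autoref{lem:intSolutionDim} carries over) is exactly the point the paper alludes to when it writes ``by following the ideas of \autoref{lem:intSolutionDim}''.
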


For the same folding parameter $\foldPar$ for each block, we get the decoding radius
\begin{equation}
    t \leq \frac{\intDim}{\intDim + 1}
    \left( \frac{\lenFLRS\foldPar - k + 1 - \shots(\intDim - 1)}{\foldPar + \intDim - 1} \right)
    - \frac{\mu}{(\intDim + 1)(\foldPar + \intDim - 1)}
\end{equation}
which for $\shots = 1$ coincides with the probabilistic unique decoding radius for folded Gabidulin codes (cf.~\cite[Thm.~3]{BartzSidorenko_FoldedGabidulin2015_DCC}).

\autoref{fig:radiusFLRSopt} illustrates the normalized decoding radii of the presented Gu\-ru\-swa\-mi--Ru\-dra- and Justesen-like decoders for \ac{FLRS} codes.
In particular, the significant improvement upon unique decoding is shown.

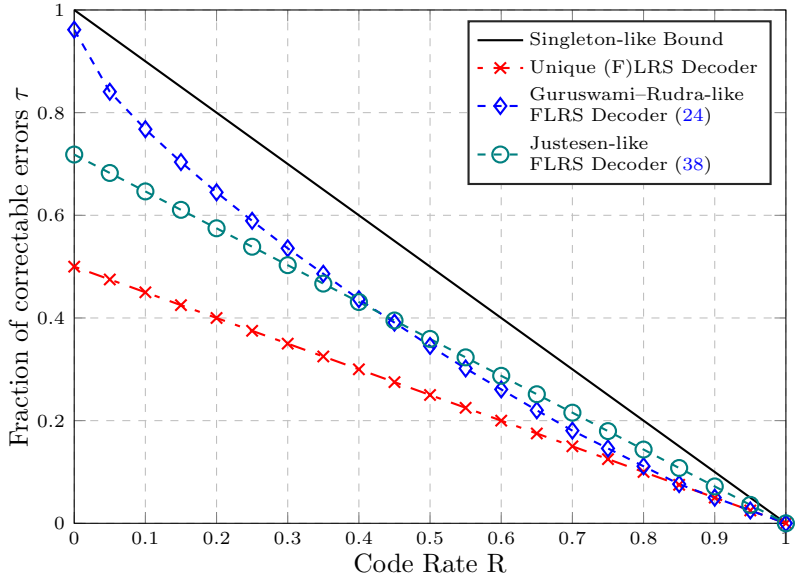
\begin{figure}[ht!]
    \centering
\begin{tikzpicture}

\begin{axis}[%
scale=1.27,
xmin=0,
xmax=1,
xlabel={Code Rate R},
xmajorgrids,
ymin=0,
ymax=1,
ylabel={Fraction of correctable errors $\tau$},
ymajorgrids,
legend style={legend cell align=left,align=left,draw=white!15!black}
]
\addplot [style=SB]
  table[row sep=crcr]{%
0	1\\
0.05	0.95\\
0.1	0.9\\
0.15	0.85\\
0.2	0.8\\
0.25	0.75\\
0.3	0.7\\
0.35	0.65\\
0.4	0.6\\
0.45	0.55\\
0.5	0.5\\
0.55	0.45\\
0.6	0.4\\
0.65	0.35\\
0.7	0.3\\
0.75	0.25\\
0.8	0.2\\
0.85	0.15\\
0.9	0.1\\
0.95	0.05\\
1	0\\
};
\addlegendentry{Singleton-like Bound}

\addplot [style=LRS]
  table[row sep=crcr]{%
0 0.5\\
0.05  0.475\\
0.1 0.45\\
0.15  0.425\\
0.2 0.4\\
0.25  0.375\\
0.3 0.35\\
0.35  0.325\\
0.4 0.3\\
0.45  0.275\\
0.5 0.25\\
0.55  0.225\\
0.6 0.2\\
0.65  0.175\\
0.7 0.15\\
0.75  0.125\\
0.8 0.1\\
0.85  0.075\\
0.9 0.05\\
0.95  0.025\\
1 0\\
};
\addlegendentry{Unique (F)LRS Decoder}

\addplot [style=FLRS]
  table[row sep=crcr]{%
0.0 0.9615384615384616 \\
0.05 0.8406593406593407 \\
0.1 0.7676470588235295 \\
0.15 0.7037037037037037 \\
0.2 0.6447368421052632 \\
0.25 0.5892857142857143 \\
0.3 0.5357142857142857 \\
0.35 0.4861111111111111 \\
0.4 0.4365079365079365 \\
0.45 0.39090909090909093 \\
0.5 0.34545454545454546 \\
0.55 0.3016304347826087 \\
0.6 0.2608695652173913 \\
0.65 0.22010869565217392 \\
0.7 0.18055555555555555 \\
0.75 0.14583333333333334 \\
0.8 0.1111111111111111 \\
0.85 0.0763888888888889 \\
0.9 0.05 \\
0.95 0.025 \\
1.0 0.0 \\
};
\addlegendentry{Guruswami--Rudra-like\\FLRS Decoder~\eqref{eq:normalized_decoding_radius}}

\addplot [style=HRFLRS]
  table[row sep=crcr]{%
0.0 0.7183908045977011 \\
0.05 0.682471264367816 \\
0.1 0.646551724137931 \\
0.15 0.610632183908046 \\
0.2 0.5747126436781609 \\
0.25 0.5387931034482758 \\
0.3 0.5028735632183907 \\
0.35 0.46695402298850575 \\
0.4 0.43103448275862066 \\
0.45 0.39511494252873564 \\
0.5 0.35919540229885055 \\
0.55 0.32327586206896547 \\
0.6 0.28735632183908044 \\
0.65 0.25143678160919536 \\
0.7 0.21551724137931036 \\
0.75 0.17959770114942528 \\
0.8 0.1436781609195402 \\
0.85 0.10775862068965518 \\
0.9 0.0718390804597701 \\
0.95 0.03591954022988509 \\
1.0 0.0 \\
};
\addlegendentry{Justesen-like\\FLRS Decoder~\eqref{eq:normalized_decoding_radius_Justesen}}

\end{axis}
\end{tikzpicture}%
    \caption{
        Normalized decoding radius $\tau \defeq \frac{t}{\lenFLRS}$ vs. code rate $R \defeq \frac{k}{\lenFLRS \foldPar}$
        for an \ac{FLRS} code with the same folding parameter $\foldPar = 25$ for each block and optimal decoding
        parameter $\intDim \leq \foldPar$ for each code rate.
    }
    \label{fig:radiusFLRSopt}
\end{figure}

\subsection{Simulation Results} \label{subsec:simulation_results}

We ran simulations\footnote{The underlying data can be shared upon reasonable request.} in SageMath~\cite{sage} to empirically verify the heuristic upper bound for probabilistic unique
decoding that we derived in~\autoref{thm:probabilisticUniqueDecoding}.
We designed the parameter sets to obtain experimentally observable failure probabilities.
Therefore, we considered codes with parameters
\begin{equation}
    q = 3, \qquad m = 6, \qquad k = 2, \qquad \lenVec = (6, 6),
\end{equation}
and with two different vectors $\foldParVec \in \{ (3, 3), (3, 2) \}$ of folding parameters.
The code using $\foldParVec = (3, 3)$ has minimum distance $4$ which implies a unique-decoding
radius of $1.5$.
In contrast, the proposed probabilistic unique decoder with $\intDim = 2$ allows to correct errors of weight $t = 2$ for $\mu
\in \{1, 2\}$.
Namely, the bound~\eqref{eq:prob_unique_decoding_radius} yields $t \leq 2.17$ for $\mu = 1$ and $t \leq 2$ for $\mu = 2$.
We investigated the case $\mu = 1$ by means of a Monte Carlo simulation and collected $100$ decoding failures within
about $4.23 \cdot 10^7$ transmissions with randomly chosen error patterns of fixed sum-rank weight $t = 2$.
This gives an observed failure probability of about $2.36 \cdot 10^{-6}$, while the heuristic yields an upper bound of
$5.49 \cdot 10^{-3}$.

For $\foldParVec = (3, 2)$, the code has the higher minimum distance $5$
and a unique-decoding radius of $2$.
Its decodable error-weight decompositions with respect to the probabilistic unique decoder with $\intDim = 2$ and
$\mu \in \{1, 2, 3\}$ are
\begin{itemize}
    \item $(0, 1)$ and $(1, 0)$, i.e.\ all possible patterns for weight $t = 1$,
    \item $(0, 2)$ and $(1, 1)$, i.e.\ two out of three possible patterns for weight $t = 2$,
    \item and $(0, 3)$, i.e.\ one out of three possible patterns for weight $t = 3$.
\end{itemize}
Note that the error patterns are not equally likely.
For example, being able to correct one out of two error-weight decompositions for a given weight does not necessarily
mean that half of all errors of the given sum-rank weight can be corrected.
We ran two Monte Carlo simulations for $t = 2$ and $t = 3$ and collected in both cases $100$ failures for $\mu = 1$.
The errors were chosen uniformly at random from the set of all vectors having the prescribed sum-rank weight as well as
a decodable weight decomposition.
The observed failure probability was $1.11 \cdot 10^{-3}$ for $t = 2$ ($100$ failures in about $9.03 \cdot 10^{4}$ runs)
and $2.11 \cdot 10^{-5}$ for $t = 3$ ($100$ failures in $4.73 \cdot 10^{6}$ runs).
In both scenarios, the heuristic upper bound is $5.49 \cdot 10^{-3}$ as for the first code.

Similar to results in~\cite{wachter2013decoding,BartzSidorenko_FoldedGabidulin2015_DCC}, our heuristic upper bound is based on the assumption that the coefficients of the polynomials $B_0^{(u)}(x)
\in \mathbb{F}_{729}[x]$ with $1 \leq u \leq \mu$ defined in~\eqref{eq:rf_polys} are uniformly distributed among
$\F_{729}$.
Unfortunately, this assumption was not backed by evidence in former work.
We thus decided to investigate experimentally observed distributions and compare them with the uniform distribution
by means of the \emph{\ac{KL} divergence}.
The \ac{KL} divergence (or \emph{relative entropy},
see~\cite[Sec.~2.3]{cover2006elementsOfInformationTheory}) is a tool to measure the distance between two probability
distributions, that is often used in coding and information theory.
Note that it is not a metric in the mathematical sense but provides sufficient insights for our purpose.
In particular, it is an upper bound for other widely used statistical distance measures as e.g.~the total variation distance~\cite{Gibbs2002probabilityMetrics}.

The \ac{KL} divergence of two probability mass functions $u(x)$ and $v(x)$, that are defined over a finite
alphabet $\set{A}$, is defined as
\begin{equation}
    D_{KL}(u \,||\, v) \defeq \sum_{x \in \set{A}} u(x) \log \left( \frac{u(x)}{v(x)} \right).
\end{equation}
We understand $0 \cdot \log( \frac{0}{q} ) \defeq 0$ for any $q$ and $p \cdot \log( \frac{p}{0} ) \defeq \infty$ for any
nonzero $p$ by convention.
We follow the common approach and consider the logarithm with base 2 and thus measure the \acl{KL} divergence in bits.
Note that the divergence is always nonnegative and it equals zero if and only if the two considered probability mass
functions are equal (see e.g.~\cite[Thm.~2.6.3]{cover2006elementsOfInformationTheory})).

Denote the observed probability mass function of the coefficients of the polynomials $B_0^{(1)}(x) \in \mathbb{F}_{729}[x]$
from~\eqref{eq:rf_polys} after $10^{6}$ transmissions by $\chi$ and let $\unif_{\F_{729}}$ be the probability mass
function of the uniform distribution among $\F_{729}$.
We obtained the \ac{KL} divergence values
\begin{itemize}
    \item $D_{KL}(\chi \,||\, \unif_{\F_{729}}) \approx 3.32 \cdot 10^{-4}$ bits for $\foldParVec = (3, 3)$ and $t = 2$,
    \item $D_{KL}(\chi \,||\, \unif_{\F_{729}}) \approx 2.30 \cdot 10^{-4}$ bits for $\foldParVec = (3, 2)$ and $t = 2$, and
    \item $D_{KL}(\chi \,||\, \unif_{\F_{729}}) \approx 3.42 \cdot 10^{-4}$ bits for $\foldParVec = (3, 2)$ and $t = 3$.
\end{itemize}
This shows that the measured distribution is in all cases remarkably close to the uniform distribution, which justifies
the assumption in~\autoref{thm:probabilisticUniqueDecoding}.
The results are illustrated in more detail in~\autoref{fig:distributions_and_divergence}, where the subfigures~(a)--(c)
show the probability mass functions $\chi$ of the coefficients that were observed within $10^{6}$ transmissions for
$\foldParVec = (3, 3)$ with error weight $t = 2$ and for $\foldParVec = (3, 2)$ with error weight $t = 2$ and $t = 3$, respectively.
The red line marks the (in fact discrete) probability mass function $\unif_{\F_{729}}$ of the uniform distribution for
reference.
Subfigure (d) shows the evolution of the \acl{KL} divergence $D_{KL}(\chi \,||\, \unif_{\F_{729}})$ over the
$10^{6}$ runs for all investigated scenarios.

\begin{figure}[ht]
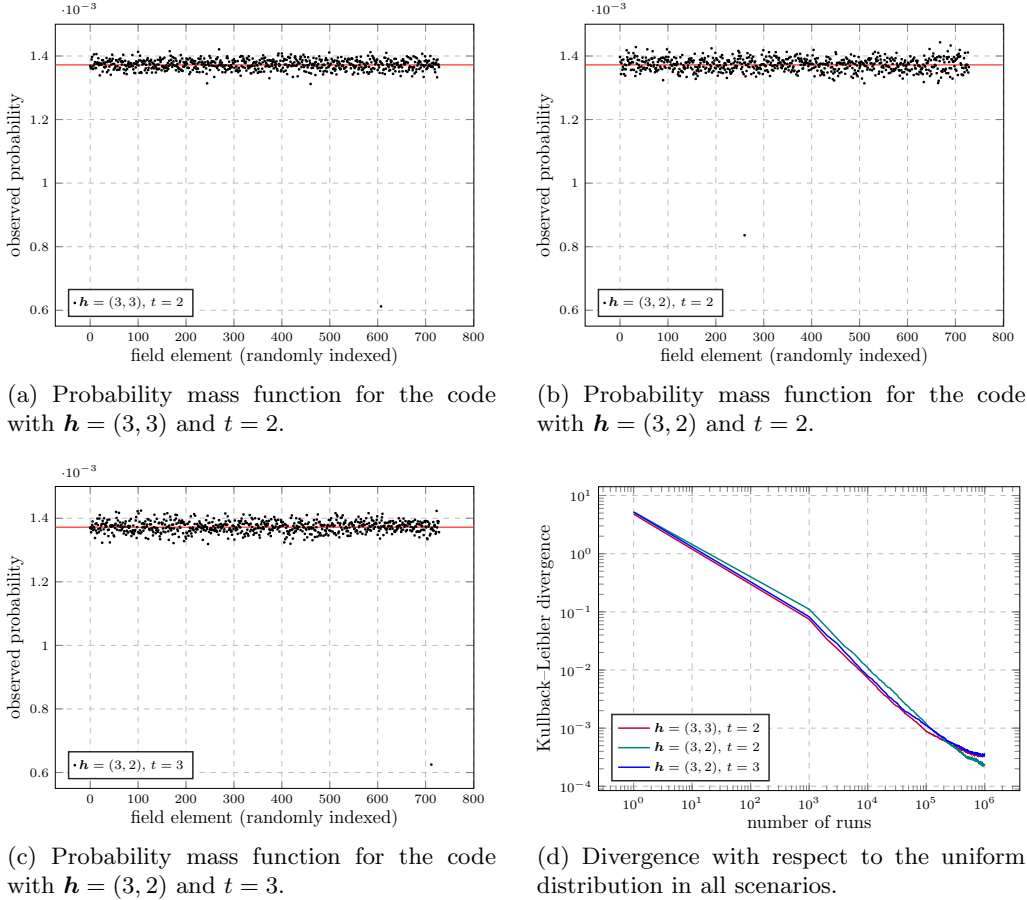

    \centering
    \begin{subfigure}[b]{.48\textwidth}
        \centering
        \resizebox{\textwidth}{!}{%
% [inline block 0: 4 envs, 94338 chars -> data_tex | \begin{tikzpicture} 	\begin{axis}[%...]

        }
        \caption{Divergence with respect to the uniform distribution in all scenarios.}
        \label{fig:divergence}
    \end{subfigure}
    \caption{Observed probability mass function of the coefficients of $B_0^{(1)}(x)$ after $10^{6}$ probabilistic
    unique decodings with $\intDim = 2$ and $\mu = 1$ for codes with $q = 3$, $m = 6$, $k = 2$, $\lenVec = (6, 6)$ and
    either $\foldParVec = (3, 3)$ and $t = 2$ or $\foldParVec = (3, 2)$ and $t \in \{2, 3\}$ and evolution of its
    divergence with respect to the uniform distribution.}
    \label{fig:distributions_and_divergence}
\end{figure}

\section{Implications for Folded Skew Reed--Solomon Codes} \label{sec:implications-fsrs}

Motivated by the relation between \ac{LRS} codes and \ac{SRS} codes from~\cite{martinez2018skew}, we now derive \ac{FSRS} codes for the skew metric from \ac{FLRS} codes.
The skew metric is related to skew evaluation codes and was introduced in~\cite{martinez2018skew}.
Decoding schemes for \ac{SRS} codes that allow for correcting errors of skew weight up to the unique-decoding radius $\lfloor\frac{n-k}{2}\rfloor$ were presented in~\cite{martinez2018skew,boucher2020algorithm,liu2015construction,bartz2020fast}.

In this section we consider decoding of \ac{FSRS} codes with respect to the \emph{(burst) skew metric}, which was introduced for \ac{ISRS} codes in~\cite{bartz2022fast}.
In the following, we restrict ourselves to evaluation codes constructed by $\SkewPolyringZeroDer$, i.e. to the zero-derivation case.

\subsection{Preliminaries on Remainder Evaluation} \label{subsec:preliminaries-on-remainder-evaluation}

Apart from the (generalized) operator evaluation (see~\autoref{sec:preliminaries}) there exists the so-called \emph{remainder evaluation} for skew polynomials, which can be seen as an analog of the classical polynomial evaluation via polynomial division.

For a skew polynomial $f\in\SkewPolyringZeroDer$ the remainder evaluation $\remev{f}{b}$ of $f$ at an element $b\in\Fqm$ is defined as the unique remainder of the right division of $f(x)$ by $(x-b)$ such that (see~\cite{lam1985general, lam1988vandermonde})
\begin{equation}\label{eq:remEval}
f(x)=g(x)(x-b)+\remev{f}{b} \quad \Longleftrightarrow\quad \remev{f}{b} = f(x) \modr (x-b).
\end{equation}
We denote the evaluation of $f$ at all entries of a vector $\vec{b} = (b_1, \dots, b_n)\in\Fqm^n$ by $\remev{f}{\vec{b}}=\left(\remev{f}{b_1},\remev{f}{b_2},\dots,\remev{f}{b_n}\right)$.

For any $a\in\Fqm$, $b\in\Fqm^*$ and $f\in\SkewPolyringZeroDer$ the generalized operator evaluation of $f$ at $b$ with respect to $a$ is related to the remainder evaluation by (see~\cite{martinez2018skew,leroy1995pseudolinear})
\begin{equation}\label{eq:rel_remev_opev}
    \remev{f}{\op{a}{b}b^{-1}}=\opev{f}{b}{a}b^{-1}.
\end{equation}

The following notions were introduced in~\cite{lam1985general,lam1988vandermonde,lam1988algebraic}, and we use the notation of~\cite{martinez2019reliable}.
Let $\set{A} \subseteq \SkewPolyringZeroDer$, $\Omega \subseteq \Fqm$, and $a \in \Fqm$.
The \emph{zero set} of $\set{A}$ is defined as
\begin{equation*}
  \set{Z}(\set{A}) := \left\{ \alpha \in \Fqm \, : \, \remev{f}{\alpha} = 0 \, \forall \, f \in \set{A} \right\}
\end{equation*}
and
\begin{equation*}
  I(\Omega) := \left\{ f \in \SkewPolyringZeroDer \, : \, \remev{f}{\alpha} = 0 \, \forall \, \alpha \in \Omega \right\}
\end{equation*}
denotes the \emph{associated ideal} of $\Omega$.
The \emph{P-closure} (or \emph{polynomial closure}) of $\Omega$ is defined by $\bar{\Omega} := \set{Z}(I(\Omega))$, and $\Omega$ is called \emph{P-closed} if $\bar{\Omega}=\Omega$.
Note that a P-closure is always P-closed.
All elements of $\Fqm \setminus \bar{\Omega}$ are said to be \emph{P-independent from $\Omega$}.
A set $\set{B} \subseteq \Fqm$ is said to be \emph{P-independent} if any $b \in \set{B}$ is P-independent from $\set{B} \setminus \{b\}$.
If $\set{B}$ is P-independent and $\Omega := \bar{\set{B}} \subseteq \Fqm$, we say that $\set{B}$ is a \emph{P-basis of $\Omega$}.

For any $\x = (x_1,\dots,x_n) \in \Fqm^n$ and any P-basis $\set{B}=\{b_1,b_2,\dots,b_n\}$, there is a unique skew polynomial $\IPrem{\set{B}, \x} \in \SkewPolyringZeroDer$ of degree less than $n$ such that
\begin{align*}
\remev{\IPrem{\set{B}, \x}}{b_j} = x_j \quad \forall \, j=1,\dots,n.
\end{align*}
We call this the \emph{remainder interpolation polynomial} of $\x$ on $\set{B}$.
The \emph{skew weight} of a vector $\vec{x}\in\Fqm^n$ with respect to a P-closed set $\Omega=\bar{\set{B}}$ with P-basis $\set{B}=\{b_1,b_2,\dots,b_n\}$\footnote{%
  Here and in the sequel, we slightly abuse notation and take this to mean $\set{B}$ is an ordered set and that the $b_i$ are distinct.%
} is defined as (see~\cite{boucher2020algorithm})
\begin{equation}\label{eq:def_skew_weight}
 \SkewWeight^\set{B}(\vec{x})\defeq\SkewWeight^\set{B}(\IPrem{\set{B}, \x})\defeq\deg\left(\lclm\left(x-\conj{b_i}{x_i}\right)_{\mystack{1\leq i\leq n}{x_i\neq0}}\right).
\end{equation}
The skew weight of a vector depends on $\Omega$ but is independent from the particular P-basis $\set{B}$ (see~\cite[Prop.~13]{martinez2018skew}) of $\Omega$.
In order to simplify the notation for the skew weights of vectors, we indicate the dependence on $\Omega$ by a particular P-basis for $\Omega$ and use the notation $\SkewWeight(\cdot)$ whenever $\set{B}$ is clear from the context.
Similar to the rank and the sum-rank weight we have that $\SkewWeight(\vec{x})\leq\HammingWeight(\vec{x})$ for all $\vec{x}\in\Fqm^n$ (see~\cite{martinez2018skew}).
The \emph{skew distance} between two vectors $\vec{x},\vec{y}\in\Fqm^n$ is defined as
\begin{equation*}
 \SkewDist(\vec{x},\vec{y})\defeq\SkewWeight(\vec{x}-\vec{y}).
\end{equation*}

\subsection{Skew Metric for Folded Matrices} \label{subsec:skew-metric-folded}

By fixing a basis of $\Fqmh$ over $\Fqm$ we can consider a matrix $\X\in\Fqm^{\foldPar\times N}$ as a vector $\vec{x}=(x_1,x_2,\dots,x_N)\in\Fqmh^N$.
Similarly, we consider a tuple $\X\in\Fqm^{\h \times \N}$ as a matrix in $\Fqm^{\foldPar \times N}$ whenever the folding parameter $\foldPar$ is the same for each block, i.e. if $\h=(\foldPar,\dots,\foldPar)$.
Similar as for \ac{ISRS} codes we define the skew weight of a matrix $\X\in\Fqm^{\foldPar\times N}$ (or a tuple $\X\in\Fqm^{\h \times \N}$) with respect to $\set{B}$ as the skew weight of the vector $\x=\extInv_{\vecgamma}(\X)\in\Fqmh^N$, i.e. as (see~\cite{bartz2022fast})
\begin{equation}\label{eq:def_skew_weight_mat}
  \SkewWeight^\set{B}(\mat{X})\defeq \SkewWeight^\set{B}(\extInv_{\vecgamma}(\mat{X})) = \deg\left(\lclm\left(x-\conj{b_i}{x_i}\right)_{\mystack{1\leq i\leq N}{x_i\neq0}}\right)
\end{equation}
where the polynomial
\begin{equation}
  \lclm\left(x-\conj{b_i}{x_i}\right)_{\mystack{1\leq i\leq N}{x_i\neq0}}
\end{equation}
is now from $\Fqmh[\aut;x]$ since we have that $x_i\in\Fqmh$ for all $i=1,\dots,N$.

\begin{lemma}\label{lem:alpha_power_remev}
  Define $\genNorm{i}{a} \defeq \prod_{k=0}^{i-1} \aut^{k}(a)$ for any $i \in \NN$ and $a \in \Fqm$, let $c \in \Fqm$ and consider a skew polynomial $f \in \SkewPolyringZeroDer$.
  Then for any $b \in \Fqm$ and $j \in \NN$ we have that
  \begin{equation*}
    \remev{f}{c^j b} = \remev{\tilde{f}}{b}
  \end{equation*}
  where $\tilde{f}=\sum_{i=0}^{\deg(f)} f_i \genNorm{i}{c^j} x^i$.
\end{lemma}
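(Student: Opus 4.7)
The plan is to reduce the statement to monomials by linearity, compute the remainder evaluation of a monomial via a short induction, and then exploit the multiplicativity of the generalized norm $\genNorm{i}{\cdot}$ in the zero-derivation case.

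First I would observe that the map $f \mapsto \remev{f}{\beta}$ is left-$\Fqm$-linear for any fixed $\beta \in \Fqm$: if $f = g(x-\beta)+r$ then $\alpha f = (\alpha g)(x-\beta)+\alpha r$ for $\alpha \in \Fqm$, and additivity is immediate from summing two right divisions. Since $f \mapsto \tilde f$ in the statement is also left-$\Fqm$-linear, it suffices to establish the identity for the monomial $f = x^n$.

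Next I would show by induction on $n$ that for every $\beta \in \Fqm$,
\begin{equation}
    \remev{x^n}{\beta} \;=\; \genNorm{n}{\beta} \;=\; \prod_{k=0}^{n-1} \aut^k(\beta).
\end{equation}
The base case $n=0$ is trivial, and for the inductive step I write $x^n = x \cdot x^{n-1}$, apply the inductive hypothesis $x^{n-1} = q_{n-1}(x)(x-\beta) + \remev{x^{n-1}}{\beta}$, and use the commutation rule $x \alpha = \aut(\alpha)x$ together with the split $\aut(\remev{x^{n-1}}{\beta})x = \aut(\remev{x^{n-1}}{\beta})(x-\beta) + \aut(\remev{x^{n-1}}{\beta})\beta$ to read off $\remev{x^n}{\beta} = \aut(\remev{x^{n-1}}{\beta})\cdot \beta$. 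Telescoping gives the claim.

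Finally, specialising to $\beta = c^j b$ and using that $\aut$ is a ring automorphism while $\Fqm$ is commutative, the norm $\genNorm{i}{\cdot}$ is multiplicative:
\begin{equation}
    \genNorm{i}{c^j b} \;=\; \prod_{k=0}^{i-1} \aut^k(c^j)\aut^k(b) \;=\; \genNorm{i}{c^j}\,\genNorm{i}{b}.
\end{equation}
Hence for $f=x^i$ we obtain
\begin{equation}
    \remev{x^i}{c^j b} \;=\; \genNorm{i}{c^j}\,\genNorm{i}{b} \;=\; \remev{\genNorm{i}{c^j} x^i}{b},
\end{equation}
and extending by left-$\Fqm$-linearity finishes the proof. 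There is no real obstacle here; the only point that needs mild care is the commutation $x\alpha = \aut(\alpha)x$ inside the inductive step, which is exactly what makes $\genNorm{n}{\beta}$ (rather than $\beta^n$) appear in the final formula.
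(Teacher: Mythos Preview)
Your proof is correct and follows essentially the same approach as the paper: both arguments use the closed-form $\remev{f}{\beta} = \sum_i f_i\,\genNorm{i}{\beta}$ together with the multiplicativity $\genNorm{i}{c^j b} = \genNorm{i}{c^j}\,\genNorm{i}{b}$ in the zero-derivation case. The only difference is that the paper cites the monomial formula $\remev{x^n}{\beta} = \genNorm{n}{\beta}$ from \cite[Lemma~2.4]{lam1988vandermonde}, whereas you supply a short self-contained inductive proof of it.
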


\begin{proof}
  By~\cite[Lemma~2.4]{lam1988vandermonde} we have that
  \begin{equation*}
    \remev{f}{c^j b}
    = \sum_{i=0}^{\deg(f)} f_i \genNorm{i}{c^j b}
    \overset{(*)}{=} \sum_{i=0}^{\deg(f)} f_i \genNorm{i}{c^j} \genNorm{i}{b}
    = \remev{\tilde{f}}{b}
  \end{equation*}
  where $\tilde{f}=\sum_{i=0}^{\deg(f)} f_i \genNorm{i}{c^j} x^i$ and $(*)$ follows since $f \in \SkewPolyringZeroDer$.
\end{proof}

The following result shows that each matrix can be represented as the remainder evaluation of a single skew polynomial over the large field $\Fqmh$ at evaluation points from the small field $\Fqm$.

\begin{lemma}\label{lem:folded_mat_to_interleaved_mat_fsrs}
 Let $\pe$ be a primitive element of $\Fqm$, define $\FSRSoffset \defeq \aut(\pe)/\pe$, let the folding parameter $\foldPar$ divide $\len \leq m$ and define $\lenFLRS \defeq\frac{\len}{\foldPar}$.
 Consider an evaluation parameter $a\in\Fqm^*$ and define the vector
 \begin{equation*}
    \b \defeq (\op{a}{1}/1, \op{a}{\pe^\foldPar}/\pe^\foldPar, \dots, \op{a}{\pe^{(\lenFLRS-1)\foldPar}}/\pe^{(\lenFLRS-1)\foldPar}) \in \Fqm^{\lenFLRS}.
 \end{equation*}
 Then any matrix $\X\in\Fqm^{\foldPar \times \lenFLRS}$ can be represented as
 \begin{equation*}
   \X =
   \begin{pmatrix}
    \remev{f}{\b}
    \\
    \remev{f}{\FSRSoffset\b}
    \\
    \vdots
    \\
    \remev{f}{\FSRSoffset^{\foldPar-1} \b}
   \end{pmatrix}
   =
   \begin{pmatrix}
    \remev{f^{(1)}}{\b}
    \\
    \remev{f^{(2)}}{\b}
    \\
    \vdots
    \\
    \remev{f^{(\foldPar)}}{\b}
   \end{pmatrix}
 \end{equation*}
 for some $f, f^{(1)}, \dots, f^{(\foldPar)} \in \SkewPolyringZeroDer_{<\len}$.
 Further, we have that
 \begin{equation*}
   \extInv_{\vecgamma}(\X) = \remev{F}{\b}
 \end{equation*}
 for some $F \in \Fqmh[x; \aut]_{<\len}$.
\end{lemma}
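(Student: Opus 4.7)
The plan is to verify the three assertions in sequence: existence of a single skew polynomial $f$ whose remainder evaluations at the scaled tuples $\FSRSoffset^i \b$ produce the rows of $\X$, passage to the row-by-row polynomials $f^{(i)}$, and finally aggregation into a polynomial $F$ over $\Fqmh$.

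First, I would rewrite the evaluation points. Since $\aut$ is a ring homomorphism fixing $\Fq$, one has $\FSRSoffset^i = \aut(\pe)^i/\pe^i = \aut(\pe^i)/\pe^i$, so for $i \in \{0, \dots, \foldPar-1\}$ and $j \in \{0, \dots, \lenFLRS-1\}$ the $(j+1)$-th entry of $\FSRSoffset^i \b$ equals
\begin{equation}
\frac{\aut(\pe^i)}{\pe^i} \cdot \frac{\aut(\pe^{j\foldPar}) a}{\pe^{j\foldPar}} = \frac{\op{a}{\pe^{i + j\foldPar}}}{\pe^{i + j\foldPar}}.
\end{equation}
As $(i,j)$ ranges over these sets, the exponents $i + j\foldPar$ run through $\{0, 1, \dots, \len-1\}$ bijectively. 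Combined with identity~\eqref{eq:rel_remev_opev}, the first claimed matrix identity reduces to finding $f \in \SkewPolyringZeroDer_{<\len}$ with
\begin{equation}
\opev{f}{\pe^{i+j\foldPar}}{a} = X_{i+1,\,j+1}\, \pe^{i+j\foldPar}
\end{equation}
for all admissible $i,j$. This is a linear system in the $\len$ coefficients of $f$ whose coefficient matrix is essentially $\opMoore{\len}{(1,\pe,\dots,\pe^{\len-1})}{a}$. Because $\pe$ is primitive in $\Fqm$ and $\len \leq m$, the elements $1, \pe, \dots, \pe^{\len-1}$ are $\Fq$-linearly independent; and because $\der = 0$, every $a \in \Fqm^\ast$ lies in a nontrivial conjugacy class of $\Fqm$. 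The rank statement cited after~\eqref{eq:def_gen_moore_mat} then yields that this $\len \times \len$ Moore matrix has full rank $\len$, so a unique such $f$ exists.

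Second, to obtain the row-by-row representation I would invoke \autoref{lem:alpha_power_remev} with $c = \FSRSoffset$ and the lemma's exponent $j$ replaced by our $i$. This yields $\remev{f}{\FSRSoffset^i b_{k}} = \remev{f^{(i+1)}}{b_k}$ for
\begin{equation}
f^{(i+1)} \defeq \sum_{l=0}^{\len-1} f_l\, \genNorm{l}{\FSRSoffset^i}\, x^l \in \SkewPolyringZeroDer_{<\len},
\end{equation}
which is exactly the second form asserted in the lemma.

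Finally, for the aggregated polynomial $F \in \Fqmh[x;\aut]_{<\len}$ I would simply set $F \defeq \sum_{i=1}^{\foldPar} \gamma_i\, f^{(i)}$, treating the elements $\gamma_1, \dots, \gamma_{\foldPar}$ of the chosen $\Fqm$-basis of $\Fqmh$ as scalar multipliers of the $\Fqm$-valued coefficients. The key technicality, and what I expect to be the main obstacle, is to work with the natural extension of $\aut$ from $\Fqm$ to $\Fqmh$ and verify that it restricts to the original $\aut$ on $\Fqm$, so that $\genNorm{l}{b_j} \in \Fqm$ for every $l, j$ and commutes with multiplication by the scalars $\gamma_i$. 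Granting this, the direct computation
\begin{equation}
\remev{F}{b_j} = \sum_{l} \left( \sum_{i=1}^{\foldPar} \gamma_i f^{(i)}_l \right) \genNorm{l}{b_j} = \sum_{i=1}^{\foldPar} \gamma_i\, \remev{f^{(i)}}{b_j} = \sum_{i=1}^{\foldPar} \gamma_i X_{i,j}
\end{equation}
identifies $\remev{F}{\b}$ componentwise with $\extInv_{\vecgamma}(\X)$, completing the argument.
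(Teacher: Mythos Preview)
Your proposal is correct and follows essentially the same route as the paper. The only cosmetic difference is in how you justify the existence of $f$: the paper stays in the remainder-evaluation setting, observes that the entries of $\widetilde{\b} = (\op{a}{1}/1, \op{a}{\pe}/\pe, \dots, \op{a}{\pe^{\len-1}}/\pe^{\len-1})$ are P-independent because $\pe$ is primitive, and then invokes the remainder interpolation polynomial $\IPrem{\widetilde{\b},\x}$ directly; you instead pass through~\eqref{eq:rel_remev_opev} to operator evaluations and appeal to the full-rank property of the generalized Moore matrix stated after~\eqref{eq:def_gen_moore_mat}. These two existence arguments are equivalent, and the remaining steps (application of \autoref{lem:alpha_power_remev} for the $f^{(i)}$, then combining coefficients over the basis $\vecgamma$ to build $F$) coincide with the paper's.
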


\begin{proof}
    Let $\x = \foldOpInv{\foldPar}(\X)$ be the vector obtained by unfolding $\X$ and define
    \begin{equation*}
        \widetilde{\b} \defeq \left(\op{a}{1}/1, \op{a}{\pe}/\pe, \dots, \op{a}{\pe^{(\lenFLRS-1)\foldPar}}/\pe^{(\lenFLRS-1)\foldPar}\right) \in \Fqm^{\lenFLRS \foldPar}.
    \end{equation*}
    Since $\pe$ is a primitive element of $\Fqm$ we have that the entries in $\widetilde{\b}$ are P-independent.
    Let $f \defeq \IPrem{\widetilde{\b}, \x} \in \SkewPolyringZeroDer_{< n}$ be the unique interpolation polynomial satisfying
    \begin{equation*}
     \remev{\IPrem{\widetilde{\b}, \x}}{\widetilde{b}_j} = x_j \quad \forall j=1,\dots, \foldPar \lenFLRS.
    \end{equation*}
    Due to the structure of the evaluation points in $\widetilde{\b}$ we can write $\X$ as
    \begin{equation*}
        \X
        = \foldOp{\foldPar}\left(\remev{f}{\widetilde{\b}}\right)
        =
        \begin{pmatrix}
        \remev{f}{\b}
        \\
        \remev{f}{\FSRSoffset\b}
        \\
        \vdots
        \\
        \remev{f}{\FSRSoffset^{\foldPar-1} \b}
       \end{pmatrix}
       \overset{(*)}{=}
       \begin{pmatrix}
        \remev{f^{(1)}}{\b}
        \\
        \remev{f^{(2)}}{\b}
        \\
        \vdots
        \\
        \remev{f^{(\foldPar)}}{\b}
       \end{pmatrix}
    \end{equation*}
    where $(*)$ follows by~\autoref{lem:alpha_power_remev}.
    By fixing a basis $\vecgamma$ of $\Fqmh$ over $\Fqm$ we can represent $\X$ over $\Fqmh$ as
    \begin{equation*}
        \extInv_{\vecgamma}(\X) = \remev{F}{\b}
    \end{equation*}
    where $F(x) = \sum_{i=0}^{n-1}F_i x^i$ with $F_i=\extInv_{\vecgamma}\left( (f_i^{(1)}, f_i^{(2)}, \dots, f_i^{(\foldPar)})^\top \right)$ for all $i = 1, \dots, n-1$.
\end{proof}

\autoref{thm:sum-rank_relation_skew_and_sum-rank_metric} shows that applying the elementwise $\Fqm$-linear transformation from~\cite[Thm.~3]{martinez2018skew} to \emph{unfolded} matrices yields an isometry between the skew metric and the sum-rank metric for matrices obtained from folded vectors.

\begin{theorem}\label{thm:sum-rank_relation_skew_and_sum-rank_metric}
 Let $\pe$ be a primitive element of $\Fqm$ and consider $\shots \in \NN^{\ast}$.
 Let $1\leq \lenShot{i} \leq m$ for all $i=1,\dots,\shots$ and let $\a=(a_1, \dots, a_\shots)\in\Fqm^\shots$ contain representatives from different conjugacy classes.
 Let the folding parameter $\foldPar$ divide $\lenShot{i}$ for all $i=1,\dots,\shots$, define the $\shots$-composition $\lenFLRSVec=(\lenFLRSshot{1}, \lenFLRSshot{2}, \dots, \lenFLRSshot{\shots})$ with $\lenFLRSshot{i}=\frac{\lenShot{i}}{\foldPar}$ for all $i=1,\dots,\shots$ and define $\h=(\foldPar, \dots, \foldPar)\in\ZZ_{\geq 0}^\shots$.
 Let $\diag(\vecbeta^{-1})$ denote the diagonal matrix of the vector
 \begin{equation}
    \vecbeta^{-1}\defeq\left(1,\pe^{-1}, \dots, (\pe^{n_1-1})^{-1} \mid \dots \mid 1, \pe^{-1}, \dots, (\pe^{n_\shots-1})^{-1}\right)
 \end{equation}
 and define the map
  \begin{align}
  \varphi_{\alpha} \, : \, \Fqm^{\foldPar \times \lenFLRS} &\to  \Fqm^{\foldPar \times \lenFLRS}, \label{eq:sum_rank_to_skew}\\
  (\shot{\X}{1} \mid \shot{\X}{2} \mid \dots \mid \shot{\X}{\ell}) &\mapsto
  \foldOp{\foldParVec}(\foldOpInv{\foldParVec}(\X) \cdot \diag(\vecbeta^{-1})). \notag
  \end{align}
  Then for any $\X \in \Fqm^{\foldPar \times \lenFLRS}$ we have that the mapping $\varphi_{\alpha}$ is an isometry between the skew metric and the sum-rank metric, i.e. we have that
  \begin{equation}
    \SkewWeight^\set{B}(\varphi_{\alpha}(\X)) = \SumRankWeight(\X)
  \end{equation}
  where $\set{B}=\{\op{a_i}{\pe^{j\foldPar}}/\pe^{j\foldPar}: j=0,\dots\lenShot{i}-1, i=1,\dots,\shots\}$.
\end{theorem}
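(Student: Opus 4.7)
The plan is to split the claim by blocks, transport the remaining single-block statement to the extension field $\Fqmh$, and then invoke \cite[Thm.~3]{martinez2018skew} over $\Fqmh$.

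First, I would reduce to a single block. The sum-rank weight is block-additive by definition. For the skew weight, each P-basis $\set{B}_i \defeq \{\op{a_i}{\pe^{j\foldPar}}/\pe^{j\foldPar}: j=0,\dots,\lenFLRSshot{i}-1\}$ is contained in the $\aut$-conjugacy class of $a_i$, and by hypothesis the $a_i$ lie in pairwise disjoint conjugacy classes; consequently, the Wedderburn-style factor polynomial for each block contributes coprimely to the overall $\lclm$ in \eqref{eq:def_skew_weight_mat}, and the degrees add. It thus suffices to establish $\SkewWeight^{\set{B}_i}(\varphi_\alpha(\shot{\X}{i})) = \rk_q(\shot{\X}{i})$ block by block.

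For a fixed block (index suppressed), I would pick any $\Fqm$-basis $\vecgamma = (\gamma_1,\dots,\gamma_\foldPar)$ of $\Fqmh$ and pass to $\Fqmh$-vectors via $\extInv_\vecgamma$. Since this map is an $\Fq$-linear bijection between $\Fqm^\foldPar$-columns and elements of $\Fqmh$, it preserves $\Fq$-linear dependencies among columns, so $\rk_q(\shot{\X}{i}) = \rk_q(\vec{u})$ with $\vec{u} \defeq \extInv_\vecgamma(\shot{\X}{i}) \in \Fqmh^{\lenFLRSshot{i}}$. A direct column-wise calculation shows that the $j$-th column of $\varphi_\alpha(\shot{\X}{i})$ equals $\pe^{-(j-1)\foldPar}\bigl(X_{1,j}, \pe^{-1}X_{2,j}, \dots, \pe^{-(\foldPar-1)}X_{\foldPar,j}\bigr)^{\!\top}$, so under $\extInv_\vecgamma$ it becomes $\pe^{-(j-1)\foldPar}\tilde{u}_j$, where $\tilde{\vec{u}} \defeq \extInv_{\tilde{\vecgamma}}(\shot{\X}{i})$ is the column expansion with respect to the rescaled $\Fqm$-basis $\tilde{\vecgamma} \defeq (\gamma_1, \pe^{-1}\gamma_2, \dots, \pe^{-(\foldPar-1)}\gamma_\foldPar)$ of $\Fqmh$. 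Since any change of $\Fqm$-basis is still $\Fq$-linear, $\rk_q(\tilde{\vec{u}}) = \rk_q(\vec{u}) = \rk_q(\shot{\X}{i})$, and we have $\extInv_\vecgamma(\varphi_\alpha(\shot{\X}{i})) = \tilde{\vec{u}} \cdot \diag\bigl(1, \pe^{-\foldPar}, \dots, \pe^{-(\lenFLRSshot{i}-1)\foldPar}\bigr)$.

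This last identity is precisely the form of the unfolded isometry of \cite[Thm.~3]{martinez2018skew}, now with large field $\Fqmh$, small field $\Fq$, evaluation parameter $a_i$, and code locators $1, \pe^\foldPar, \dots, \pe^{(\lenFLRSshot{i}-1)\foldPar}$; its proof transfers verbatim. The hypotheses are met: the locators are $\Fq$-linearly independent because $(\lenFLRSshot{i}-1)\foldPar = \lenShot{i} - \foldPar \leq m-1$, so they form a subset of the $\Fq$-basis $\{1, \pe, \dots, \pe^{m-1}\}$ of $\Fqm \subseteq \Fqmh$, and $a_i$ represents a nontrivial conjugacy class by hypothesis. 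Invoking the theorem yields $\SkewWeight^{\set{B}_i}(\extInv_\vecgamma(\varphi_\alpha(\shot{\X}{i}))) = \rk_q(\tilde{\vec{u}}) = \rk_q(\shot{\X}{i})$, and combining this with the block decomposition closes the argument.

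The step I expect to be the main obstacle is the column-wise identification in the second paragraph: keeping track of how the unfolding, the diagonal scaling by $\diag(\vecbeta^{-1})$, the refolding, and the basis expansion $\extInv_\vecgamma$ interact forces the appearance of the rescaled basis $\tilde{\vecgamma}$, and this is precisely what unlocks the reduction to the unfolded isometry. A minor secondary issue is to verify that the conjugacy-class hypothesis on $\a$ survives the lift from $\Fqm$ to $\Fqmh$, which amounts to choosing the extension of $\aut$ to $\Fqmh$ as the corresponding power of the Frobenius of $\Fqmh/\Fq$.
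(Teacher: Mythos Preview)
Your proposal is correct and follows the same core strategy as the paper: lift the folded matrices to vectors over $\Fqmh$ via $\extInv_{\vecgamma}$ and then invoke \cite[Thm.~3]{martinez2018skew} over the extension $\Fqmh/\Fq$. The differences are stylistic rather than substantive. The paper does not block-decompose the skew weight; instead it applies \cite[Thm.~3]{martinez2018skew} once to the full vector $\widetilde{\x}\in\Fqmh^{\lenFLRS}$, which already handles the multi-block sum-rank setting and so avoids your coprimality-of-$\lclm$'s argument. For the passage to $\Fqmh$, the paper replaces your explicit column computation and rescaled basis $\tilde{\vecgamma}$ by the polynomial representation of \autoref{lem:alpha_power_remev} and \autoref{lem:folded_mat_to_interleaved_mat_fsrs}: it writes each row of $\shot{\widetilde{\X}}{i}$ as $\remev{f}{\FSRSoffset^{r}\shot{\b}{i}}=\remev{f^{(r)}}{\shot{\b}{i}}$ and then collapses the $\foldPar$ rows into a single evaluation $\remev{F}{\shot{\b}{i}}$ with $F\in\Fqmh[x;\aut]$. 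Your basis-rescaling trick is a clean alternative that sidesteps those two lemmas entirely, while the paper's route makes the polynomial interpretation (and hence the link to the code construction in \autoref{prop:rel_flrs_fsrs}) more transparent. Both arrive at exactly the same invocation of \cite[Thm.~3]{martinez2018skew}, and both implicitly rely on the same ``secondary issue'' you flagged about extending $\aut$ to $\Fqmh$.
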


\begin{proof}
  The vectors $\shot{\vecalpha}{i} \defeq (1, \pe^\foldPar, \dots, \pe^{(\lenFLRSshot{i}-1)\foldPar})$ contain $\Fq$-linearly independent elements from $\Fqm$ since $\pe$ is a primitive element of $\Fqm$ and $\lenShot{i}=\lenFLRSshot{i}\foldPar \leq m$ for all $i=1,\dots,\shots$.
  Thus, by~\cite[Thm.~4.5]{lam1988vandermonde} we have that the vectors
  \begin{equation*}
    \shot{\b}{i} = \left(\op{a_i}{1}, \op{a_i}{\pe^\foldPar}/\pe^\foldPar, \dots, \op{a_i}{\pe^{(\lenFLRSshot{i}-1)\foldPar}}/\pe^{(\lenFLRSshot{i}-1)\foldPar}\right)
  \end{equation*}
  contain P-independent elements for all $i=1, \dots, \shots$.
  Since $a_1, \dots, a_\shots$ are representatives of different conjugacy classes of $\Fqm$, we also have that the entries in $\b=(\shot{\b}{1} \mid \shot{\b}{2} \mid \dots \mid \shot{\b}{\shots})$ are P-independent which implies that $\set{B}$ is a P-independent set (cf.~\cite[Thm.~9]{martinez2019reliable} and~\cite{lam1985general, lam1988vandermonde}).

  By using the relation between the generalized operator evaluation and the remainder evaluation in~\eqref{eq:rel_remev_opev} and the result of~\autoref{lem:folded_mat_to_interleaved_mat_fsrs}, we can write the blocks of the transformed tuple
  \begin{equation*}
    \widetilde{\X} \defeq \varphi_{\pe}(\X)=\foldOp{\foldParVec}\left(\foldOpInv{\foldParVec}(\X)\cdot \diag(\vecbeta^{-1})\right)
  \end{equation*}
  in terms of the remainder evaluation as
  \begin{equation*}
    \shot{\widetilde{\X}}{i}
    \defeq
    \foldOp{\foldPar}\left(\foldOpInv{\foldPar}(\shot{\X}{i})\cdot \diag\left((\shot{\vecbeta}{i})^{-1}\right)\right)
    =
    \begin{pmatrix}
     \remev{f}{\shot{\b}{i}}
     \\
     \remev{f}{\FSRSoffset \shot{\b}{i}}
     \\
     \vdots
     \\
     \remev{f}{\FSRSoffset^{\foldPar - 1} \shot{\b}{i}}
    \end{pmatrix}
    =
    \begin{pmatrix}
     \remev{f^{(1)}}{\shot{\b}{i}}
     \\
     \remev{f^{(2)}}{\shot{\b}{i}}
     \\
     \vdots
     \\
     \remev{f^{(\foldPar)}}{\shot{\b}{i}}
    \end{pmatrix},
  \end{equation*}
  where $(\shot{\vecbeta}{i})^{-1} \defeq (1,\pe^{-1},\dots,(\pe^{\lenShot{i}-1})^{-1})$ for all $i=1,\dots,\shots$ such that $\vecbeta^{-1}=((\shot{\vecbeta}{1})^{-1} \mid \dots \mid (\shot{\vecbeta}{\shots})^{-1})$ and $f^{(j)}=\sum_{l=0}^{n-1} f_l \genNorm{a_i}{\FSRSoffset^j} x^l$.

  Hence we can write each transformed block $\shot{\widetilde{\X}}{i}$ over $\Fqmh$ as an evaluation of $F \in \Fqmh[x; \aut]_{<n}$ at the P-independent elements from $\Fqm$ in $\shot{\b}{i}$, i.e. we have
  \begin{equation*}
    \shot{\widetilde{\x}}{i} \defeq \extInv_{\vecgamma}(\shot{\widetilde{\X}}{i}) = \remev{F}{\shot{\b}{i}}
    \quad \forall i = 1, \dots, \shots.
  \end{equation*}

  Define the vectors $\widetilde{\x} \defeq (\shot{\widetilde{\x}}{1} \mid \shot{\widetilde{\x}}{2} \mid \dots \mid \shot{\widetilde{\x}}{\shots}) \in \Fqmh^\lenFLRS$ and $\x \defeq (\shot{\x}{1} \mid \shot{\x}{2} \mid \dots \mid \shot{\x}{\shots}) \in \Fqmh^\lenFLRS$.
  Then it follows from~\cite[Thm.~3]{martinez2018skew} that
  \begin{equation*}
   \SkewWeight^{\set{B}}(\widetilde{\x}) = \SumRankWeight(\x).
  \end{equation*}
\end{proof}

\autoref{ex:sum_rank_skew_map} illustrates the operator $\varphi_{\alpha}$.
\begin{example}\label{ex:sum_rank_skew_map}
  Consider a matrix $\X=(\shot{\X}{1} \mid\shot{\X}{2})\in\Fqm^{\foldPar \times \lenFLRS}$ where $\foldPar=3$ and $\lenFLRSVec=(2,3)$.
  Then the operator $\varphi_{\alpha}$ applied to $\X$ gives
  \begin{equation*}
    \varphi_{\alpha}(\X)=
    \left(
    \begin{array}{cc|ccc}
     \shot{x}{1}_{1,1}/1 & \shot{x}{1}_{1,2}/\pe^3 & \shot{x}{2}_{1,1}/1 & \shot{x}{2}_{1,2}/\pe^3 & \shot{x}{2}_{1,3}/\pe^6
     \\
     \shot{x}{1}_{2,1}/\pe & \shot{x}{1}_{2,2}/\pe^4 & \shot{x}{2}_{2,1}/\pe & \shot{x}{2}_{2,2}/\pe^4 & \shot{x}{2}_{2,3}/\pe^7
     \\
     \shot{x}{1}_{3,1}/\pe^2 & \shot{x}{1}_{3,2}/\pe^5 & \shot{x}{2}_{3,1}/\pe^2 & \shot{x}{2}_{3,2}/\pe^5 & \shot{x}{2}_{3,3}/\pe^8
    \end{array}
    \right).
  \end{equation*}
\end{example}

Skew Reed--Solomon codes were proposed by Boucher and Ulmer in~\cite{boucher2014linear} and further investigated in~\cite{liu2015construction, martinez2018skew}.

\begin{definition}[Skew Reed--Solomon Codes~\cite{boucher2014linear}]
    Let $\b=(b_1, b_2, \dots, b_n) \in \Fqm^n$ contain P-independent elements from $\Fqm$. Then a \emph{\acl{SRS} code} of length $n$ and dimension $k \leq n$ is defined as
    \begin{equation*}
      \skewRS{\b; n,k}=\{\remev{f}{\b} : f \in \SkewPolyringZeroDer_{<k}\}.
    \end{equation*}
\end{definition}

\begin{definition}[Folded Skew Reed--Solomon Codes]\label{def:folded_srs}
    Let $\pe$ be a primitive element of $\Fqm$ and define $\FSRSoffset \defeq \aut(\pe)/\pe$.
    Let $a_1,\dots,a_\shots$ be representatives of pairwise distinct nontrivial conjugacy classes of $\Fqm$.
    Define $\b=(\shot{\b}{1} \mid \shot{\b}{2} \mid \dots \mid \shot{\b}{\shots})$ with
    \begin{equation*}
        \shot{\b}{i} \defeq
        a_i \left(1, \FSRSoffset, \FSRSoffset^2, \dots, \FSRSoffset^{\lenShot{i}-1}\right)
        \in \Fqm^{\lenShot{i}}
        \quad \forall i = 1, \dots, \shots.
    \end{equation*}
    Choose a folding parameter $\foldPar \in \NN$ satisfying $\foldPar \mid \lenShot{i}$ for all $1 \leq i \leq \shots$ and $\lenFLRSshot{i} \defeq \frac{\lenShot{i}}{\foldPar} \leq \foldPar$ for all $1 \leq i \leq \shots$ and write
    $\N \defeq (\lenFLRSshot{1}, \dots, \lenFLRSshot{\shots})$.
    Then an \emph{$\foldPar$-folded \acl{SRS} code} of length $\lenFLRS \defeq \sum_{i=1}^{\shots} \lenFLRSshot{i}$ and dimension $k$ is defined as
    \begin{equation}
        \foldedSkewRS{\b, \foldPar; \lenFLRSVec, k}
        \defeq
        \left\{
        \foldOp{\foldPar}(\remev{f}{\b}) : f \in \SkewPolyringZeroDer_{<k} \right\}.
    \end{equation}
\end{definition}

\begin{remark}
  Equivalently, we can define \ac{FSRS} codes as
  \begin{equation}
        \foldedSkewRS{\b, \foldPar; \lenFLRSVec, k}
        \defeq
        \left\{
        \foldOp{\foldPar}(\c) : \c \in \skewRS{\b; n,k} \right\}
  \end{equation}
  where $\b$ is defined as in~\autoref{def:folded_srs}.
\end{remark}

Note that any codeword $\C \in \foldedSkewRS{\b, \foldPar; \lenFLRSVec, k} \subseteq
\Fqm^{\foldPar \times \lenFLRS}$ corresponding to a message polynomial $f \in \SkewPolyringZeroDer_{<k}$ has the form
\begin{equation*}
    \C = \left( \C^{(1)} \mid \dots \mid \C^{(\shots)} \right)
\end{equation*}
where
\begin{equation*}
    \C^{(i)} =
    \begin{pmatrix}
        \remev{f}{a_i} & \remev{f}{\FSRSoffset^\foldPar a_i} & \cdots
        & \remev{f}{\FSRSoffset^{\lenShot{i}-\foldPar} a_i}
        \\
        \remev{f}{\FSRSoffset a_i} & \remev{f}{\FSRSoffset^{\foldPar+1} a_i} & \cdots
        & \remev{f}{\FSRSoffset^{\lenShot{i}-\foldPar+1} a_i}
        \\
        \vdots & \vdots & \ddots & \vdots
        \\
        \remev{f}{\FSRSoffset^{\foldPar - 1} a_i} & \remev{f}{\FSRSoffset^{2\foldPar-1} a_i} & \cdots
        & \remev{f}{\FSRSoffset^{\lenShot{i}-1} a_i}
    \end{pmatrix}
    \in \Fqm^{\foldPar \times \lenFLRSshot{i}}
\end{equation*}
for all $i \in \{1, \ldots, \shots\}$.

\begin{proposition}[Relation between \ac{FLRS} and \ac{FSRS} Codes]\label{prop:rel_flrs_fsrs}
  Let $\foldedSkewRS{\b, \foldPar; \allowbreak \lenFLRSVec, k}$ be an \ac{FSRS} code whose parameters comply with~\autoref{def:folded_srs}.
  Then,
  \begin{equation*}
    \foldedSkewRS{\b, \foldPar; \lenFLRSVec, k}
    =
    \{\varphi_{\pe}(\C) : \C \in \foldedLinRS{\pe, \a, \foldParVec; \lenFLRSVec, k}\}
  \end{equation*}
  where the code $\foldedLinRS{\pe, \a, \foldParVec; \lenFLRSVec, k}$ with $\foldParVec = (\foldPar, \dots, \foldPar)$ is considered as subset of $\Fqm^{\foldPar \times \lenFLRS}$.
\end{proposition}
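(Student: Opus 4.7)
The plan is to verify that, for each message polynomial $f \in \SkewPolyringZeroDer_{<k}$, the map $\varphi_{\pe}$ sends the \ac{FLRS} codeword associated with $f$ to the \ac{FSRS} codeword associated with the same $f$; a short bijectivity argument then upgrades this to the claimed equality of code sets. Since both families are parametrized by the same space $\SkewPolyringZeroDer_{<k}$ of skew polynomials, an entry-by-entry identification yields a natural correspondence.

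First I would take an arbitrary $\C \in \foldedLinRS{\pe, \a, \foldParVec; \lenFLRSVec, k}$, write $\C = \foldOp{\foldParVec}(\opev{f}{\vecbeta}{\a})$, and unfold blockwise. In the $i$-th block, the $j$-th entry of the unfolded vector is $\opev{f}{\pe^j}{a_i}$ for $j \in \{0, \dots, \lenShot{i}-1\}$. Right-multiplication by $\diag(\vecbeta^{-1})$ multiplies this entry by $\pe^{-j}$, producing $\opev{f}{\pe^j}{a_i} \cdot \pe^{-j}$.

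The central step is the relation~\eqref{eq:rel_remev_opev}, which in the zero-derivation case reads $\remev{f}{\aut(b) a b^{-1}} = \opev{f}{b}{a} \cdot b^{-1}$ for $a \in \Fqm$ and $b \in \Fqm^{\ast}$. Specializing to $a = a_i$ and $b = \pe^j$ and using $\aut(\pe^j)/\pe^j = (\aut(\pe)/\pe)^j = \FSRSoffset^j$ yields
\begin{equation*}
\opev{f}{\pe^j}{a_i} \cdot \pe^{-j} = \remev{f}{a_i \FSRSoffset^j}.
\end{equation*}
Hence the scaled unfolded $i$-th block equals $\remev{f}{\shot{\b}{i}}$, which is precisely the unfolded $i$-th block of the \ac{FSRS} codeword corresponding to $f$ in~\autoref{def:folded_srs}. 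Reapplying $\foldOp{\foldParVec}$ then shows that $\varphi_{\pe}(\C)$ is the \ac{FSRS} codeword associated with $f$.

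To conclude, I would observe that $\varphi_{\pe}$ is an $\Fqm$-linear bijection on $\Fqm^{\foldPar \times \lenFLRS}$: the diagonal matrix $\diag(\vecbeta^{-1})$ is nonsingular because its entries are nonzero powers of the primitive element $\pe$, and $\foldOp{\foldParVec}$ and $\foldOpInv{\foldParVec}$ are mutually inverse bijections. The restriction of $\varphi_{\pe}$ to the \ac{FLRS} code is therefore injective, and the preceding computation shows that its image is exactly the \ac{FSRS} code, yielding the claimed equality. The only mild obstacle is checking that the folding convention and the ordering of entries in $\vecbeta^{-1}$ match those in $\shot{\b}{i}$ between~\autoref{def:flrs-codes} and~\autoref{def:folded_srs}, but this is a purely notational verification once one sees that both impose the same block lengths and the uniform folding parameter $\foldParVec = (\foldPar, \dots, \foldPar)$.
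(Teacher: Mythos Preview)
Your proof is correct and follows essentially the same approach as the paper: the paper's proof is a one-line appeal to the relation~\eqref{eq:rel_remev_opev} between generalized operator evaluation and remainder evaluation, and you have simply spelled out that computation entrywise (together with the bijectivity of $\varphi_{\pe}$, which the paper leaves implicit). The core identity $\opev{f}{\pe^j}{a_i}\cdot\pe^{-j}=\remev{f}{a_i\FSRSoffset^j}$ is exactly what the paper intends when it invokes~\eqref{eq:rel_remev_opev}.
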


\begin{proof}
  The result follows directly by using the relation between the generalized operator evaluation and the remainder evaluation in~\eqref{eq:rel_remev_opev}.
\end{proof}

By using the isometry between the sum-rank metric and the skew metric from \autoref{thm:sum-rank_relation_skew_and_sum-rank_metric}, we obtain the following corollary from~\autoref{thm:minimum_distance_flrs}.

\begin{corollary}[Minimum Skew Distance]
  The minimum skew distance of an \ac{FSRS} code $\mycode{C} \defeq \foldedSkewRS{\b, \foldPar; \lenFLRSVec, k}$ of length $\lenFLRS=\sum_{i=1}^{\shots}\lenFLRSshot{i}$ as defined in~\autoref{def:folded_srs} is
  \begin{equation*}
    \SkewDist(\mycode{C}) = \lenFLRS - \left\lceil\frac{k}{\foldPar}\right\rceil + 1.
  \end{equation*}
\end{corollary}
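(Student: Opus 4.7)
The plan is to reduce the minimum-skew-distance computation for $\mycode{C}$ to a minimum-sum-rank-distance computation for an associated FLRS code via the metric isometry, and then specialise the FLRS distance formula to the equal-folding setting.

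First, by Proposition~\ref{prop:rel_flrs_fsrs} the code $\mycode{C} = \foldedSkewRS{\b, \foldPar; \lenFLRSVec, k}$ equals $\varphi_\pe(\mycode{C}')$, where $\mycode{C}' \defeq \foldedLinRS{\pe, \a, \foldParVec; \lenFLRSVec, k}$ with constant folding vector $\foldParVec \defeq (\foldPar, \dots, \foldPar)$. Theorem~\ref{thm:sum-rank_relation_skew_and_sum-rank_metric} shows that $\varphi_\pe$ is $\Fq$-linear and bijective and satisfies $\SkewWeight^{\set{B}}(\varphi_\pe(\X)) = \SumRankWeight(\X)$ for every $\X \in \Fqm^{\foldPar \times \lenFLRS}$, with $\set{B}$ the P-basis attached to $\b$ in Definition~\ref{def:folded_srs}. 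Because the minimum distance of a linear code equals its minimum nonzero weight, this isometry immediately yields $\SkewDist(\mycode{C}) = \SumRankDist(\mycode{C}')$, and it remains to compute the right-hand side.

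Second, I would apply Theorem~\ref{thm:minimum_distance_flrs} to $\mycode{C}'$. The ordering hypothesis $\foldParShot{1} \geq \dots \geq \foldParShot{\shots}$ is trivially satisfied, and the divisibility condition for meeting the Singleton-like bound (each $\foldParShot{\ispecial}$ dividing $\foldParShot{i}\lenFLRSshot{i}$ for $i > \ispecial$) holds automatically when all $\foldParShot{i} = \foldPar$. Writing $S_2 \defeq \sum_{i=\ispecial+1}^{\shots} \lenFLRSshot{i}$, the ceiling in the theorem simplifies through $\lceil (k - \foldPar S_2 - 1)/\foldPar \rceil = \lceil (k-1)/\foldPar \rceil - S_2$, which combined with $\sum_{i=1}^{\ispecial} \lenFLRSshot{i} + S_2 = \lenFLRS$ collapses the formula into an $\ispecial$-independent closed form. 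Matching this against the Singleton-like upper bound of Theorem~\ref{thm:singleton_bound}, which for equal folding parameters evaluates to $\lenFLRS - \lceil k/\foldPar \rceil + 1$, produces $\SkewDist(\mycode{C}) = \SumRankDist(\mycode{C}') = \lenFLRS - \lceil k/\foldPar \rceil + 1$.

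The main obstacle is the ceiling bookkeeping in the last step, since the relation between $\lceil (k-1)/\foldPar \rceil$ and $\lceil k/\foldPar \rceil$ depends on the residue of $k$ modulo $\foldPar$, and the edge case $k \equiv 1 \pmod{\foldPar}$ needs to be reconciled with the Singleton-like upper bound to fix the correct branch. Beyond this routine arithmetic, the proof is a direct application of the isometry $\varphi_\pe$ together with Theorem~\ref{thm:minimum_distance_flrs}.
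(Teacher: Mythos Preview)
Your approach is essentially the same as the paper's: the corollary is presented there as an immediate consequence of the isometry in Theorem~\ref{thm:sum-rank_relation_skew_and_sum-rank_metric} combined with the minimum-distance formula for \ac{FLRS} codes in Theorem~\ref{thm:minimum_distance_flrs}, which is exactly your reduction via Proposition~\ref{prop:rel_flrs_fsrs}. Your extra care with the ceiling arithmetic (the $\lceil (k-1)/\foldPar\rceil$ versus $\lceil k/\foldPar\rceil$ discrepancy) goes beyond what the paper spells out; the paper treats the specialisation as routine and does not discuss that edge case.
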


\subsection{Interpolation-Based Decoding of Folded Skew Reed--Solomon Codes} \label{subsec:decoding-of-fsrs-codes}

We now consider interpolation-based decoding of \ac{FSRS} codes with respect to the skew metric.
As a channel model we consider the skew error channel with input and output alphabet $\Fqm^{\foldPar \times \lenFLRS}$, where the input $\C$ is related to the output $\R$ by
\begin{equation}\label{eq:skew_error_channel}
  \R = \C + \E
\end{equation}
and $\E$ with $\SkewWeight(\E)=t$ is chosen uniformly at random from all matrices from $\Fqm^{\foldPar \times \lenFLRS}$ with skew weight $t$.

Suppose we transmit a codeword $\C \in \foldedSkewRS{\b, \foldPar; \lenFLRSVec, k}$ over a skew error channel~\eqref{eq:skew_error_channel} and receive a matrix $\R=(\shot{\R}{1} \mid \shot{\R}{2} \mid \dots \mid \shot{\R}{\shots})$.
Let $\varphi^{-1}_{\alpha}$ denote the inverse map of $\varphi_{\alpha}$.
By using the isometry between the sum-rank metric and the relation between \ac{FLRS} codes and \ac{FSRS} codes, we can transform the received matrix $\R$ to
\begin{equation*}
  \widetilde{\R} \defeq \varphi^{-1}_{\alpha}(\R) = \varphi^{-1}_{\alpha}(\C) + \varphi^{-1}_{\alpha}(\E)
\end{equation*}
where $\varphi^{-1}_{\alpha}(\C)$ is in the corresponding \ac{FLRS} code $\foldedLinRS{\pe, \a, \foldParVec; \lenFLRSVec, k}$ (see \autoref{prop:rel_flrs_fsrs}) and $\varphi^{-1}_{\alpha}(\E)$ has sum-rank weight $t$ (see~\autoref{thm:sum-rank_relation_skew_and_sum-rank_metric}).
Hence, the decoding problem for \ac{FSRS} codes with respect to the skew metric is mapped to an equivalent decoding problem for \ac{FLRS} codes in the sum-rank metric.

Therefore, we can use the interpolation-based decoding schemes for \ac{FLRS} codes from~\autoref{sec:decoding} to decode \ac{FSRS} codes in the skew metric as follows:

\begin{enumerate}
  \item Compute $\widetilde{\R} \defeq \varphi^{-1}_{\alpha}(\R)$, which requires $\OCompl{n}$ operations in $\Fqm$.

  \item Apply a decoder for \ac{FLRS} codes in the sum-rank metric (e.g.~\autoref{alg:decoder}) to $\widetilde{\R}$.
\end{enumerate}

\section{Conclusion} \label{sec:conclusion}

\acresetall

We introduced the family of \ac{FLRS} codes whose members are \ac{MSRD} codes for appropriate parameter choices.
We further described an efficient decoding scheme to correct sum-rank errors in the context of list and probabilistic
unique decoding with quadratic complexity in the length of the unfolded code.
Up to our knowledge, this is the first explicit \ac{MSRD} code construction that allows different block sizes and has
an explicit efficient decoding algorithm.
We analyzed the decoder and gave upper bounds on both list size and failure probability.
Monte Carlo simulations verified that the observed failure probability is indeed below the derived bound and further
experiments show that the assumption under which the upper bound was derived is reasonable.
Since the proposed decoding scheme has a rate restriction, we investigated a Justesen-like improvement tailored to
high-rate \ac{FLRS} codes.

The focus of the second part of the paper was the skew metric for which we introduced \ac{FSRS} codes in the
zero-derivation setting.
Moreover, we explained how the decoding scheme for \ac{FLRS} codes in the sum-rank metric can be applied to the
presented skew-metric codes.

Goals for further research could be the extension of \ac{FSRS} codes to the nonzero-derivation case or to more general
parameters as e.g.\ code locators.
Moreover, it is tempting to study if there are other useful ways of folding codes in different metrics.

\vspace*{.5cm}
\noindent
\textbf{Acknowledgments.}
F. Hörmann and H. Bartz acknowledge the financial support by the Federal Ministry of Education and Research of
Germany in the programme of ``Souverän. Digital. Vernetzt.'' Joint project 6G-RIC, project identification
number: 16KISK022.

\bibliographystyle{splncs04}
\bibliography{references}

\begin{thebibliography}{10}
\providecommand{\url}[1]{\texttt{#1}}
\providecommand{\urlprefix}{URL }
\providecommand{\doi}[1]{https://doi.org/#1}

\bibitem{bartz2020fast}
{Bartz}, H., {Jerkovits}, T., {Puchinger}, S., {Rosenkilde}, J.: {Fast Decoding
  of Codes in the Rank, Subspace, and Sum-Rank Metric}. IEEE Transactions on
  Information Theory  \textbf{67}(8),  5026--5050 (2021)

\bibitem{bartz2015list}
Bartz, H.: {List and Probabilistic Unique Decoding of High-Rate Folded
  Gabidulin Codes}. In: WCC 2015: The 9th International Workshop on Coding and
  Cryptography (2015)

\bibitem{bartz2017algebraic}
Bartz, H.: {Algebraic Decoding of Subspace and Rank-Metric Codes}. Ph.D.
  thesis, Technische Universit{\"a}t M{\"u}nchen (2017)

\bibitem{bartz2022knh}
Bartz, H., Jerkovits, T., Rosenkilde, J.: {Fast Kötter-Nielsen-Høholdt
  Interpolation over Skew Polynomial Rings and its Application in Coding
  Theory}. submitted to: Designs, Codes and Cryptography  (2022), available at
  https://arxiv.org/abs/2207.01319

\bibitem{bartz2022fast}
Bartz, H., Puchinger, S.: {Fast Decoding of Interleaved Linearized
  Reed–-Solomon Codes and Variants}. submitted to: IEEE Transactions on
  Information Theory  (2022), available at https://arxiv.org/abs/2201.01339

\bibitem{BartzSidorenko_FoldedGabidulin2015_DCC}
Bartz, H., Sidorenko, V.: {Algebraic Decoding of Folded Gabidulin Codes}.
  Designs, Codes and Cryptography  \textbf{82}(1),  449--467 (2017)

\bibitem{bartz2014efficient}
Bartz, H., Wachter-Zeh, A.: {Efficient Interpolation-Based Decoding of
  Interleaved Subspace and Gabidulin Codes}. In: 52nd Annual Allerton
  Conference on Communication, Control, and Computing. pp. 1349--1356 (2014)

\bibitem{boucher2020algorithm}
Boucher, D.: {An Algorithm for Decoding Skew Reed--Solomon Codes with respect
  to the Skew Metric}. Designs, Codes and Cryptography  \textbf{88}(9),
  1991--2005 (2020)

\bibitem{boucher2014linear}
Boucher, D., Ulmer, F.: {Linear Codes using Skew Polynomials with Automorphisms
  and Derivations}. Designs, Codes and Cryptography  \textbf{70}(3),  405--431
  (2014)

\bibitem{brauchle2015}
Brauchle, J.: {On the Error-Correcting Radius of Folded Reed--Solomon Code
  Designs}. In: Coding Theory and Applications, pp. 77--86 (2015)

\bibitem{byrne2021fundamental}
Byrne, E., Gluesing-Luerssen, H., Ravagnani, A.: {Fundamental Properties of
  Sum-Rank-Metric Codes}. IEEE Transactions on Information Theory
  \textbf{67}(10),  6456--6475 (2021)

\bibitem{camps2022optimal}
Camps-Moreno, E., Gorla, E., Landolina, C., Garc{\'\i}a, E.L.,
  Mart{\'\i}nez-Pe{\~n}as, U., Salizzoni, F.: {Optimal Anticodes, MSRD Codes,
  and Generalized Weights in the Sum-Rank Metric}. IEEE Transactions on
  Information Theory  \textbf{68}(6),  3806--3822 (2022)

\bibitem{caruso2019residues}
Caruso, X.: {Residues of Skew Rational Functions and Linearized Goppa Codes}.
  arXiv preprint  (2019), available at https://arxiv.org/abs/1908.08430v1

\bibitem{chen2022linear}
Chen, H.: {Linear MSRD Codes with Various Matrix Sizes and Unrestricted
  Lengths}. arXiv preprint  (2022), available at
  https://arxiv.org/abs/2206.02330

\bibitem{chen2022new}
Chen, H.: {New Explicit Good Linear Sum-Rank-Metric Codes}. arXiv preprint
  (2022), available at https://arxiv.org/abs/2205.13087

\bibitem{cheung1989weightDistribution}
Cheung, K.: {The Weight Distribution and Randomness of Linear Codes}. TDA
  Progress Report (42-97),  208--215 (1989)

\bibitem{cover2006elementsOfInformationTheory}
Cover, T.M., Thomas, J.A.: Elements of Information Theory. Wiley-Interscience,
  USA (2006)

\bibitem{Gibbs2002probabilityMetrics}
Gibbs, A.L., Su, F.E.: {On Choosing and Bounding Probability Metrics}.
  International Statistical Review  \textbf{70}(3), ~419 (2002)

\bibitem{Guruswami2008Explicit}
Guruswami, V., Rudra, A.: {Explicit Codes Achieving List Decoding Capacity:
  Error-Correction With Optimal Redundancy}. IEEE Transactions on Information
  Theory  \textbf{54}(1),  135--150 (2008)

\bibitem{hoermann2022efficient}
Hörmann, F., Bartz, H.: {Efficient Decoding of Folded Linearized Reed--Solomon
  Codes in the Sum-Rank Metric}. In: WCC 2022: The Twelfth International
  Workshop on Coding and Cryptography (2022)

\bibitem{lam1985general}
Lam, T.Y.: {A General Theory of Vandermonde Matrices}. Center for Pure and
  Applied Mathematics, University of California, Berkeley (1985)

\bibitem{lam1988vandermonde}
Lam, T.Y., Leroy, A.: {Vandermonde and Wronskian Matrices over Division Rings}.
  Journal of Algebra  \textbf{119}(2),  308--336 (1988)

\bibitem{lam1988algebraic}
Lam, T., Leroy, A.: {Algebraic Conjugacy Classes and Skew Polynomial Rings}.
  In: Perspectives in Ring Theory, pp. 153--203 (1988)

\bibitem{le_gall_powers_2014}
Le~Gall, F.: {Powers of Tensors and Fast Matrix Multiplication}. In:
  International Symposium on Symbolic and Algebraic Computation (ISSAC). pp.
  296--303 (2014)

\bibitem{leroy1995pseudolinear}
Leroy, A., et~al.: {Pseudolinear Transformations and Evaluation in Ore
  Extensions}. Bulletin of the Belgian Mathematical Society -- Simon Stevin
  \textbf{2}(3),  321--347 (1995)

\bibitem{liu2014kotter}
Liu, S., Manganiello, F., Kschischang, F.R.: {K{\"o}tter Interpolation in Skew
  Polynomial Rings}. Designs, Codes and Cryptography  \textbf{72}(3),  593--608
  (2014)

\bibitem{liu2015construction}
Liu, S., Manganiello, F., Kschischang, F.R.: {Construction and Decoding of
  Generalized Skew-Evaluation Codes}. In: IEEE 14th Canadian Workshop on
  Information Theory (CWIT). pp. 9--13 (2015)

\bibitem{lu2005unified}
Lu, H.f., Kumar, P.V.: {A Unified Construction of Space-Time Codes with Optimal
  Rate-Diversity Tradeoff}. IEEE Transactions on Information Theory
  \textbf{51}(5),  1709--1730 (2005)

\bibitem{Mahdavifar2012Listdecoding}
Mahdavifar, H., Vardy, A.: {List-Decoding of Subspace Codes and Rank-Metric
  Codes up to Singleton Bound}. In: IEEE International Symposium on Information
  Theory (ISIT). pp. 1488--1492 (2012)

\bibitem{martinez2018skew}
Mart{\'\i}nez-Pe{\~n}as, U.: {Skew and Linearized Reed--Solomon Codes and
  Maximum Sum Rank Distance Codes over any Division Ring}. Journal of Algebra
  \textbf{504},  587--612 (2018)

\bibitem{martinez2019reliable}
Mart{\'\i}nez-Pe{\~n}as, U., Kschischang, F.R.: {Reliable and Secure Multishot
  Network Coding using Linearized Reed--Solomon Codes}. IEEE Transactions on
  Information Theory  \textbf{65}(8),  4785--4803 (2019)

\bibitem{martinez2019universal}
Mart{\'\i}nez-Pe{\~n}as, U., Kschischang, F.R.: {Universal and Dynamic Locally
  Repairable Codes with Maximal Recoverability via Sum-Rank Codes}. IEEE
  Transactions on Information Theory  \textbf{65}(12),  7790--7805 (2019)

\bibitem{nobrega2010multishot}
N{\'o}brega, R.W., Uch{\^o}a-Filho, B.F.: {Multishot Codes for Network Coding
  using Rank-Metric Codes}. In: 3rd IEEE International Workshop on Wireless
  Network Coding. pp.~1--6 (2010)

\bibitem{parvaresh2005correcting}
Parvaresh, F., Vardy, A.: {Correcting Errors beyond the Guruswami--Sudan Radius
  in Polynomial Time}. In: 46th Annual IEEE Symposium on Foundations of
  Computer Science (FOCS). pp. 285--294 (2005)

\bibitem{puchinger2020generic}
Puchinger, S., Renner, J., Rosenkilde, J.: {Generic Decoding in the Sum-Rank
  Metric}. IEEE Transactions on Information Theory  \textbf{68}(8),  5075--5097
  (2022)

\bibitem{sage}
Stein, W.A., et~al.: {S}age {M}athematics {S}oftware ({V}ersion 9.6). The Sage
  Development Team (2022), {http://www.sagemath.org}

\bibitem{wachter2013decoding}
{Wachter-Zeh}, A.: {Decoding of Block and Convolutional Codes in Rank Metric}.
  Ph.D. thesis, Ulm University and University of Rennes 1 (2013)

\end{thebibliography}

\end{document}